\definecolor{light-gray}{gray}{0.86}
\newcommand{\interp}[1]{\llbracket #1 \rrbracket}
\newcommand{\knote}[1]{\todo[inline, color=blue!20]{#1}}
\begin{document}
\title{A Type-theoretic Approach to Resolution \thanks{This work was funded by EPSRC grant EP/K031864/1.}}

\author{Peng Fu, Ekaterina Komendantskaya}
\institute{Computer Science, University of Dundee}

\maketitle

\begin{abstract}

  We propose a new type-theoretic approach to SLD-resolution and Horn-clause logic programming. 
  It views Horn formulas as types, and
  derivations for a given query as a construction of the inhabitant (a proof-term) for the type
  given by the query.
  We propose a method of program transformation that allows to transform logic programs in
  such a way that proof evidence is computed alongside SLD-derivations.
  We discuss two applications of this approach: in recently proposed productivity theory of 
  structural resolution, and in type class inference.
  


\textbf{Keywords:} Logic Programming,  Typed lambda calculus, Realizability Transformation, Reduction Systems,  Structural Resolution.

\end{abstract}
\section{Introduction}
\label{introduction}

Logic Programming (LP) is a programming
paradigm based on first-order Horn formulas. Informally, given a
logic program $\Phi$ and a query $A$, LP
provides a mechanism for automatically inferring whether or not $\Phi
\vdash A$ holds, i.e., whether or not
$\Phi$ logically entails $A$. The inference mechanism is based on the SLD-resolution,
which uses the resolution rule together with first-order unification.

\begin{example}\label{ex:conn}
Consider the following logic program  $\Phi$, consisting of Horn formulas labelled by $\kappa_1$, $\kappa_2$, $\kappa_3$, defining connectivity for a graph with three nodes:
 {\footnotesize
  \begin{center}
  $  \kappa_1 : \forall x. \forall y. \forall z. \mathrm{Connect}(x, y), \mathrm{Connect}(y, z) \Rightarrow \mathrm{Connect}(x, z)$

  $\kappa_2 : \ \Rightarrow \mathrm{Connect}(\mathrm{Node_1}, \mathrm{Node_2})$
  
  $\kappa_3 : \ \Rightarrow \mathrm{Connect}(\mathrm{Node_2}, \mathrm{Node_3})$
    \end{center}
}  
  
  \noindent  
  In the above program, $\mathrm{Connect}$ is a predicate, and  $\mathrm{Node_1}$ --  $\mathrm{Node_3}$ are constants.
 SLD-derivation for  the query $\mathrm{Connect}(x, y)$ can be represented as the following 
 reduction: 

 {\footnotesize 
  \begin{center}
    $\Phi \vdash \{\mathrm{Connect}(x, y)\} \leadsto_{\kappa_1, [x/x_1, y/z_1]}
    \{\mathrm{Connect}(x, y_1), \mathrm{Connect}(y_1, y)\} \leadsto_{\kappa_2, [\mathrm{Node_1}/x, \mathrm{Node_2}/y_1, \mathrm{Node_1}/x_1, y/z_1]}
    \{\mathrm{Connect}(\mathrm{Node_2}, y)\} \leadsto_{\kappa_3, [\mathrm{Node_3}/y, \mathrm{Node_1}/x, \mathrm{Node_2}/y_1,
      \mathrm{Node_1}/x_1, \mathrm{Node_3}/z_1]} \emptyset $
  \end{center}
  }
 \noindent The first reduction $\leadsto_{\kappa_1, [x/x_1, y/z_1]}$  unifies the query $\mathrm{Connect}(x, y)$ with the head of the rule $\kappa_1$ (which is $\mathrm{Connect}(x_1, z_1)$
 after renaming) with the substitution $[x/x_1, y/z_1]$ ($x_1$ is replaced by $x$ and $z_1$ is replaced by $y$). 
So the query is \emph{resolved} with $\kappa_1$, 
  producing the next queries: $\mathrm{Connect}(x, y_1)$, $\mathrm{Connect}(y_1, y)$. Note that
  the substitution in the subscript of $\leadsto$ is a state that will be updated alongside the derivation.

\end{example}

Viewing a program as a collection of  Horn clauses, the above derivation first assumed that  $\mathrm{Connect}(x, y)$ is false, and then deduced a contradiction (an empty goal) from the assumption.
As every SLD-derivation is essentially a proof by contradiction, traditionally the exact content of such proofs plays little role in determining entailment. However, it is desirable to have methods which capture the proof-theoretic content of SLD-derivations. For example, one may wish to reason in a proof-relevant way,
and compute not just $\Phi \vdash A$, but $\Phi \vdash p:A$, where $p$ is the proof-witness for the query $A$ and the program $\Phi$. LP and its dialects are used as part of type inference engines underlying functional~\cite{wadler1989make,jones2003qualified} and dependently typed \cite{gonthier2011make} languages. These applications require proof-relevant automated reasoning. 


In type class inference (e.g. Haskell), a type class can be seen as an atomic formula and an instance declaration -- as a Horn formula. The instance resolution process in type class inference can then be seen as
an SLD-derivation, with one additional requirement: this SLD-derivation must compute the   evidence for the type class (or construct a dictionary). 
For example, the following declaration specifies a way to construct equality class instances for datatypes List and Char:
 {\footnotesize
\begin{center}
\begin{tabular}{lll}
 $\kappa_1 : $ & & $\forall x .  \mathrm{Eq}(x) \Rightarrow \mathrm{Eq}(\mathrm{List} (x))$
\\
 $\kappa_2 : $  & &  $ \Rightarrow \mathrm{Eq}(\mathrm{Char})$
\end{tabular}
\end{center}
}
\noindent Here $\mathrm{List}$ is a function symbol, $\mathrm{Char}$ is a constant and $x$ is a variable; $\kappa_1, \kappa_2$ will be used as primitives for the evidence construction. When we make a comparison of two lists of characters, such as $(\mathrm{eq}\ ['a']\ ['b'])$, the compiler will insert the evidence $d$ of the type $\mathrm{Eq}(\mathrm{List} (\mathrm{Char}))$ in $(\mathrm{eq}\ d\ ['a']\ ['b'])$. The construction of this evidence can be viewed as resolving the query $\mathrm{Eq}(\mathrm{List} (\mathrm{Char}))$, which is witnessed by applying Horn formulas  $\kappa_1$ and $\kappa_2$. Thus, 
 $(\kappa_1\ \kappa_2)$ is the evidence we want for $d$.

In order to specify the proof-theoretic meaning of derivations, we introduce a type-theoretic approach to recover the notion of proof in LP.
 It has been noticed by Girard~\cite{Girard:1989}, that the resolution rule  $\frac{A \lor B \ \ \ \neg B \lor D}{A \lor D}$ can be expressed by means of 
the cut rule in  intuitionistic sequent calculus:
$\frac{A \Rightarrow B  \  \  \     B \Rightarrow D}{A \Rightarrow D}$.
Although the resolution rule is classically equivalent to the cut rule, 
the cut rule is better suited for performing computation 
 while preserving constructive content. In Section \ref{forms} we present
a type system reflecting this intuition: if $p_1$ is the proof of  $A \Rightarrow B$ and $p_2$ is the proof of $B \Rightarrow D$, then $\lambda x . p_2\ (p_1\ x)$ is the proof of $A \Rightarrow D$. Thus, proof can be recorded alongside with each cut rule. 

We prove that SLD-resolution is sound with respect to the type system (Section \ref{forms}). We give a formulation of SLD-resolution in the form
of a reduction rule, called \emph{LP-Unif}. The soundness result shows that, given a logic program $\Phi$ and a query $A$, if $A$ can be LP-Unif reduced to the empty goal with a substitution $\gamma$ as an answer, then a proof term can be constructed for $\gamma A$.

In Section \ref{real:trans}, we introduce a technique called \emph{realizability transformation}, that,
given a program $\Phi$, produces a program $F(\Phi)$ in which one extra argument is added to every predicate,
in order to record the proof-evidence in derivations. 
The proof evidence is computed by applying substitution to variables held by this additional argument in the course of running SLD-resolution. Let us revisit the List example. Its transformed version will look as follows:

 {\footnotesize
\begin{center}
\begin{tabular}{lll}
 $\kappa_1 : $ & & $\forall x . \forall u_1 . \mathrm{Eq}(x, u_1) \Rightarrow \mathrm{Eq}(\mathrm{List} (x),f_{\kappa_1}(u_1))$
\\
 $\kappa_2 : $  & &  $ \Rightarrow \mathrm{Eq}(\mathrm{Char}, c_{\kappa_2})$
\end{tabular}
\end{center}
}

\noindent The query $\mathrm{Eq}(\mathrm{List}(\mathrm{Char}))$ of the original program becomes $\mathrm{Eq}(\mathrm{List}(\mathrm{Char}), u)$ after the transformation, where $u$ is a variable.
The derivation reaches the empty goal and outputs 
the substitution $[f_{\kappa_1}(c_{\kappa_2})/u]$, which corresponds to the proof term $(\kappa_1 \ \kappa_2)$ for the query $\mathrm{Eq}(\mathrm{List}(\mathrm{Char}))$.

Realizability transformation  bears resemblance to Kleene's~\cite{KleeneSC:1952} method under the same name. 
We show that realizability transformation preserves
the proof-theoretic meaning of the original program and the computational behaviour of LP-Unif reductions. With the help of the transformation, we prove completeness of LP-Unif with
repect to the type system. 

Together, Sections \ref{forms} and \ref{real:trans} introduce a method of constructing proof evidence in the process of LP derivations.
Recently, a variant of resolution for Horn Clauses, called \emph{structural resolution} (\emph{S-resolution}) has been introduced~\cite{JKK15}.
S-resolution represents derivations by SLD-resolution as a combination of derivations by term-matching and by substitution. We explain this idea in detail in Section~\ref{rt:s}.
The main reason for separating out two components of SLD-resolution in such a way is to make use of structural properties
of term-matching that have already been exploited in functional programming and term-rewriting. In particular,  S-resolution
 allowed to define a theory of universal productivity for LP that resembles a similar theory in functional programming\cite{bertot2008inductive}:
 given a potentially infinite derivation by S-resolution, termination of term-matching derivations that comprise it determines \emph{productivity}
 of the derivation (or in other words, observability of finite fragments of the infinite computation).

 We conjecture that the combination of the two ideas -- the theory of productivity introduced by S-resolution and the proof-witness construction introduced in this paper
 bear promise for future development of resolution-based methods. This is why, in Section~\ref{rt:s} we give a full formal study of how these two methods can be formally combined.
 We show how S-resolution can be represented by means of \emph{LP-Struct reductions}, combining term-matching reductions and unification.
 We extend the type-theoretic semantics to S-Resolution.
 We define conditions which guarantee equivalence of S-Resolution and SLD-resolution, one of which happens to be exactly the property of productivity.
 We use the realizability transformation as a method for guaranteeing productivity of programs.   





 Finally, in Section~\ref{concl} we conclude and explain how the combination of S-Resolution and the type-theoretic approach of this paper could be used in non-terminating cases of type class inference. 
Detailed proofs for lemmas and theorems in this paper may be found in the extended version\footnote{Extended version is available at both authors' homepages.}. 

\section{A Type System for LP: Horn-Formulas as Types}
\label{forms}

We first formulate a type system to model LP. We show that LP-Unif is sound with respect to the type system. 

\begin{definition}

\

  Term $t \ ::= \ x \ | \ f(t_1,..., t_n)$

  Atomic Formula $A, B, C, D\ ::= \ P(t_1,...,t_n)$

  (Horn) Formula $F \ ::= \ [\forall \underline{x}]. A_1,..., A_n \Rightarrow A$


  Proof Term $p, e \ ::= \ \kappa \ | \ a \ | \ \lambda a . e \ | \ e \ e'$

  Axioms/LP Programs $\Phi \ ::= \cdot \ | \ \kappa :  F, \Phi$
\end{definition}

Functions of arity zero are called \textit{term constants}, $\mathrm{FV}(t)$ returns all free term variables of $t$.  
We use $\underline{A}$ to denote $A_1,..., A_n$, when the number $n$ is unimportant. If $n$ is zero for $\underline{A} \Rightarrow B$, then we write $\Rightarrow B$. Note that $B$ is an atomic formula, but $\Rightarrow B$ is a formula, we distinguish the notion of atomic formulas from (Horn) formulas. The formula $A_1,..., A_n \Rightarrow B$ can be informally read as ``the conjunction of $A_i$ implies $B$''. We write $\forall \underline{x} . F$ for quantifying over all the free term variables in $F$; $[\forall x].F$ denotes $F$ or $\forall x . F$. LP program $B \Leftarrow \underline{A}$ are represented as  $\forall \underline{x} . \underline{A} \Rightarrow B$ and query is an atomic formula. Proof terms are lambda terms, where $\kappa$ denotes a proof term constant and $a$ denotes a proof term variable. We write $A \mapsto_\sigma A'$ (resp. $A \sim_\gamma A'$ ) to mean $A$ is matchable (resp. unifiable) to $A'$ with substitution $\sigma$ (resp. $\gamma$), i.e. $\sigma A \equiv A'$ (resp. $\gamma A \equiv \gamma A'$).

The following is a new formulation of a type system intended to provide a type theoretic interpretation for LP.  
 
\begin{definition}[Horn-Formulas-as-Types System for LP]
  \label{proofsystem}

\
{\footnotesize
\begin{tabular}{lll}
\\
\infer[gen]{e: \forall \underline{x} . F}{e : F}
&

&
\infer[cut]{\lambda \underline{a} . \lambda \underline{b} . (e_2\ \underline{b})\ (e_1\ \underline{a}) : \underline{A}, \underline{B} \Rightarrow C}{e_1 : \underline{A} \Rightarrow D & e_2 : \underline{B}, D \Rightarrow C}
\\
\\
\infer[inst]{e : [\underline{t}/\underline{x}]F}{e : \forall \underline{x} . F}

&

&
\infer[axiom]{\kappa : \forall \underline{x}. F}{(\kappa : \forall \underline{x}. F) \in \Phi}    

  \end{tabular}
}  
\end{definition}
Note that the notion of type is identified with Horn formulas (atomic intuitionistic sequent), not atomic formulas. The usual sequent turnstile $\vdash$ is
internalized as intuitionistic implication $\Rightarrow$. The rule for first order quantification $\forall$  is placed \textit{outside} of the sequent. The cut rule is the only rule that produces new proof terms. In the \textit{cut} rule, $\lambda \underline{a}.t$ denotes $\lambda a_1....\lambda a_n.t $ and $t\ \underline{b}$ denotes $(...(t\ b_1)... b_n)$. The size of $\underline{a}$ is the same as  
$\underline{A}$ and the size of $\underline{b}$ is the same as $\underline{B}$, and $\underline{a}, \underline{b}$ are not free in $e_1, e_2$. 

Our formulation is given in the style of typed lambda calculus and sequent calculus, the intention for this formulation
is to model LP type-theoretically. It has been observed that the cut rule and proper axioms in intuitionistic sequent calculus can emulate LP~\cite{Girard:1989}(\S 13.4). Here we add a proof term annotation
and make use of explicit quantifiers. Our formulation uses Curry-style in the sense that 
for the \textit{gen} and \textit{inst} rule, we do not modify the structure of the proof terms. Curry-style
formulation allows us to focus on the proof terms generated by applying the \textit{cut} rule.

Below is a formulation of SLD-derivation as a reduction system \cite{nilsson1990logic}.

\begin{definition}[LP-Unif reduction]
\label{red}
Given axioms $\Phi$. We define a reduction relation on the multiset of atomic formulas: 

\noindent $\Phi \vdash \{A_1,..., A_i, ..., A_n\} \leadsto_{\kappa, \gamma \cdot \gamma'} \{\gamma A_1,..., \gamma B_1,..., \gamma B_m, ..., \gamma A_n\}$ for any substitution $\gamma'$, if there exists $\kappa : \forall \underline{x} . B_1,..., B_n \Rightarrow C \in \Phi$ such that $C \sim_{\gamma} A_i$.

\end{definition}
The second subscript in the reduction is intended as a state, it will be updated along with reductions. We assume implicit renaming of all quantified variables each time the above rule is applied. We write $\leadsto$ when we leave the underlining state implicit. We use $\leadsto^*$ to denote the reflexive and transitive closure of $\leadsto$. Notation $\leadsto_{\gamma}^*$ is used when the final state along the reduction path is $\gamma$. 
 

Given a program $\Phi$ and
a set of queries $\{B_1, \ldots, B_n\}$, LP-Unif uses only unification reduction to reduce $\{B_1, \ldots, B_n\}$: 

\begin{definition}[LP-Unif]
\
\noindent  Given a logic program $\Phi$, LP-Unif is given by an abstract reduction system $(\Phi, \leadsto)$. 
\end{definition}

\begin{lemma}
\label{sound}
  If $\Phi \vdash \{A_1,..., A_n\} \leadsto^*_{\gamma} \emptyset$, then there exist proofs $e_1 : \forall \underline{x} . \Rightarrow \gamma A_1,..., e_n : \forall \underline{x} . \Rightarrow \gamma A_n$, given axioms $\Phi$.
\end{lemma}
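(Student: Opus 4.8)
The plan is to prove this by induction on the length of the LP-Unif reduction $\Phi \vdash \{A_1,\ldots,A_n\} \leadsto^*_\gamma \emptyset$, building the proof terms $e_1,\ldots,e_n$ by repeatedly composing the axioms used in the reduction via the \emph{cut} rule. The base case is the empty reduction, which forces $n = 0$, so there is nothing to prove. For the inductive step, consider the first reduction step $\Phi \vdash \{A_1,\ldots,A_i,\ldots,A_n\} \leadsto_{\kappa,\gamma_1\cdot\gamma'} \{\gamma_1 A_1,\ldots,\gamma_1 B_1,\ldots,\gamma_1 B_m,\ldots,\gamma_1 A_n\}$, where $\kappa : \forall\underline{x}.B_1,\ldots,B_m \Rightarrow C \in \Phi$ and $C \sim_{\gamma_1} A_i$, followed by a reduction of the resulting multiset to $\emptyset$ with final state $\gamma$. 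Write $\gamma = \delta \cdot \gamma_1$ for the remainder substitution $\delta$ accumulated after the first step (so that the overall answer is the composition).

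By the induction hypothesis applied to the shorter reduction of the new multiset, we obtain proofs for each of its members under $\delta$: in particular a proof $d_j : \forall\underline{x}.\Rightarrow \delta\gamma_1 B_j$ for each $j \in \{1,\ldots,m\}$, and proofs $e_k : \forall\underline{x}.\Rightarrow \delta\gamma_1 A_k$ for each $k \neq i$. These already give the required $e_k$ for $k \neq i$, since $\delta\gamma_1 A_k = \gamma A_k$. It remains to construct $e_i : \forall\underline{x}.\Rightarrow \gamma A_i$. From the \emph{axiom} rule we have $\kappa : \forall\underline{x}.B_1,\ldots,B_m \Rightarrow C$, and by \emph{inst} (instantiating with the composed substitution) we get $\kappa : \delta\gamma_1 B_1,\ldots,\delta\gamma_1 B_m \Rightarrow \delta\gamma_1 C$. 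Since $C \sim_{\gamma_1} A_i$ means $\gamma_1 C \equiv \gamma_1 A_i$, we have $\delta\gamma_1 C \equiv \delta\gamma_1 A_i = \gamma A_i$. Now I would iterate the \emph{cut} rule $m$ times, each time discharging one premise $\delta\gamma_1 B_j$ using the already-constructed proof $d_j$ (which has no hypotheses), to obtain a proof of $\Rightarrow \gamma A_i$; then apply \emph{gen} to close under $\forall\underline{x}$. Concretely, starting from $\kappa$ and $d_1$, the cut rule yields a term $\lambda\ldots.(d_1\,\ldots)(\kappa\,\ldots)$ of type $B_2,\ldots,B_m \Rightarrow \gamma A_i$ (modulo the substitution), and repeating this with $d_2,\ldots,d_m$ eliminates all remaining hypotheses, leaving the closed proof term for $e_i$.

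The main obstacle I anticipate is bookkeeping around the substitutions: one must be careful that the composition of the intermediate states along the reduction path really does compose to $\gamma$ in the way the \emph{inst} rule requires, and that the implicit renaming of quantified variables at each reduction step does not interfere — essentially checking that $\gamma A_i$ as computed by the reduction agrees with $\delta\gamma_1 C$ up to the variable conventions, and that instantiating $\kappa$'s universal closure with the right terms is legitimate. A secondary subtlety is matching the exact shape of the \emph{cut} rule (which handles a whole block of hypotheses $\underline{A},\underline{B}$ at once) against the one-hypothesis-at-a-time elimination I described; one can either phrase the iteration using the general multi-hypothesis \emph{cut} directly with the $d_j$'s supplying the $\underline{B}$-block, or first specialize \emph{cut} to the single-premise case. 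Neither is deep, but getting the indices and the order of $\lambda$-abstractions right is where the care goes. Everything else — the structure of the induction, the use of \emph{axiom}, \emph{inst}, \emph{gen} — is routine.
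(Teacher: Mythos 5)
Your proposal is correct and follows essentially the same route as the paper: induction on the length of the reduction, applying the induction hypothesis to the tail, then \emph{inst} on $\kappa$, $m$ applications of \emph{cut} against the proofs of the body atoms, and \emph{gen} to close, with the final substitution bookkeeping handled by the fact that the accumulated state factors through the first step's unifier (the paper phrases this via idempotence, $\gamma'\gamma A_j \equiv \gamma' A_j$ with $\gamma' = \gamma''\cdot\gamma$, where you write $\gamma = \delta\cdot\gamma_1$ up front — the same observation). The only cosmetic difference is that you take the empty reduction as the base case where the paper takes length one; both work.
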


\begin{proof}
  By induction on the length of the reduction.
  
\noindent  \textit{Base Case.} Suppose the length is one, namely, $\Phi \vdash \{A\} \leadsto_{\kappa, \gamma} \emptyset$. It implies that there exists $(\kappa : \forall \underline{x} .  \Rightarrow C) \in \Phi$, such that $C \sim_\gamma A$.  So we have $\kappa :\ \Rightarrow \gamma C$ by the \textit{inst} rule. Thus $\kappa : \ \Rightarrow \gamma A$ by $\gamma C \equiv \gamma A$. Hence $\kappa : \forall \underline{x} . \Rightarrow \gamma A$ by the \textit{gen} rule.

\noindent  \textit{Step Case.} Suppose
 $\Phi \vdash \{A_1, ..., A_i,..., A_n\} \leadsto_{\kappa, \gamma} \{\gamma A_1,..., \gamma B_1,..., \gamma B_m,..., \gamma A_n\}$ $ \leadsto^*_{\gamma'} \emptyset$, where $\kappa : \forall \underline{x} . B_1,..., B_m \Rightarrow C$ and $C \sim_{\gamma} A_i$. By inductive hypothesis(IH), we know that there exist proofs $e_1 : \forall \underline{x}. \Rightarrow \gamma' \gamma A_1,..., p_1 : \forall \underline{x}. \Rightarrow \gamma' \gamma B_1,..., p_m : \forall \underline{x}. \Rightarrow \gamma' \gamma B_m,..., e_n : \forall \underline{x} . \Rightarrow \gamma' \gamma A_n$. 
We can use \textit{inst} rule to instantiate the quantifiers of $\kappa$ using $\gamma'\cdot \gamma$, so
we have $\kappa : \gamma'\gamma B_1,..., \gamma'\gamma B_m \Rightarrow \gamma'\gamma C$.
Since $\gamma'\gamma A_i \equiv \gamma' \gamma C$, we can construct a proof $e_i = \kappa \ p_1\ ...\ p_m$ with $e_i :\ \Rightarrow \gamma'\gamma A_i$, by applying the cut rule $m$ times. By \textit{gen}, we have $e_i : \forall \underline{x}. \Rightarrow \gamma'\gamma A_i$. The substitution generated by the unification is idempotent, and $\gamma'$ is accumulated from $\gamma$, i.e. $\gamma' = \gamma'' \cdot \gamma$ for some $\gamma''$,
so $\gamma' \gamma A_j \equiv \gamma'' \gamma \gamma A_j \equiv \gamma''\gamma A_j \equiv \gamma' A_j$ for any $j$. Thus we have $e_j : \forall \underline{x}. \Rightarrow \gamma' A_j$ for any $j$. 
\end{proof}


\begin{theorem}[Soundness of LP-Unif]
\label{sound:unif}
    If $\Phi \vdash \{A\} \leadsto^*_\gamma \emptyset$ , then there exists a proof $e : \forall \underline{x} .\Rightarrow \gamma A$ given axioms $\Phi$. 
\end{theorem}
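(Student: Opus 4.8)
The plan is to derive Theorem~\ref{sound:unif} as the single-query instance of Lemma~\ref{sound}. First I would observe that the hypothesis of the theorem, $\Phi \vdash \{A\} \leadsto^*_\gamma \emptyset$, is exactly the hypothesis of Lemma~\ref{sound} in the special case $n = 1$ with $A_1 = A$. Applying the lemma directly then yields a proof $e_1 : \forall \underline{x}.\Rightarrow \gamma A_1$, i.e.\ $e : \forall \underline{x}.\Rightarrow \gamma A$ with $e := e_1$, given the axioms $\Phi$. This is essentially immediate, so the ``proof'' is a one-line corollary.

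The only point that needs a sentence of care is that the reduction relation $\leadsto$ is defined on multisets of atomic formulas, so one must check that a singleton multiset $\{A\}$ is a legitimate starting configuration and that the lemma as stated genuinely covers $n=1$. Since Definition~\ref{red} places no lower bound on $n$ and Lemma~\ref{sound} is stated for an arbitrary finite list $A_1,\dots,A_n$, the case $n=1$ is included with no extra work. I would also note that the substitution $\gamma$ recorded in the subscript of $\leadsto^*_\gamma$ in the theorem statement is the same state object referred to in the lemma, so there is no mismatch in what ``$\gamma A$'' means.

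There is essentially no obstacle here: all the real work — the induction on the length of the LP-Unif reduction, the bookkeeping about idempotency of unifiers and the accumulation $\gamma' = \gamma''\cdot\gamma$, and the construction of $e_i = \kappa\, p_1\,\dots\, p_m$ via iterated applications of the \emph{cut} rule — has already been carried out in the proof of Lemma~\ref{sound}. The theorem is just the user-facing specialization to a single query, which is the case one actually cares about when running SLD-resolution on a query $A$. Hence I would simply write: this is an immediate consequence of Lemma~\ref{sound} by taking $n = 1$.
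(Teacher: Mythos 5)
Your proposal is correct and matches the paper's intent exactly: the paper states Theorem~\ref{sound:unif} immediately after Lemma~\ref{sound} with no separate argument, treating it as the $n=1$ instance of that lemma, which is precisely what you do. Nothing further is needed.
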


For example, by the soundness theorem above, the derivation in Example \ref{ex:conn} yields the proof $(\lambda b. (\kappa_1\ b)\ \kappa_3)\ \kappa_2$ for the formula $\ \Rightarrow \mathrm{Connect}(\mathrm{node}_1, \mathrm{node}_3)$.

Naturally, we would want to prove the following completeness theorem: If $e : \forall \underline{x}. \ \Rightarrow A$, then $\Phi \vdash \{A\} \leadsto^*_\gamma \emptyset$ for some $\gamma$. It is tempting
to prove this theorem by induction on the derivation of $e : \forall \underline{x}. \ \Rightarrow A$. However, it becomes quite involved. 
We will discuss a simpler way to prove this theorem at the end of the next section, where we take advantage of the realizability transformation. 

\section{Realizability Transformation}
\label{real:trans} 
We define \textit{realizability transformation} in this section. Realizability \cite{KleeneSC:1952}(\S 82) is a technique that uses a number representing the proof of a number-theoretic formula. The transformation described here is similar in the sense that we use a first order term to represent the proof of a formula. 
More specifically, we use a first order term as an extra argument for a formula to represent a proof of that formula. Before we define the transformation, we first state several basic results about
the type system in Definition \ref{proofsystem}.

\begin{theorem}[Strong Normalization]
\label{real:sn}
Let beta-reduction on proof terms be the congruence closure of the following relation:
 $(\lambda a . p) p' \to_\beta [p'/a]p$. If $e : F$, then $e$ is strongly normalizable with respect to beta-reduction on proof terms.
\end{theorem}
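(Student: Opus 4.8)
The plan is to reduce this to strong normalization of the simply-typed $\lambda$-calculus, since the proof system of Definition~\ref{proofsystem} is essentially a Curry-style presentation whose proof terms are just ordinary $\lambda$-terms built from constants $\kappa$, variables $a$, abstraction and application. First I would observe that the only typing rules producing new proof-term structure are \emph{cut} (which introduces $\lambda$'s and applications) and \emph{axiom} (which introduces constants), while \emph{gen} and \emph{inst} leave the proof term untouched. So the shape of a well-typed proof term is unconstrained by the first-order quantifier machinery, and the question of $\beta$-strong-normalization depends only on a ``propositional skeleton'' of the types. Concretely, I would define a translation $|\cdot|$ from Horn formulas to simple types over a single base type $o$: set $|P(t_1,\dots,t_n)| = o$ for every atomic formula, $|\underline{A} \Rightarrow A| = |A_1| \to \cdots \to |A_n| \to |A|$, and $|\forall \underline{x}.F| = |F|$ (quantifiers are erased).

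The key step is then a lemma: if $e : F$ is derivable in the system of Definition~\ref{proofsystem} under axioms $\Phi$, then in the simply-typed $\lambda$-calculus one has $\Gamma_\Phi \vdash e : |F|$, where $\Gamma_\Phi$ assigns $|F_\kappa|$ to each constant $\kappa : F_\kappa$ in $\Phi$, and the free proof variables are typed compatibly. This is proved by a routine induction on the typing derivation: \emph{gen} and \emph{inst} become trivial since $|\forall \underline{x}.F| = |[\underline{t}/\underline{x}]F| = |F|$ (substitution of first-order terms does not change the erased type — the predicate symbols are unaffected); \emph{axiom} reads off the type from $\Gamma_\Phi$; and \emph{cut} translates to the derived simply-typed judgement for $\lambda \underline{a}.\lambda \underline{b}.(e_2\ \underline{b})\ (e_1\ \underline{a})$, which typechecks because $|D| = |D|$ threads through correctly and the arities of $\underline{a},\underline{b}$ match those of $\underline{A},\underline{B}$ by the side condition on the rule. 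Once this translation lemma is in place, the theorem follows immediately: $\beta$-reduction on proof terms is exactly $\beta$-reduction of the underlying $\lambda$-terms, the translation is the identity on proof terms, so any infinite $\beta$-reduction sequence from $e$ would be an infinite $\beta$-reduction sequence from a simply-typed term, contradicting strong normalization of the simply-typed $\lambda$-calculus.

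An alternative, more self-contained route is to prove strong normalization directly by the reducibility (logical relations) method, defining $\mathcal{R}_{|A|}$ by induction on the erased type: $\mathcal{R}_o$ is the set of strongly normalizing terms, and $\mathcal{R}_{\sigma \to \tau} = \{ e : \forall e' \in \mathcal{R}_\sigma,\ e\ e' \in \mathcal{R}_\tau\}$; then show the usual adequacy lemma that every well-typed term lies in the reducibility set of its (erased) type, handling \emph{gen}/\emph{inst} trivially and \emph{cut} by unfolding the definition. I would present the translation argument as the main proof since it is shorter and isolates exactly what is going on, namely that first-order quantification is computationally inert.

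The main obstacle — though it is more of a bookkeeping point than a genuine difficulty — is making sure that the Curry-style treatment of \emph{gen} and \emph{inst} does not secretly allow a proof term to be typed in ``incompatible'' ways that break the translation: one must check that the single erased type $|F|$ assigned to $e$ is well-defined, i.e. that erasure is stable under the first-order instantiations and generalizations that \emph{inst} and \emph{gen} perform. Since erasure collapses all atomic formulas to $o$ and discards all first-order terms and quantifiers, this is automatic, but it is the one place where one should pause and verify that nothing in the quantifier rules can change the propositional skeleton. Everything else is a routine induction.
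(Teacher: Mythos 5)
Your proposal is correct, but your \emph{primary} route is genuinely different from the paper's: the paper proves the theorem by exactly the method you relegate to an ``alternative,'' namely a direct reducibility argument. It defines $\mathsf{RED}_{A_1,\dots,A_n \Rightarrow B}$ as the set of proof terms $p$ such that $p\,p_1\cdots p_n$ is strongly normalizing for all strongly normalizing $p_1,\dots,p_n$, sets $\mathsf{RED}_{\forall \underline{x}.F} = \mathsf{RED}_{F}$, notes that $\mathsf{RED}$ is invariant under first-order substitution (this is the paper's counterpart of your ``bookkeeping point'' about \emph{gen}/\emph{inst} not disturbing the propositional skeleton), and then shows by induction on typing derivations that every well-typed term is reducible, handling \emph{cut} by a nested induction on reduction lengths. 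Note that the paper's reducibility sets are even more degenerate than the standard Tait--Girard ones: because the premises of a Horn formula are atomic, every argument position is at ``base type,'' so the definition quantifies over strongly normalizing arguments rather than reducible ones and in fact depends only on the arity $n$ --- this is the precise sense in which the paper calls its proof a simplification of the usual method. Your main argument --- erase every atomic formula to a single base type $o$, every Horn formula to $o \to \cdots \to o \to o$, every quantifier to nothing, check that each rule of Definition~\ref{proofsystem} translates to a valid simply-typed derivation over the identical proof term, and invoke strong normalization of the simply typed lambda calculus --- is also sound: \emph{cut} typechecks since the cut formula $D$ and all of $\underline{a},\underline{b}$ erase to $o$, and \emph{gen}/\emph{inst} are invisible under erasure. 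It buys a shorter, more modular proof at the cost of importing SN of the simply typed calculus as a black box, whereas the paper's argument is self-contained; both hinge on the observation you correctly isolate, that first-order quantification is computationally inert.
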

 
The proof of strong normalization (SN) is an adaptation of Tait-Girard's reducibility proof. Since the first order quantification does not impact the proof term, the proof is very similar to the SN proof of simply typed lambda calculus. 
 
\begin{lemma}
\label{fst:lambda}
   If $e : [\forall \underline{x}.] \underline{A} \Rightarrow B$ given axioms $\Phi$, then either $e$ is a proof term constant or it is normalizable to the form $\lambda \underline{a}. n$, where $n$ is first order normal proof term. 
\end{lemma}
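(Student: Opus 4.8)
The plan is to obtain the normal form from strong normalisation and then read its shape off the typing. First I would apply Theorem~\ref{real:sn} to $e : [\forall \underline{x}.]\,\underline{A} \Rightarrow B$: it has a $\beta$-normal form $e^{*}$, and by type preservation under $\beta$-reduction — subject reduction for the system of Definition~\ref{proofsystem}, which is routine and is given in full in the extended version — we still have $e^{*} : [\forall \underline{x}.]\,\underline{A} \Rightarrow B$. A second preliminary observation is that any proof term derivable from the axioms $\Phi$ alone is closed: the only rule introducing proof-term variables is \emph{cut}, and there the fresh $\underline{a},\underline{b}$ are bound immediately; \emph{gen}, \emph{inst}, \emph{axiom} add no free variables, and $\beta$-reduction creates none. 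So it suffices to characterise closed, $\beta$-normal, well-typed proof terms of Horn-formula type, and since \emph{gen} and \emph{inst} leave the term untouched I may drop the outer quantifier and assume the type is $\underline{A} \Rightarrow B$.

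Next I would use the standard shape of $\beta$-normal $\lambda$-terms: $e^{*} = \lambda a_{1}.\cdots\lambda a_{k}.\,(h\ p_{1}\ \cdots\ p_{j})$ with head $h$ a proof-term constant or variable and each $p_{i}$ $\beta$-normal. Typing $e^{*}$ against $\underline{A} \Rightarrow B$ forces $k \le |\underline{A}|$, with each $a_{i}$ of the atomic type $A_{i}$, so that the body $h\ p_{1}\cdots p_{j}$ has the residual Horn type $A_{k+1},\dots,A_{|\underline{A}|} \Rightarrow B$. If $k = 0$ and $j = 0$ the term is just its head, which, being closed, must be a constant $\kappa$ — this is the first alternative of the lemma, covering the $\eta$-short case in which $\underline{A}$ need not be empty. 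Otherwise $e^{*} = \lambda\underline{a}.\,n$ with $n = h\ p_{1}\cdots p_{j}$, and it remains to show that $n$ is first order normal, i.e.\ lambda-free.

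For that I would prove, by induction on term size, the auxiliary fact: a $\beta$-normal well-typed proof term whose free variables all have atomic type and which itself has atomic type contains no $\lambda$. Such a term cannot begin with a $\lambda$ (an atomic formula is not an implication), so it is an application spine $h'\ q_{1}\cdots q_{\ell}$; its head has a Horn type whose antecedents are atomic, so every argument $q_{r}$ is typed against an atomic formula and, being strictly smaller, is lambda-free by the induction hypothesis, whence the whole spine is lambda-free. Now apply this to the body $n = h\ p_{1}\cdots p_{j}$: each $p_{i}$ is an argument supplied to $h$, so it is typed against an antecedent of $h$'s Horn type, which is atomic (and if $j\ge 1$ the head $h$ must be a constant, since a variable carries an atomic, non-applicable type); hence each $p_{i}$ is lambda-free and therefore so is $n$. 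This yields $e^{*} = \lambda\underline{a}.\,n$ with $n$ first order normal, the second alternative.

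The part I expect to be delicate is not the combinatorics of normal forms — that is the usual ``long normal form minus $\eta$'' bookkeeping of simply typed $\lambda$-calculus once atomic formulas are read as base types — but pinning down precisely how application and bare variables are typed in this Curry-style sequent presentation, since Definition~\ref{proofsystem} has neither a standalone application rule nor a variable rule: every application arises inside the \emph{cut} term $\lambda\underline{a}.\lambda\underline{b}.(e_{2}\ \underline{b})\,(e_{1}\ \underline{a})$. One therefore has to fix the reading of ``$e_{1}\ \underline{a}$ typed at a Horn formula'' and establish subject reduction before the argument above can even be phrased. A minor point to keep straight is the degenerate readings of the conclusion — when $\underline{A}$ is empty, ``$\lambda\underline{a}.n$'' is simply $n$; when a non-constant $e$ normalises to a constant, that constant falls under the first alternative, or, read as a zero-ary spine, under the second — but none of this affects the substance.
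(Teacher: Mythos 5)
Your overall route --- normalize via Theorem~\ref{real:sn}, transfer the type to the normal form by subject reduction, then read the lemma off the shape of a typed $\beta$-normal term --- is genuinely different from the paper's, and you have correctly located where it creaks; but the part you defer is not bookkeeping, it is the proof. Every inversion step you use (each bound $a_i$ ``has the atomic type $A_i$'', every argument $q_r$ in a spine ``is typed against an atomic formula'', a variable head ``carries an atomic, non-applicable type'') presupposes a natural-deduction presentation with typing contexts, a variable rule and an application rule. Definition~\ref{proofsystem} has none of these: the judgement $e:F$ carries no context, proof-term variables never appear on the left of a colon, and applications are only ever typed as part of the single monolithic cut term $\lambda \underline{a}.\lambda \underline{b}.(e_2\ \underline{b})\,(e_1\ \underline{a})$. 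So your auxiliary induction (``a $\beta$-normal well-typed term of atomic type whose free variables have atomic type is lambda-free'') is not even a statement expressible in this system, and subject reduction --- which you also need --- itself requires the same missing inversion principles. Supplying them amounts to introducing a second, declarative type system and proving it equivalent to the sequent-style one; that is a real lemma, not routine, and nothing in the paper provides it.

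The paper's own proof avoids all of this by inducting on the typing derivation rather than on the normal form. Since \emph{cut} is the only rule that builds a proof term, the induction hypothesis directly gives the shape of $e_1$ and $e_2$ (a constant $\kappa$, or $\lambda \underline{a}.n$ with $n$ first order normal), and one then computes, purely syntactically, what $\lambda \underline{a}.\lambda \underline{b}.(e_2\ \underline{b})\,(e_1\ \underline{a})$ $\beta$-reduces to, using only closure of first-order terms under substitution (Lemma~\ref{fo:sub}); \emph{gen} and \emph{inst} do not touch the term, and axioms are constants. No subject reduction and no generation lemma are needed. If you want to keep your strategy, you must first actually prove inversion and subject reduction for Definition~\ref{proofsystem} (or an equivalence with a context-based presentation); otherwise the derivation-directed induction is the shorter and safer path.
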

 
\begin{theorem}
  \label{fst}
    If $e : [\forall \underline{x}.] \Rightarrow B$, then $e$ is normalizable to a first order proof term.
\end{theorem}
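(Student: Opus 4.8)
## Proof Proposal for Theorem~\ref{fst}

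The plan is to derive this theorem as a direct consequence of Lemma~\ref{fst:lambda}, specialized to the case where the list $\underline{A}$ of antecedents is empty. First I would observe that the statement $e : [\forall \underline{x}.] \Rightarrow B$ is exactly the instance of the hypothesis of Lemma~\ref{fst:lambda} in which $n = 0$, i.e.\ $\underline{A}$ is the empty sequence. Applying that lemma, $e$ is either a proof term constant $\kappa$ or it normalizes to $\lambda \underline{a}. n$ where $\underline{a}$ has the same length as $\underline{A}$; but since $\underline{A}$ is empty, $\underline{a}$ is empty, so the normal form is just $n$ itself, a first order normal proof term. In the remaining case $e$ is a proof term constant, which is trivially already a first order proof term (constants $\kappa$ count as first order normal proof terms, having no $\lambda$-abstractions). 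Either way, $e$ is normalizable to a first order proof term, which is the claim.

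The only thing that needs care is making precise what ``first order proof term'' means and checking both branches of Lemma~\ref{fst:lambda} land inside that class. A first order normal proof term here is intended to be a term of the shape $\kappa\ p_1\ \cdots\ p_k$ (an applicative spine headed by an axiom constant, with first order normal arguments) containing no $\lambda$-abstraction and no redex; the key point, guaranteed by Strong Normalization (Theorem~\ref{real:sn}) together with the typing discipline, is that a normal proof of a formula with empty antecedent list cannot have a $\lambda$ at its head, since that would force the type to be a nonempty implication. So I would spell out: take the $\beta$-normal form $e'$ of $e$, which exists by Theorem~\ref{real:sn} and has the same type $[\forall \underline{x}.]\Rightarrow B$ by subject reduction (an implicit, routine property of the system); then invoke Lemma~\ref{fst:lambda} on $e'$; the $\lambda\underline{a}.n$ alternative degenerates to $n$ because $\underline{a}$ is empty; and $n$, being first order normal, is a first order proof term.

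I do not anticipate a genuine obstacle here — the theorem is essentially a corollary. The one subtlety worth flagging is purely bookkeeping: one must be sure that ``normalizable to the form $\lambda\underline{a}.n$'' in Lemma~\ref{fst:lambda} is read with $\underline{a}$ matching the length of $\underline{A}$, so that the empty-antecedent case really does collapse to a bare first order term rather than, say, a vacuous abstraction $\lambda.\,n$. Given the conventions established after Definition~\ref{proofsystem} (where $\lambda\underline{a}.t$ with empty $\underline{a}$ is just $t$), this is immediate. So the proof is: normalize $e$ using Theorem~\ref{real:sn}, apply Lemma~\ref{fst:lambda} with $\underline{A}$ empty, and read off that the result is first order.
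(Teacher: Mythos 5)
Your proof matches the paper's: the paper derives Theorem~\ref{fst} exactly by combining Lemma~\ref{fst:lambda} (specialized to an empty antecedent list) with strong normalization and subject reduction, which is precisely the argument you give, just spelled out in more detail. The bookkeeping points you flag (the constant case and the collapse of $\lambda\underline{a}.n$ to $n$ when $\underline{a}$ is empty) are handled correctly.
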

 
Lemma \ref{fst:lambda} and Theorem \ref{fst} show that  we can use first order terms to represent normalized proof terms; and thus pave the way to realizability transformation.
 
\begin{definition}[Representing First Order Proof Terms]
\label{fst:rep}
  Let $\phi$ be a mapping from proof term variables to first order terms. We define 
a representation function $\interp{\cdot}_\phi$ from first order normal proof terms to first order terms. 
  
\noindent -- $\interp{a}_\phi = \phi(a)$. 

  \noindent -- $\interp{\kappa \ p_1 ...p_n}_\phi = f_{\kappa}(\interp{p_1}_\phi,..., \interp{p_n}_\phi)$, where $f_\kappa$ is a function symbol.

\end{definition}
 
\begin{definition}
  Let $A \equiv P(t_1,..., t_n)$ be an atomic formula, we write
  $A[t']$, where $(\bigcup_i \mathrm{FV}(t_i)) \cap \mathrm{FV}(t') = \emptyset$, to 
  abbreviate a new atomic formula $P(t_1,..., t_n, t')$.
\end{definition}
 
\begin{definition}[Realizability Transformation]
\label{real}
  We define a transformation $F$ on a formula and its normalized proof term: 
  \begin{itemize}
  \item $F(\kappa : \forall \underline{x} . A_1, ..., A_m \Rightarrow B) = \kappa : \forall \underline{x} . \forall \underline{y}. A_1[y_1], ..., A_m[y_m] \Rightarrow B[f_\kappa(y_1,...,y_m)]$, where $y_1,..., y_m$ are all fresh and distinct.
  \item $F(\lambda \underline{a} . n : [\forall \underline{x}] . A_1, ..., A_m \Rightarrow B) = \lambda \underline{a} . n : [\forall \underline{x}.\forall \underline{y}]. A_1[y_1], ..., A_m[y_m] \Rightarrow $ 

\noindent $B[\interp{n}_{[\underline{y}/\underline{a}]}]$, where $y_1,..., y_m$ are all fresh and distinct.
  \end{itemize}
     
\end{definition}

The realizability transformation systematically associates a proof to each predicate,
so that the proof can be recorded alongside with reductions. 

\begin{example}
 \label{ex:conn:real0}
The following logic program is the result of applying realizability transformation on
the program in Example \ref{ex:conn}.
  {\footnotesize
  \begin{center}
  $  \kappa_1 : \forall x . \forall y . \forall u_1. \forall u_2 . \mathrm{Connect}(x, y, u_1), \mathrm{Connect}(y, z, u_2) \Rightarrow \mathrm{Connect}(x, z, f_{\kappa_1}(u_1, u_2))$

  $\kappa_2 : \ \Rightarrow \mathrm{Connect}(\mathrm{node_1}, \mathrm{node_2}, c_{\kappa_2})$
  
  $\kappa_3 : \ \Rightarrow \mathrm{Connect}(\mathrm{node_2}, \mathrm{node_3}, c_{\kappa_3})$
    \end{center}
 }

\noindent Before the realizability transformation, we have the following judgement:
{\footnotesize
\begin{center}
  $\lambda b. (\kappa_1\ b)\ \kappa_2 :
  \mathrm{Connect}(\mathrm{node_2}, z) \Rightarrow
  \mathrm{Connect}(\mathrm{node_1}, z)$
\end{center}
}
\noindent We can apply the transformation, we get: 
{\footnotesize
\begin{center}
  $\lambda b. (\kappa_1\ b)\ \kappa_2 :
  \mathrm{Connect}(\mathrm{node_2}, z, u_1) \Rightarrow
  \mathrm{Connect}(\mathrm{node_1}, z, \interp{(\kappa_1\ b)\ \kappa_2}_{[u_1/b]})$
\end{center}
}
\noindent which is the same as
{\footnotesize
\begin{center}
  $\lambda b. (\kappa_1\ b)\ \kappa_2 :
  \mathrm{Connect}(\mathrm{node_2}, z, u_1) \Rightarrow
  \mathrm{Connect}(\mathrm{node_1}, z, f_{\kappa_1}( u_1, c_{\kappa_2}))$
\end{center}
}

\noindent Observe that the transformed formula:

\noindent $\mathrm{Connect}(\mathrm{node_2}, z, u_1) \Rightarrow \mathrm{Connect}(\mathrm{node_1}, z, f_{\kappa_1}( u_1, c_{\kappa_2}))$ is provable by $\lambda b. (\kappa_1\ b)\ \kappa_2$ using the transformed program.
\end{example}

Let $F(\Phi)$ mean applying the realizability transformation to every axiom in $\Phi$. We
write $(F(\Phi), \leadsto)$, to
mean given axioms $F(\Phi)$, use LP-Unif to reduce a given query. Note that for query $A$ in $(\Phi, \leadsto)$, it becomes query $A[t]$ for some $t$ such that $\mathrm{FV}(A) \cap \mathrm{FV}(t) = \emptyset$ in $(F(\Phi), \leadsto)$.

 The following theorem shows that realizability transformation does not change the proof-theoretic meaning of a program. This is important because it means we can apply different resolution strategies to resolve the query on the transformed program without worrying about the change of meaning. Later we will see that the behavior of LP-Struct
is different for the original program and the transformed program.

\begin{theorem}\label{th6}
\label{realI}
Given axioms $\Phi$, if $e: [\forall \underline{x}] . \underline{A}\Rightarrow B$ holds with $e$ in normal form, then $F(e : [\forall \underline{x}] . \underline{A}\Rightarrow B)$ holds for axioms $F(\Phi)$.
\end{theorem}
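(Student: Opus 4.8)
I would prove the theorem by induction on the derivation of $e : [\forall\underline{x}].\underline{A}\Rightarrow B$, actually establishing a slightly stronger statement that also covers the (possibly non-normal) proof terms occurring in the premises of a derivation: for every judgment $e : [\forall\underline{x}].\underline{A}\Rightarrow B$ derivable over $\Phi$, writing $e'$ for a $\beta$-normal form of $e$ (which exists and is either a proof-term constant or of the shape $\lambda\underline{a}.n$, by Theorem~\ref{real:sn} and Lemma~\ref{fst:lambda}), the judgment $F(e' : [\forall\underline{x}].\underline{A}\Rightarrow B)$ is derivable over $F(\Phi)$, with $F$ applied as in Definition~\ref{real}. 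Theorem~\ref{realI} is then the instance in which $e$ is already normal. Two routine facts are used throughout: (i) a substitution lemma for the representation function, $\interp{n[n'/a]}_\phi = \interp{n}_{\phi[a\mapsto\interp{n'}_\phi]}$, together with stability of $\interp{\cdot}$ under renaming of free proof variables; and (ii) that $F$ commutes with first-order substitution $[\underline{t}/\underline{x}]$ on formulas, since the symbols $f_\kappa$ and the witness variables $F$ introduces do not occur in $\underline{x}$, and $\interp{n}_{[\underline{y}/\underline{a}]}$ never mentions term variables of the original program.

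The base case is the \textit{axiom} rule, where $F(\kappa : \forall\underline{x}.\underline{A}\Rightarrow B)$ is literally the transformed axiom, a member of $F(\Phi)$, hence derivable by \textit{axiom}. The \textit{gen} and \textit{inst} cases leave the proof term, hence its normal form, unchanged: \textit{gen} is mirrored by \textit{gen} over $F(\Phi)$, and \textit{inst} on $\underline{t}$ by \textit{inst} on the same $\underline{t}$ over $F(\Phi)$, with fact (ii) guaranteeing that the resulting formula is the one prescribed by $F$.

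The substantive case is \textit{cut}. Let $e_1 : \underline{A}\Rightarrow D$ and $e_2 : \underline{B}, D\Rightarrow C$, with $\beta$-normal forms $\lambda\underline{a}.n_1$ and $\lambda\underline{b}\,d.n_2$. Carrying out the two rounds of $\beta$-reduction inside $\lambda\underline{a}\underline{b}.(e_2\,\underline{b})(e_1\,\underline{a})$ shows that the normal form of $e$ is $\lambda\underline{a}\underline{b}.\,n_2[n_1/d]$, which is again first-order normal because $d$ occurs in $n_2$ only as an argument of some head $\kappa$ or as $n_2$ itself, and $n_1$ is first-order normal. By the induction hypothesis, $F(\Phi)$ derives $\lambda\underline{a}.n_1 : \underline{A}[\underline{y}]\Rightarrow D[t_1]$ with $t_1 := \interp{n_1}_{[\underline{y}/\underline{a}]}$, and $\lambda\underline{b}\,d.n_2 : \underline{B}[\underline{y'}], D[y_D]\Rightarrow C[\interp{n_2}_{[\underline{y'}/\underline{b},\,y_D/d]}]$, where $y_D$ is the fresh witness variable that $F$ attaches to the cut formula $D$. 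Using \textit{inst} to set $y_D := t_1$ in the second judgment and then applying the \textit{cut} rule of the transformed system (the $D$-components now coincide on $D[t_1]$) produces a derivation over $F(\Phi)$ of a judgment whose formula is $\underline{A}[\underline{y}], \underline{B}[\underline{y'}]\Rightarrow C[\interp{n_2}_{[\underline{y'}/\underline{b}]}[t_1/y_D]]$; by fact (i) the witness term is exactly $\interp{n_2[n_1/d]}_{[\underline{y}/\underline{a},\,\underline{y'}/\underline{b}]}$, i.e. the term demanded by $F$ on the conclusion. A closing renaming aligns the witness variables with those chosen by $F(e:\underline{A},\underline{B}\Rightarrow C)$.

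I expect the main obstacle to be the interaction between $\beta$-normalization of proof terms and the representation function in the \textit{cut} case: the \textit{cut} rule of $F(\Phi)$ yields the proof term $\lambda\underline{a}\underline{b}.(e_2'\,\underline{b})(e_1'\,\underline{a})$, not its normal form $\lambda\underline{a}\underline{b}.\,n_2[n_1/d]$, so in order to conclude the statement in exactly the form of Definition~\ref{real} one needs preservation of typing under $\beta$-reduction for the system of Definition~\ref{proofsystem} — this is the auxiliary lemma I would have to prove (or cite) to close that gap. The remaining work is the bookkeeping needed to make fact (i) and the normal-form computations precise, in particular a careful treatment of the renaming of bound proof variables so that the $[\underline{y}/\underline{a}]$-style substitutions compose correctly along the derivation.
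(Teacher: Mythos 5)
Your proposal is correct and follows essentially the same route as the paper's own proof: induction on the derivation, with the \textit{cut} case handled by instantiating the fresh witness variable attached to the cut formula $D$ with $\interp{n_1}_{[\underline{y}/\underline{a}]}$ and then invoking the substitution identity $\interp{[n_1/d]n_2}_{[\underline{y}/\underline{a},\underline{z}/\underline{b}]} = \interp{n_2}_{[\underline{z}/\underline{b},\,\interp{n_1}_{[\underline{y}/\underline{a}]}/d]}$, exactly as the paper does. The subject-reduction step you flag as the main obstacle is indeed present but left implicit in the paper (``by the cut rule and beta reductions''), so your explicit treatment of it is, if anything, more careful.
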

 
  The other direction for the theorem above is not true if we ignore the transformation $F$, namely, if $e : \forall \underline{x} . \Rightarrow A[t] $ for axioms $\Phi$, it may not be the case
that $e: \forall \underline{x} . \Rightarrow A$, since the axioms $\Phi$ may not be set up in a way
such that $t$ is a representation of proof $e$. The following theorem shows that the extra argument is used to record the term representation of the corresponding proof.
 
\begin{theorem}\label{th7}
\label{record}
 Suppose $F(\Phi) \vdash \{A[y]\} \leadsto^*_{\gamma} \emptyset$. We have $p : \forall \underline{x} . \Rightarrow \gamma A[\gamma y]$ for $F(\Phi)$, where $p$ is in normal form and $\interp{p}_{\emptyset} = \gamma y$. 
\end{theorem}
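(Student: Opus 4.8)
The plan is to induct on the length of the LP-Unif reduction $F(\Phi) \vdash \{A[y]\} \leadsto^*_\gamma \emptyset$, strengthening the statement so that it applies to multisets of goals rather than a single goal. Concretely, I would prove: if $F(\Phi) \vdash \{A_1[y_1],\dots,A_n[y_n]\} \leadsto^*_\gamma \emptyset$ where the $y_j$ are distinct fresh variables not occurring in the $A_j$, then for each $j$ there is a normal proof term $p_j$ with $p_j : \forall \underline{x}.\ \Rightarrow \gamma A_j[\gamma y_j]$ for $F(\Phi)$ and $\interp{p_j}_\emptyset = \gamma y_j$. The single-goal statement is the case $n=1$. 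The argument runs parallel to the proof of Lemma~\ref{sound}, but now I must additionally track that the term recorded in the extra argument slot is exactly the representation $\interp{p_j}_\emptyset$ of the proof just constructed.

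For the base case, a one-step reduction $F(\Phi) \vdash \{A[y]\} \leadsto_{\kappa,\gamma} \emptyset$ must use a transformed fact $\kappa : \forall \underline{x}.\forall\underline{y}.\ \Rightarrow B[f_\kappa()]$ (an $m=0$ clause, so $f_\kappa$ is the constant $c_\kappa$), with $B[c_\kappa] \sim_\gamma A[y]$. Since $y$ is fresh and distinct from $\mathrm{FV}(A)$, unification forces $\gamma y \equiv c_\kappa$ while $\gamma$ restricted to $\mathrm{FV}(A)$ unifies $B$ with $A$. Then $\kappa : \forall\underline{x}.\ \Rightarrow \gamma A[\gamma y]$ by \textit{inst} and \textit{gen}, and $\interp{\kappa}_\emptyset = c_\kappa = \gamma y$ by Definition~\ref{fst:rep}, as required. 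For the step case, suppose the first reduction resolves $A_i[y_i]$ against $\kappa : \forall\underline{x}.\forall\underline{y}.\ A_1'[z_1],\dots,A_m'[z_m] \Rightarrow B'[f_\kappa(z_1,\dots,z_m)]$ with $B'[f_\kappa(\underline{z})] \sim_\gamma A_i[y_i]$, producing the goal multiset $\{\gamma A_1[\gamma y_1],\dots,\gamma A_1'[\gamma z_1],\dots,\gamma A_m'[\gamma z_m],\dots\}$ which then reduces to $\emptyset$ with accumulated substitution $\gamma''$ (so $\gamma'' = \gamma''' \cdot \gamma$ by idempotence, exactly as in Lemma~\ref{sound}). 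Apply the induction hypothesis: I get normal proofs $e_k$ for the unchanged goals and $q_1,\dots,q_m$ for the new subgoals, with $\interp{q_l}_\emptyset = \gamma'' z_l$ and $q_l : \forall\underline{x}.\ \Rightarrow \gamma'' A_l'[\gamma'' z_l]$. Instantiate $\kappa$ by $\gamma''$ to get $\kappa : \gamma'' A_1'[\gamma''z_1],\dots,\gamma'' A_m'[\gamma''z_m] \Rightarrow \gamma'' B'[f_\kappa(\gamma''z_1,\dots,\gamma''z_m)]$, then apply the \textit{cut} rule $m$ times to form $e_i := \kappa\, q_1\, \cdots\, q_m$, whose type is $\Rightarrow \gamma'' B'[f_\kappa(\gamma''z_1,\dots,\gamma''z_m)]$. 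Because $\gamma'' B'[f_\kappa(\underline{z})] \equiv \gamma'' A_i[\gamma'' y_i]$ (unification was closed under $\gamma''$), this is $\Rightarrow \gamma'' A_i[\gamma'' y_i]$. Finally, the crucial bookkeeping step: by Definition~\ref{fst:rep}, $\interp{e_i}_\emptyset = \interp{\kappa\, q_1 \cdots q_m}_\emptyset = f_\kappa(\interp{q_1}_\emptyset,\dots,\interp{q_m}_\emptyset) = f_\kappa(\gamma'' z_1,\dots,\gamma'' z_m)$, which is exactly the term sitting in the extra argument slot of the constructed formula, and by unification $\gamma'' y_i \equiv f_\kappa(\gamma''z_1,\dots,\gamma''z_m)$; hence $\interp{e_i}_\emptyset = \gamma'' y_i$. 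The \textit{gen} rule and the same accumulation identity $\gamma'' A_j \equiv \gamma A_j$ used in Lemma~\ref{sound} transfer the conclusion from $\gamma''$ back to the final substitution for the untouched goals.

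The main obstacle I anticipate is the interaction between renaming of quantified variables (the implicit fresh-renaming in Definition~\ref{red}) and the freshness hypotheses on the extra-argument variables $\underline{y}$ and $\underline{z}$: I must be careful that the proof-recording variables introduced by the transformation $F$ in distinct clause instances stay distinct and disjoint from the goal variables, so that unification genuinely pins $\gamma y_i$ down to the $f_\kappa(\dots)$ term rather than merely constraining it. This is exactly where the side condition $\mathrm{FV}(A_i) \cap \mathrm{FV}(y_i) = \emptyset$ in the definition of $A[t']$ and the "all fresh and distinct" clause of Definition~\ref{real} do the work, and the strengthened induction hypothesis (carrying the invariant that each goal's extra slot is a fresh variable at the moment it is introduced) is what makes the bookkeeping go through. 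A secondary, purely technical point is verifying that $\interp{\cdot}_\phi$ commutes with substitution in the way needed, i.e. $\interp{q_l}_\emptyset = \gamma'' z_l$ composed with the outer substitution behaves coherently; this follows by a straightforward sub-induction on the structure of normal proof terms using Definition~\ref{fst:rep}, and I would state it as a small auxiliary lemma rather than inline it.
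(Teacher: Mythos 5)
Your proposal is correct and follows essentially the same route as the paper: the paper also strengthens the statement to multisets of goals (its Lemma~\ref{realII}), inducts on the length of the reduction, pins $\gamma y_i$ to $f_\kappa(\underline{z})$ via the unification of the extra argument, builds $e_i = \kappa\, p_1 \cdots p_m$ by \textit{inst} and $m$ applications of \textit{cut}, and closes the bookkeeping with $\interp{\kappa\,p_1\cdots p_m}_\emptyset = f_\kappa(\interp{p_1}_\emptyset,\dots,\interp{p_m}_\emptyset) = \gamma y_i$. No substantive differences.
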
 

Now we are able to show that realizability transformation will not change the unification reduction behaviour.
 
\begin{lemma}
\label{sc:unif}
  $\Phi \vdash \{A_1,..., A_n\} \leadsto^* \emptyset$ iff $F(\Phi) \vdash \{A_1[y_1],..., A_n[y_n]\} \leadsto^* \emptyset$. 
\end{lemma}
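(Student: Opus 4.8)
The plan is to prove both directions by induction on the length of the LP-Unif reduction, exploiting the fact that the realizability transformation $F$ only appends one extra argument position to each predicate and that this extra argument never participates in constraining which clause heads are unifiable. Concretely, I would first establish a \emph{single-step correspondence lemma}: whenever $\Phi \vdash \{A_1,\dots,A_i,\dots,A_n\} \leadsto_{\kappa,\gamma} \{\gamma A_1,\dots,\gamma B_1,\dots,\gamma B_m,\dots,\gamma A_n\}$ using clause $\kappa : \forall\underline{x}.\,B_1,\dots,B_m \Rightarrow C$ with $C \sim_\gamma A_i$, then for the transformed clause $\kappa : \forall\underline{x}.\forall\underline{y}.\,B_1[y_1],\dots,B_m[y_m] \Rightarrow C[f_\kappa(y_1,\dots,y_m)]$ and for a fresh variable $z_i$ attached to $A_i$, the atom $C[f_\kappa(\underline{y})]$ is unifiable with $A_i[z_i]$: since $z_i$ is fresh and $f_\kappa(\underline{y})$ contains only the fresh variables $\underline{y}$, the unifier is $\gamma' = \gamma \cup [f_\kappa(\underline{y})/z_i]$ (after the usual renaming), and applying it to the transformed goal yields exactly $\{\gamma A_1[z_1],\dots,\gamma B_1[y_1],\dots,\gamma B_m[y_m],\dots,\gamma A_n[z_n]\}$ — i.e. the transformed image of the original resolvent, with fresh variables in all the ``proof slots.'' The key observation making this work is that unification on $P(t_1,\dots,t_n,s)$ versus $P(t_1',\dots,t_n',s')$ decomposes into unification of the $t$'s with the $t'$'s (identical to the untransformed problem) composed with unification of $s$ with $s'$, and in our situation $s' = z_i$ is always a fresh variable, so that second component never fails and never feeds back into the first.

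With the single-step lemma in hand, the forward direction ($\Rightarrow$) follows by induction on the number of reduction steps: the base case is the empty reduction (trivial), and in the step case I apply the single-step lemma to the first reduction and the induction hypothesis to the tail, being careful to track that the fresh proof-variables introduced at each step remain fresh for the remainder of the derivation (this is exactly the implicit renaming convention already assumed in Definition \ref{red}). The reverse direction ($\Leftarrow$) is the mirror image: given a reduction in $(F(\Phi),\leadsto)$, I argue that every clause of $F(\Phi)$ is $F(\kappa)$ for a unique $\kappa \in \Phi$, and that projecting away the last argument of every atom in the derivation (both in the goal and in the clause head) turns each unification step of $F(\Phi)$ into a valid unification step of $\Phi$ — again using the decomposition of unification on the extra coordinate, plus the fact that the projected substitution is still a unifier of the projected atoms. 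Reaching $\emptyset$ is preserved under projection in both directions since $F$ and its inverse act as bijections on multisets of atoms up to the extra argument.

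The main obstacle I anticipate is bookkeeping around substitutions and freshness rather than any conceptual difficulty: I must make sure that the substitution $\gamma'$ produced in the transformed derivation, when restricted to the variables of the original query, genuinely coincides with the $\gamma$ of the original derivation, and that the values it assigns to the fresh proof-slot variables (the $f_\kappa(\cdots)$ and $c_\kappa$ terms) do not accidentally clash with or alter the original variables — this is where idempotence of unifiers and the renaming-apart discipline must be invoked carefully, much as in the proof of Lemma \ref{sound}. It is also worth noting that Theorem \ref{record} already does the heavier lifting of characterizing \emph{what} the proof-slot variables get instantiated to; for this lemma I only need the weaker fact that reductions reach $\emptyset$ in lockstep, so I can cite Theorem \ref{record} (or reuse its single-step ingredients) to discharge the freshness/consistency side conditions rather than re-deriving them.
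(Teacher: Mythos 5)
Your proposal is correct and follows essentially the same route as the paper's proof: both directions go by induction on the length of the reduction, with the single-step correspondence resting on the observation that the proof-slot argument of every goal atom is a fresh variable, so the unifier in $F(\Phi)$ is exactly the original unifier extended with a binding $[f_\kappa(\underline{y})/z_i]$ for that fresh slot (and, conversely, deleting that binding recovers the original unifier). The freshness bookkeeping you flag is precisely what the paper's appendix proof tracks, so there is no gap.
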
 
\begin{proof}
  For each direction, by induction on the length of the reduction. Each proof will be similar to the proof of Lemma \ref{sound}. 
\end{proof} 
\begin{theorem}\label{th8}
\label{preservation}
  $\Phi \vdash \{A\} \leadsto^* \emptyset$ iff $F(\Phi) \vdash \{A[y]\} \leadsto^* \emptyset$. 
\end{theorem}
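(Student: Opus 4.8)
The statement is an "iff" specialising Lemma \ref{sc:unif} from a multiset of goals to a single goal, so the plan is essentially to invoke Lemma \ref{sc:unif} with $n=1$ and argue that the single extra-argument variable $y$ can be chosen as required. Concretely, I would proceed as follows. First note that in $(F(\Phi),\leadsto)$ the query associated to $A$ is by definition $A[y]$ for a fresh variable $y$ with $\mathrm{FV}(A)\cap\mathrm{FV}(y)=\emptyset$, so the statement is well-formed. For the left-to-right direction, assume $\Phi\vdash\{A\}\leadsto^*\emptyset$. Applying Lemma \ref{sc:unif} in the case of a singleton multiset $\{A_1\}$ with $A_1=A$ gives immediately $F(\Phi)\vdash\{A[y_1]\}\leadsto^*\emptyset$; since $y_1$ is an arbitrary fresh variable and the LP-Unif reduction relation is stable under renaming of variables not occurring in $\Phi$, we may rename $y_1$ to $y$, obtaining $F(\Phi)\vdash\{A[y]\}\leadsto^*\emptyset$. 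For the right-to-left direction, assume $F(\Phi)\vdash\{A[y]\}\leadsto^*\emptyset$; again by Lemma \ref{sc:unif} (singleton case, read from right to left) we get $\Phi\vdash\{A\}\leadsto^*\emptyset$.

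The only genuine content beyond Lemma \ref{sc:unif} is the bookkeeping about the choice of the fresh variable $y$: Lemma \ref{sc:unif} is stated for a fixed tuple $y_1,\dots,y_n$ of variables attached to $A_1,\dots,A_n$, whereas Theorem \ref{th8} fixes one particular $y$ coming from the definition of the transformed query. This is harmless because LP-Unif performs implicit renaming of quantified variables of clauses at each step (Definition \ref{red}) and never inspects the name of a goal variable, so whether a derivation $F(\Phi)\vdash\{A[y]\}\leadsto^*\emptyset$ exists does not depend on the particular fresh name chosen for $y$. I would make this explicit with a one-line remark that the reduction relation $\leadsto$ is invariant under any injective renaming of variables disjoint from those occurring in $\Phi$, which lets us pass between the $y_1$ supplied by Lemma \ref{sc:unif} and the $y$ of the theorem statement.

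I do not expect any real obstacle here; the theorem is a corollary of Lemma \ref{sc:unif}, and the proof of Lemma \ref{sc:unif} itself (by induction on the length of the reduction, mirroring the proof of Lemma \ref{sound}) is where the work lies. If one wanted a self-contained argument without citing Lemma \ref{sc:unif}, the main point to check in each induction step is that a unification step $C\sim_\gamma A_i$ in $(\Phi,\leadsto)$ lifts to a unification step $C[f_\kappa(\underline{y})]\sim_{\gamma'} A_i[y_i]$ in $(F(\Phi),\leadsto)$, with $\gamma'$ extending $\gamma$ on the fresh proof-recording variables and agreeing with $\gamma$ on the original variables; this is exactly the place where the disjointness condition $\mathrm{FV}(A)\cap\mathrm{FV}(t)=\emptyset$ in the definition of $A[t]$ is used, so that unifying the proof arguments never constrains the original arguments and vice versa. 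But since Lemma \ref{sc:unif} is already available, the proof of Theorem \ref{th8} reduces to the renaming remark above.
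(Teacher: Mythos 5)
Your proposal is correct and matches the paper's intent exactly: Theorem~\ref{th8} is stated as the immediate singleton instance ($n=1$) of Lemma~\ref{sc:unif}, and the paper offers no separate proof beyond that specialisation. Your additional remark on the invariance of $\leadsto$ under renaming of the fresh proof-recording variable is sound bookkeeping that the paper leaves implicit.
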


\begin{example}
 \label{ex:conn:real}
Consider the logic program after realizability transformation in Example \ref{ex:conn:real0}. Realizability transformation does not change the behaviour of LP-Unif, we still have the 
  following successful unification reduction path for query $\mathrm{Connect}(x, y, u)$:
   
  \begin{center}
{\footnotesize
$F(\Phi) \vdash \{\mathrm{Connect}(x, y, u)\}\leadsto_{\kappa_1, [x/x_1, y/z_1, f_{\kappa_1}(u_3, u_4)/u]} \{\mathrm{Connect}(x, y_1, u_3), \mathrm{Connect}(y_1, y, u_4)\}$

$\leadsto_{\kappa_2, [c_{\kappa_2}/u_3,\mathrm{node_1}/x, \mathrm{node_2}/y_1, \mathrm{node_1}/x_1, b/z_1, f_{\kappa_1}(c_{\kappa_2}, u_4)/u]} $

$\{\mathrm{Connect}(\mathrm{node_2}, y, u_4)\}$

$\leadsto_{\kappa_3, [c_{\kappa_3}/u_4, c_{\kappa_2}/u_3, \mathrm{node_3}/y, \mathrm{node_1}/x, \mathrm{node_2}/y_1,\mathrm{node_1}/x_1, \mathrm{node_3}/z_1, f_{\kappa_1}(c_{\kappa_2}, c_{\kappa_3})/u]} \emptyset $
}
\end{center}
 
\end{example}

Now let us come back to the completeness theorem. The following lemma shows that completeness result holds for the transformed program. 

\begin{lemma}\label{lm:complete}
  For $F(\Phi)$, if $n : \ \Rightarrow A[\interp{n}_\emptyset]$ where $n$ is in normal form, then $F(\Phi) \vdash \{A[\interp{n}_\emptyset]\} \leadsto^* \emptyset$.
\end{lemma}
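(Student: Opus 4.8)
Since $n$ proves a premise-free Horn formula $\Rightarrow A[\interp{n}_\emptyset]$ and is already in normal form, Lemma~\ref{fst:lambda} and Theorem~\ref{fst} tell us that $n$ is a first-order normal proof term, and moreover it contains no free proof-term variables (there are no hypotheses to discharge). By Definition~\ref{fst:rep} such a term is necessarily of the shape $\kappa\ p_1 \cdots p_m$ with $m \ge 0$, where each $p_i$ is again a closed first-order normal proof term and $\interp{n}_\emptyset = f_\kappa(\interp{p_1}_\emptyset,\dots,\interp{p_m}_\emptyset)$ is a \emph{ground} term. I would prove the lemma by induction on the size of $n$, generalising the statement slightly to make the induction go through: for every closed first-order normal $n'$ with $n' : \ \Rightarrow A'[\interp{n'}_\emptyset]$ for $F(\Phi)$ and every substitution $\delta$, one has $F(\Phi) \vdash \{\delta A'[\interp{n'}_\emptyset]\} \leadsto^* \emptyset$ (the lemma is the case $\delta = \mathrm{id}$). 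Note that, because $\interp{n'}_\emptyset$ is ground, the $\mathit{gen}$ and $\mathit{inst}$ rules give $n' : \ \Rightarrow \delta A'[\interp{n'}_\emptyset]$ as well, so the hypothesis is stable under instantiation.

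\textbf{Inductive step.} Consider $n = \kappa\ p_1 \cdots p_m$ and a substitution $\delta$. Inverting the typing derivation of $n : \ \Rightarrow \delta A[\interp{n}_\emptyset]$: the constant $\kappa$ comes from an axiom $\kappa : \forall \underline{x}. \forall \underline{y}.\, B_1[y_1],\dots,B_m[y_m] \Rightarrow C[f_\kappa(y_1,\dots,y_m)] \in F(\Phi)$ (its transformed form), there is an instantiating substitution $\sigma$ of $\underline{x},\underline{y}$ with $\sigma(C[f_\kappa(\underline{y})]) = \delta A[\interp{n}_\emptyset]$, and $p_i : \ \Rightarrow \sigma(B_i[y_i])$ for each $i$. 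Comparing the last (proof-recording) argument on both sides forces $\sigma y_i = \interp{p_i}_\emptyset$, hence $p_i : \ \Rightarrow (\sigma B_i)[\interp{p_i}_\emptyset]$. Now I resolve the query with $\kappa$: since $\sigma$ is in particular a unifier of the freshly renamed head $C[f_\kappa(\underline{y})]$ with $\delta A[\interp{n}_\emptyset]$, an idempotent most general unifier $\gamma$ exists with $\sigma = \rho \cdot \gamma$ for some $\rho$, giving $F(\Phi) \vdash \{\delta A[\interp{n}_\emptyset]\} \leadsto_{\kappa,\gamma} \{\gamma B_1[y_1],\dots,\gamma B_m[y_m]\}$; moreover, because $\delta A[\interp{n}_\emptyset]$ has the ground last argument $f_\kappa(\interp{p_1}_\emptyset,\dots)$, unification pins $\gamma y_i = \interp{p_i}_\emptyset$, so each new goal is $(\gamma B_i)[\interp{p_i}_\emptyset]$ with ground recording argument. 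For each $i$, $p_i$ proves $(\sigma B_i)[\interp{p_i}_\emptyset]$, which is a $\rho$-instance of the goal $(\gamma B_i)[\interp{p_i}_\emptyset]$; by the induction hypothesis (size of $p_i$ strictly smaller) together with the standard lifting property of LP-Unif (if $\{\theta C_1,\dots,\theta C_k\} \leadsto^* \emptyset$ then $\{C_1,\dots,C_k\} \leadsto^* \emptyset$) and its closure under parallel goals (a reduction of one goal in a multiset induces the same reduction with the accumulated substitution applied to the remaining goals), I reduce the goals $\gamma B_1[y_1],\dots,\gamma B_m[y_m]$ to $\emptyset$ one after another, each time observing that after the accumulated state $\theta$ is applied the remaining goals are still of the form (instance of $B_j$)$[\interp{p_j}_\emptyset]$ with $p_j$ a proof of a $\theta$-instance of it. Concatenating these reductions yields $F(\Phi) \vdash \{\delta A[\interp{n}_\emptyset]\} \leadsto^* \emptyset$. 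The base case $m = 0$ (so $n = \kappa$, $\interp{n}_\emptyset = c_\kappa$) is the degenerate instance: a single resolution step with the premise-free transformed axiom $\kappa$ closes the goal.

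\textbf{Main obstacle.} The delicate part is the substitution bookkeeping across the multiset of body goals: the unifier $\gamma$ produced by LP-Unif at the resolution step is only \emph{most general}, whereas the substitution $\sigma$ read off from the proof term is one particular instance, so the body goals LP-Unif actually faces are more general than the ones $p_i$ is known to prove. Reconciling the two requires the lifting property of LP-Unif and careful tracking of the residual substitution $\rho$ and of the accumulated state as the subgoals are solved in sequence; the key simplification here — and the reason this is proved on $F(\Phi)$ rather than on an arbitrary $\Phi$ — is that the recording argument $\interp{p_i}_\emptyset$ is ground and the symbol $f_\kappa$ occurs only in the head of the transformed axiom $\kappa$ (cf. Theorem~\ref{record}), so the proof term $n$ deterministically dictates which clause to apply at each step and keeps the recording arguments of all pending goals ground and synchronised throughout the derivation.
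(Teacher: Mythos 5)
Your proof is correct in substance and follows the same skeleton as the paper's: structural induction on $n$, inversion of the typing derivation to recover the transformed clause $\kappa$ and the instantiating substitution, the observation that the recording argument forces $\sigma y_i = \interp{p_i}_\emptyset$, one resolution step against $\kappa$, and repeated application of the induction hypothesis to the subgoals. The genuine difference is that you route the argument through most general unifiers: you strengthen the statement to quantify over an ambient instantiation $\delta$, factor $\sigma = \rho\cdot\gamma$ through an MGU, and appeal to the lifting property of SLD-resolution to reconcile the more general goals produced by the MGU with the particular instances your subproofs $p_i$ are known to inhabit. Under the paper's Definition of LP-Unif this entire ``main obstacle'' dissolves: the relation $\leadsto_{\kappa,\gamma}$ is defined for \emph{any} unifier $\gamma$ with $\gamma C \equiv \gamma A_i$, not a most general one, so the paper simply takes the substitution read off the typing derivation as the reduction's own unifier and never needs a lifting lemma; the only bookkeeping left is re-deriving $n_j : \ \Rightarrow \gamma_1\gamma C_j[\interp{n_j}_\emptyset]$ via \textit{gen}/\textit{inst} as the accumulated state grows, which your groundness observation also handles. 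What your version buys is robustness: it would survive a reformulation of LP-Unif that insists on computed MGUs (the implementation-realistic reading), at the cost of importing the lifting lemma, which you invoke but do not prove and which the paper nowhere establishes --- if you keep your route, that lemma needs to be stated and proved for this calculus; if you adopt the paper's liberal notion of unifier, your argument simplifies to essentially the paper's.
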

\begin{proof}
  By induction on the structure of $n$.
  \begin{itemize}
  \item \textbf{Base Case:} $n = \kappa$. In this case, $\interp{n}_\emptyset = f_\kappa$, 
$ \kappa : \forall \underline{x}. \ \Rightarrow A'[f_\kappa] \in F(\Phi)$ and $\gamma (A'[f_\kappa]) \equiv A[f_\kappa]$ for some substitution $\gamma$. Thus $A'[f_\kappa] \sim_\gamma A[f_\kappa]$, which implies $F(\Phi) \vdash \{A[f_\kappa]\} \leadsto_{\kappa, \gamma} \emptyset$.
  \item \textbf{Step Case:} $n = \kappa\ n_1\ n_2\ ...\ n_m$. In this case, $\interp{n}_\emptyset = f_\kappa(\interp{n_1}_\emptyset, ..., \interp{n_m}_\emptyset)$, 
$\kappa : \forall \underline{x} \underline{y}. \ C_1[y_1],..., C_m[y_m]\Rightarrow B[f_\kappa(y_1,..., y_m)] \in F(\Phi)$. To obtain $n : \ \Rightarrow A[\interp{n}_\emptyset]$, we have to use 
$\kappa : \forall \underline{x}. \ C_1[\interp{n_1}_\emptyset],..., C_m[\interp{n_m}_\emptyset]\Rightarrow B[f_\kappa(\interp{n_1}_\emptyset,..., \interp{n_m}_\emptyset)]$ with $\gamma (B[f_\kappa(\interp{n_1}_\emptyset,..., \interp{n_m}_\emptyset)]) \equiv A[\interp{n}_\emptyset]$. By the inst rule, we have $\kappa : \gamma C_1[\interp{n_1}_\emptyset],..., \gamma C_m[\interp{n_m}_\emptyset]\Rightarrow \gamma B[f_\kappa(\interp{n_1}_\emptyset,..., \interp{n_m}_\emptyset)]$. Furthermore, it has to be the case that $n_1 :\ \Rightarrow \gamma C_1[\interp{n_1}_\emptyset], ..., n_m : \ \Rightarrow \gamma C_m[\interp{n_m}_\emptyset]$. Thus we have $F(\Phi) \vdash \{A[\interp{n}_\emptyset]\} \leadsto_{\kappa, \gamma} \{\gamma C_1[\interp{n_1}_\emptyset],..., \gamma C_m[\interp{n_m}_\emptyset] \}$. By IH, 
we have $F(\Phi) \vdash \{\gamma C_1[\interp{n_1}_\emptyset]\} \leadsto^*_{\gamma_1} \emptyset$. So $F(\Phi) \vdash \{A[\interp{n}_\emptyset]\} \leadsto_{\kappa, \gamma} \cdot \leadsto^*_{\gamma_1}$

\noindent $\{\gamma_1 \gamma C_2[\interp{n_2}_\emptyset],..., \gamma_1 \gamma C_m[\interp{n_m}_\emptyset] $. Again, we have $n_2 :\ \Rightarrow \gamma_1 \gamma C_2[\interp{n_2}_\emptyset], ..., n_m : \ \Rightarrow \gamma_1 \gamma C_m[\interp{n_m}_\emptyset]$. By applying IH repeatedly, we obtain $F(\Phi) \vdash \{A[\interp{n}_\emptyset]\} \leadsto^* \emptyset$.
  \end{itemize}
\end{proof}

\begin{lemma}\label{drop}
  For $F(\Phi)$, if $F(\Phi) \vdash \{A_1[t_1], ..., A_n[t_n]\} \leadsto^* \emptyset$ with $\mathrm{FV}(t_i) = \emptyset$ for all $i$, then $\Phi \vdash \{A_1,..., A_n\} \leadsto^* \emptyset$.
 \end{lemma}
 \begin{proof}
   By induction on the length of $\leadsto^*$.
   \begin{itemize}
   \item \textbf{Base Case:} $F(\Phi) \vdash \{A[f_\kappa]\} \leadsto^* \emptyset$. We have $\kappa :\ \Rightarrow A' \in \Phi$ such that $A' \sim_\gamma A$. Thus $\Phi \vdash \{A\} \leadsto_{\kappa} \emptyset$.
   \item \textbf{Step Case:} $F(\Phi) \vdash \{A_1[t_1], ..., A_i[t_i] ,..., A_n[t_n]\} \leadsto_{\kappa, \gamma} $
     
     \noindent $\{\gamma A_1[t_1], ..., \gamma B_1[t'_1], ..., \gamma B_l[t'_l],..., \gamma A_n[t_n]\} \leadsto^* \emptyset$ with $t_i \equiv f_\kappa(t'_1,..., t'_l)$ and
$\kappa : B_1,..., B_l \Rightarrow C \in \Phi$ where $C \sim_\gamma A_i$. So by IH, we have
$\Phi \vdash \{A_1, ..., A_n\} \leadsto_{\kappa, \gamma} \{\gamma A_1, ..., \gamma B_1, ..., \gamma B_l,..., \gamma A_n\} \leadsto^* \emptyset$.
   \end{itemize}
 \end{proof}

Now we are ready to prove the completeness result. 
\begin{theorem}[Completeness]
   \label{thm:complete}
  If $n : [\forall \underline{x}]. \ \Rightarrow A$, where $n$ is in normal form, then $\Phi \vdash \{A\} \leadsto^*_\gamma \emptyset $.
\end{theorem}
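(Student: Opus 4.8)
The plan is to bootstrap completeness from the facts already established about the realizability transformation, rather than attempting a direct induction on the derivation of $n : \Rightarrow A$ (which, as noted after Theorem~\ref{sound:unif}, is cumbersome). First I would dispose of the outer quantifier: if the hypothesis is $n : \forall \underline{x}.\Rightarrow A$, one application of the \textit{inst} rule with the identity substitution $[\underline{x}/\underline{x}]$ yields $n : \Rightarrow A$, and since the \textit{inst} and \textit{gen} rules are Curry-style they do not disturb the proof term, so we may assume $n : \Rightarrow A$ with $n$ still in normal form. By Lemma~\ref{fst:lambda} (with an empty antecedent list) and Theorem~\ref{fst}, such an $n$ is a first order proof term; and because the type system of Definition~\ref{proofsystem} has no rule introducing a free proof variable (proof variables only ever occur bound by the $\lambda$ produced in the \textit{cut} rule), $n$ is closed. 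Consequently $\interp{n}_\emptyset$ is well defined, and being built solely from the function symbols $f_\kappa$ it is a ground term, so $\mathrm{FV}(\interp{n}_\emptyset) = \emptyset$.

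Next I would transport the derivation into the transformed program. Applying Theorem~\ref{realI} to $n : \Rightarrow A$ --- here the antecedent list is empty, so the transformation leaves the proof term untouched and merely appends the extra argument to the conclusion --- gives $n : \Rightarrow A[\interp{n}_\emptyset]$ for the axioms $F(\Phi)$. This judgement is exactly in the form required by Lemma~\ref{lm:complete}, which then delivers a successful LP-Unif reduction $F(\Phi) \vdash \{A[\interp{n}_\emptyset]\} \leadsto^* \emptyset$ in the transformed program.

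Finally I would pull this reduction back to $\Phi$. Since $\interp{n}_\emptyset$ is ground, Lemma~\ref{drop} applies with the single-atom list $A_1 \equiv A$ and $t_1 \equiv \interp{n}_\emptyset$, yielding $\Phi \vdash \{A\} \leadsto^* \emptyset$. Writing $\gamma$ for the state accumulated along this reduction path, we obtain $\Phi \vdash \{A\} \leadsto^*_\gamma \emptyset$, which is the claim. (Alternatively, one could replace the last two steps by invoking Theorem~\ref{preservation}, but Lemma~\ref{drop} is the more direct route given the ground witness $\interp{n}_\emptyset$.)

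The substantive work has really been pushed into Theorem~\ref{realI} and Lemmas~\ref{lm:complete} and~\ref{drop}; the only genuinely delicate point in assembling them is the observation that a normal-form proof of a premise-free sequent is a \emph{closed} first order proof term, so that $\interp{n}_\emptyset$ is meaningful and, moreover, ground --- this is what lets Lemma~\ref{drop}'s groundness hypothesis be discharged. I expect this bookkeeping, namely matching the shape of $n$ guaranteed by Theorem~\ref{fst} against the domain of the representation function $\interp{\cdot}_\phi$, to be the main thing to get right; the remainder is a straightforward chaining of the cited results.
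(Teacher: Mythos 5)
Your proposal is correct and follows exactly the same route as the paper's own proof: Theorem~\ref{th6} to transport the judgement to $F(\Phi)$, then Lemma~\ref{lm:complete} to obtain the reduction in the transformed program, then Lemma~\ref{drop} to pull it back to $\Phi$. The extra care you take in checking that $n$ is a closed first order term so that $\interp{n}_\emptyset$ is ground (discharging the $\mathrm{FV}(t_i)=\emptyset$ hypothesis of Lemma~\ref{drop}) is a detail the paper leaves implicit, and is a welcome addition rather than a deviation.
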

\begin{proof}
By Theorem \ref{th6}, we have $n : [\forall \underline{x}]. \ \Rightarrow A[\interp{n}_\emptyset]$ holds in $F(\Phi)$. By Lemma \ref{lm:complete}, we have $F(\Phi) \vdash \{A[\interp{n}_\emptyset]\} \leadsto^* \emptyset$. By Lemma \ref{drop}, we have $\Phi \vdash \{A\} \leadsto^*_\gamma \emptyset $.
\end{proof}

The completeness result relies on realizability transformation to record the proof steps for a query, so the LP-Unif reduction can just follow the proof steps to reduce the query to the empty set.
Together with Theorem \ref{sound:unif}, this proof system gives new semantics for derivations in LP.

\section{Structual Resolution}
\label{rt:s}

S-resolution~\cite{JKK15} is a newly proposed alternative to SLD-resolution that 
allows a systematic separation of derivations into term-matching and unification steps. 
A logic program is called \textit{productive} if the term-matching reduction is terminating for any query. For productive programs with coinductive meaning, finite term-rewriting reductions can be seen as measures of observation in an infinite derivation. The ability to handle corecursion in a productive way is an attractive computational feature of S-resolution.


\begin{example}\label{ex:str}
  The following program defines the predicate $\mathrm{Stream}$:
 {\footnotesize  
\begin{center}
$\kappa_1 : \forall x . \forall y . \mathrm{Stream}(y) \Rightarrow \mathrm{Stream}(\mathrm{Cons}( x, y))$  
\end{center}
}

\noindent It will result in infinite LP-Unif reduction: 

 {\footnotesize
  \begin{center}
    $\Phi \vdash \{\mathrm{Stream}(\mathrm{Cons}(x,y))\} \leadsto_{\kappa_1, [x/x_1, y/y_1]}
    \{\mathrm{Stream}(y)\} \leadsto_{\kappa_1, [\mathrm{Cons}(x_2, y_2)/y]}
    \{\mathrm{Stream}(y_2)\} \leadsto_{\kappa_1,  [\mathrm{Cons}(x_3, y_3)/y_2]} \ldots$
  \end{center}
}

\noindent But it will yield finite term-matching reduction since $\mathrm{Stream}(y)$ can not be matched
by the head of $\kappa_1$ ($\mathrm{Stream}(\mathrm{Cons}(x,y))$): 

 {\footnotesize 
  \begin{center}
    $\Phi \vdash \{\mathrm{Stream}(\mathrm{Cons}(x,y))\} \to_{\kappa_1}
    \{\mathrm{Stream}(y)\} \not \to$
  \end{center}
}

\end{example}

In general, term-matching reductions are not complete relative to LP-Unif reductions, 
but we can combine them with substitutional  steps to complete derivations. This is exactly the idea behind S-resolution.

\begin{example}\label{ex:bl}
The following program defines bits and lists of bits:
 {\footnotesize
 \begin{center}
     $\kappa_1 :\  \Rightarrow \mathrm{Bit}(0)$\\
   $\kappa_2 :\ \Rightarrow \mathrm{Bit}(1)$\\
    $\kappa_3 :\ \Rightarrow \mathrm{BList}(\mathrm{Nil})$\\
  $\kappa_4 : \forall x . \forall y.  \mathrm{BList}(y), \mathrm{Bit}(x) \Rightarrow \mathrm{BList}(\mathrm{Cons}( x, y))$\\

\end{center}} 

\noindent LP-Unif would give a complete reduction:
 
 {\footnotesize
  \begin{center} 
    $\Phi \vdash \{\mathrm{BList}(\mathrm{Cons}(x,y))\} \leadsto_{\kappa_4, [x/x_1,y/y_1]}
    \{\mathrm{Bit}(x),  \mathrm{BList}(y)\} \leadsto_{\kappa_1, [0/x, 0/x_1,y/y_1]} 
     \{\mathrm{BList}(y)\} \leadsto_{\kappa_3, [\mathrm{Nil}/y, 0/x, 0/x_1,\mathrm{Nil}/y_1]}  \emptyset$ 
  \end{center} } 

\noindent But term-matching reduction will not be able to compute an answer in this case.  
 
 {\footnotesize
  \begin{center} 
    $\Phi \vdash \{\mathrm{BList}(\mathrm{Cons}(x,y))\} \to_{\kappa_4}
    \{ \mathrm{Bit}(x),  \mathrm{BList}(y)\} \not \to $ 
  \end{center} } 

 \noindent This is why, S-resolution combines term-matching reductions with additional substitutional steps, in order  to compute the same answer:
 
 {\footnotesize
  \begin{center} 
    $\Phi \vdash \{\mathrm{BList}(\mathrm{Cons}(x,y))\} \to_{\kappa_4}
    \{ \mathrm{Bit}(x),  \mathrm{BList}(y)\} \hookrightarrow_{\kappa_1, [0/x]} \{ \mathrm{Bit}(0),  \mathrm{BList}(y)\} \to_{\kappa_1, [0/x]} \{ \mathrm{BList}(y)\} \hookrightarrow_{\kappa_3, [0/x, \mathrm{Nil}/y]} \{ \mathrm{BList}(\mathrm{Nil})\} \to_{\kappa_3, [0/x, \mathrm{Nil}/y]} \emptyset$ 
  \end{center} } 

\end{example}

 Completing derivation for Stream in the same way will result in an infinite derivation, in which every term-matching reduction is finite.

In this section, we embed S-resolution into the type theoretic framework we have developed in the previous sections.
We first define S-derivations in terms of LP-Struct reductions, in the uniform style with LP-Unif reductions, thereby also defining LP-TM reductions, which resemble reductions in term-rewriting systems~\cite{bezem2003term}. We then prove that LP-Unif and LP-Struct are operationally equivalent subject to two conditions: productivity and non-overlapping.
Finally, we show how realizability transformation can be used to guarantee productivity of logic programs in the setting of S-resolution.


\subsection{S-resolution in the Type-Theoretic Setting}
 
\begin{definition}
\

  \begin{itemize}
  \item \textbf{Term-matching(LP-TM) reduction:}

$\Phi \vdash \{A_1,..., A_i, ..., A_n\} \to_{\kappa, \gamma'} \{A_1,..., \sigma B_1,..., \sigma B_m, ..., A_n\}$ for any substitution $\gamma'$, if there exists $\kappa : \forall \underline{x} . B_1,..., B_n \Rightarrow C \in \Phi$ such that $C \mapsto_{\sigma} A_i$.

\item \textbf{Substitutional reduction:} 

$\Phi \vdash \{A_1,..., A_i, ..., A_n\} \hookrightarrow_{\kappa, \gamma \cdot \gamma'} \{\gamma A_1,..., \gamma A_i, ..., \gamma A_n\}$ for any substitution $ \gamma'$, if there exists $\kappa : \forall \underline{x} . B_1,..., B_n \Rightarrow C \in \Phi$ such that $C \sim_{\gamma} A_i$.

\end{itemize}

\end{definition}

The second subscript of term-matching reduction is used to store the substitutions obtained by unification, it is only used when we combine term-matching reductions with substitutional reductions. The second subscript in 
the substitutional reduction is intended as a state, it will be updated along with reductions. 

Given a program $\Phi$ and
a set of queries $\{B_1, \ldots, B_n\}$, LP-TM uses only term-matching reduction to reduce $\{B_1, \ldots, B_n\}$: 

\begin{definition}[LP-TM]
\
\noindent Given a logic program $\Phi$,  LP-TM is given by an abstract reduction system $(\Phi, \to)$. 
\end{definition}

LP-TM is also sound w.r.t. the type system
of Definition \ref{proofsystem}, which implies that we can obtain a proof for each successful query. 

\begin{theorem}[Soundness of LP-TM]
\label{sound:tm}
    If $\Phi \vdash \{A\} \to^* \emptyset$ , then there exists a proof $e : \forall \underline{x} .\Rightarrow  A$ given axioms $\Phi$.
\end{theorem}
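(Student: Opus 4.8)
The plan is to prove Soundness of LP-TM by induction on the length of the term-matching reduction $\Phi \vdash \{A\} \to^* \emptyset$, following the same structure as the proof of Lemma~\ref{sound} (the soundness of LP-Unif), but simplified by the fact that term-matching does not propagate substitutions into the sibling goals. First I would strengthen the statement to a multiset version suitable for induction, analogous to Lemma~\ref{sound}: namely, if $\Phi \vdash \{A_1,\dots,A_n\} \to^* \emptyset$, then there exist proofs $e_1 : \forall \underline{x}.\Rightarrow A_1,\ \dots,\ e_n : \forall \underline{x}.\Rightarrow A_n$ given axioms $\Phi$. The theorem is the case $n=1$.

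For the base case, a one-step reduction $\Phi \vdash \{A\} \to_{\kappa,\gamma'} \emptyset$ forces $n=0$ on the right, so the rule used is $\kappa : \forall \underline{x}.\Rightarrow C \in \Phi$ with $C \mapsto_\sigma A$, i.e. $\sigma C \equiv A$. By the \textit{axiom} rule $\kappa : \forall \underline{x}.\Rightarrow C$, by \textit{inst} with $\sigma$ we get $\kappa :\ \Rightarrow \sigma C$, hence $\kappa :\ \Rightarrow A$ since $\sigma C \equiv A$, and finally $\kappa : \forall \underline{x}.\Rightarrow A$ by \textit{gen}. For the step case, suppose $\Phi \vdash \{A_1,\dots,A_i,\dots,A_n\} \to_{\kappa,\gamma'} \{A_1,\dots,\sigma B_1,\dots,\sigma B_m,\dots,A_n\} \to^* \emptyset$, where $\kappa : \forall \underline{x}.\ B_1,\dots,B_m \Rightarrow C$ and $C \mapsto_\sigma A_i$, so $\sigma C \equiv A_i$. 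By the induction hypothesis applied to the shorter reduction, there are proofs $e_1 :\forall \underline{x}.\Rightarrow A_1,\dots$, and in particular $p_1 : \forall \underline{x}.\Rightarrow \sigma B_1,\dots,p_m : \forall \underline{x}.\Rightarrow \sigma B_m$, and $e_j : \forall \underline{x}.\Rightarrow A_j$ for the untouched goals $j \neq i$. By the \textit{inst} rule, instantiate the quantifiers of $\kappa$ by $\sigma$ to get $\kappa : \sigma B_1,\dots,\sigma B_m \Rightarrow \sigma C$; since $\sigma C \equiv A_i$, applying the \textit{cut} rule $m$ times with $p_1,\dots,p_m$ yields $e_i = \kappa\ p_1\ \cdots\ p_m :\ \Rightarrow A_i$, and then $e_i : \forall \underline{x}.\Rightarrow A_i$ by \textit{gen}. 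This gives proofs for all of $A_1,\dots,A_n$, completing the induction.

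The key simplification relative to Lemma~\ref{sound} is that in LP-TM the matcher $\sigma$ is applied only to the body $B_1,\dots,B_m$ of the resolving clause and not to the sibling goals $A_j$, so there is no bookkeeping about idempotency or accumulation of substitutions along the reduction path — the goals $A_j$ literally persist unchanged, and their proofs transfer directly from the induction hypothesis. The main (and only mild) obstacle is being careful about variable freshness: the implicit renaming of the quantified variables of $\kappa$ at each reduction step must be reflected by choosing fresh $\underline{x}$ in the \textit{inst}/\textit{gen} applications, exactly as in the proof of Lemma~\ref{sound}; once this is handled the argument is routine. Alternatively, one could derive the theorem as a corollary of Lemma~\ref{sound} by observing that every LP-TM reduction sequence can be simulated by an LP-Unif reduction sequence reaching $\emptyset$ (a term-matching step is a special case of a unification step in which the unifier's action on the query is trivial), but giving the direct inductive proof above is cleaner and self-contained.
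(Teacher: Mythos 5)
Your proof is correct and follows exactly the approach the paper intends (the paper relegates this proof to its extended version, but its remark immediately after the theorem — that for LP-TM ``there is no need to accumulate substitutions, and the resulting formula is proven as stated'' — is precisely the simplification of Lemma~\ref{sound} you carry out). The strengthening to multisets, the \textit{axiom}/\textit{inst}/\textit{cut}/\textit{gen} bookkeeping, and the observation that sibling goals persist unchanged because the matcher acts only on the freshly renamed clause variables are all as expected.
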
 


Comparing Theorem \ref{sound:unif} and Theorem \ref{sound:tm}, we see that for LP-TM, there is no need to accumulate substitutions, and the resulting formula is proven as stated. This difference is due to the use of term-matching instead of unification for the reduction. The following example 
shows that the LP-TM is incomplete with respect to the type system. 

\begin{example}
Consider the following program $\Phi$.

\begin{center}
  $\kappa_1 :\ \Rightarrow Q(\mathrm{C})$

  $\kappa_2 :\ \forall x . \forall y. Q(x) \Rightarrow P(y)$
\end{center}

\noindent For query $P(\mathrm{C})$, we have $\Phi \vdash \{P(\mathrm{C})\} \to_{\kappa_2} \{Q(x)\} \not \to$. However, there exist a proof $(\kappa_2 \ \kappa_1) :\ \Rightarrow P(\mathrm{C})$, by instantiating $x,y$ to $\mathrm{C}$ in $\kappa_2$.
\end{example}

  We use $\to^\mu$ to denote a reduction path to a $\to$-normal form. If the
  $\to$-normal form does not exist, then $\to^\mu$ denotes an infinite reduction path. We write $\hookrightarrow^1$ to denote at most one step of $\hookrightarrow$.



We can now formally define S-Resolution within our formal framework. Given a program $\Phi$ and
a set of queries $\{B_1, \ldots, B_n\}$, LP-Struct  first uses term-matching reduction to reduce  $\{B_1, \ldots, B_n\}$ to a normal form, then performs one step substitutional reduction, and then repeats this process.

\begin{definition}[Structural Resolution (LP-Struct)]
\
\noindent  Given a logic program $\Phi$,  LP-Struct is given by an abstract reduction system $(\Phi, \to^{\mu} \cdot \hookrightarrow^1)$.

\end{definition}

If a finite term-matching reduction path does not exist, then $\to^{\mu} \cdot \hookrightarrow^1$ denotes an infinite path. 
When we write $\Phi \vdash \{\underline{A}\} (\to^{\mu} \cdot \hookrightarrow^1)^* \{\underline{C}\}$, it means a nontrivial finite path will be of the shape $\Phi \vdash \{\underline{A}\} \to^{\mu} \cdot \hookrightarrow \cdot ... \cdot \to^{\mu} \cdot \hookrightarrow \cdot \to^\mu \{\underline{C}\}$. 

Now let us see the execution trace of Stream using LP-Struct: 

 {\footnotesize 
  \begin{center}
    $\Phi \vdash \{\mathrm{Stream}(\mathrm{Cons}(x,y))\} \to_{\kappa_1}
    \{\mathrm{Stream}(y)\}  \hookrightarrow_{\kappa_1, [\mathrm{Cons}(x_2, y_2)/y]} \{\mathrm{Stream}(\mathrm{Cons}(x_2, y_2))\} \to_{\kappa_1, [\mathrm{Cons}(x_2, y_2)/y]} 
    \{\mathrm{Stream}(y_2)\} \hookrightarrow_{\kappa_1,  [\mathrm{Cons}(x_3, y_3)/y_2, \mathrm{Cons}(x_2,\mathrm{Cons}(x_3, y_3) )/y]} \{\mathrm{Stream}(\mathrm{Cons}(x_3, y_3))\} \to_{\kappa_1,  [\mathrm{Cons}(x_3, y_3)/y_2, \mathrm{Cons}(x_2,\mathrm{Cons}(x_3, y_3) )/y]} \{\mathrm{Stream}(y_3)\} \ldots $
  \end{center}
}

\noindent Note that the overall reduction is infinite, but each LP-TM reduction is finite.

\subsection{LP-Struct and LP-Unif}

The next question one may ask is how LP-Struct compares to LP-Unif. They are not equivalent.
Consider the program and the finite LP-Unif derivation of Example~\ref{ex:conn}. LP-Unif has
 a finite successful derivation for the query $\mathrm{Connect}(x, y)$, but we have the following non-terminating reduction by LP-Struct:  
 {\footnotesize

\begin{center} 
  $\Phi \vdash \{\mathrm{Connect}(x, y)\} \to_{\kappa_1} \{\mathrm{Connect}(x, y_1), \mathrm{Connect}(y_1, y)\} $

$\to_{\kappa_1} \{\mathrm{Connect}(x, y_2), \mathrm{Connect}(y_2, y_1), \mathrm{Connect}(y_1, y)\} \to_{\kappa_1} ... $ 
\end{center}
}

 The diverging behavior above is due to the divergence of LP-TM reduction.
Therefore, the program of Example~\ref{ex:conn} is not productive
in the sense of~\cite{komendantskaya2014,JKK15}. 

\begin{definition}[Productivity]
  We say a program $\Phi$ is productive iff every $\to$-reduction is finite.
\end{definition}

Perhaps LP-Unif and LP-Struct are operationally equivalent for all productive programs?
The following example shows this is not the case.

\begin{example}
\label{overlap}
\
{\footnotesize
  \begin{center}

    \noindent $\kappa_1 :\ \Rightarrow P(\mathrm{C})$ 

   \noindent  $\kappa_2 :  \forall x . Q(x) \Rightarrow P(x)$
  \end{center}
}
\noindent Here $\mathrm{C}$ is a constant. The program is $\to$-terminating. However, for query $P(x)$, we have $\Phi \vdash \{P(x)\} \leadsto_{\kappa_1, [\mathrm{C}/x]} \emptyset$ with LP-Unif, but $\Phi \vdash \{P(x)\} \to_{\kappa_2} \{Q(x)\} \not \hookrightarrow$ for LP-Struct.
\end{example}

Thus, productivity is insufficient for establishing the relation between LP-Struct and LP-Unif. In Example \ref{overlap}, the problem is caused by
the overlapping heads $P(\mathrm{C})$ and $P(x)$. Motivated by the notion of non-overlapping rules  in term rewriting systems (\cite{bezem2003term,Baader:1998}), we introduce the following definition.

\begin{definition}[Non-overlapping Condition]
  Axioms $\Phi$ are non-overlapping if for any two formulas $\forall \underline{x}. \underline{B} \Rightarrow C, \forall \underline{x}. \underline{D} \Rightarrow E \in \Phi$, there are no substitution $\sigma, \delta$ such that $\sigma C \equiv \delta E$.
\end{definition}


\begin{theorem}
  \label{ortho:equiv}
  Suppose $\Phi$ is non-overlapping. $\Phi \vdash \{A_1, ..., A_n\} \leadsto^*_\gamma \{C_1, ..., C_m\}$ with $\{C_1, ..., C_m\}$ in $\leadsto$-normal form iff $\Phi \vdash \{A_1, ..., A_n\} (\to^\mu \cdot \hookrightarrow^1)^*_\gamma \{C_1, ..., C_m\}$ with $\{C_1, ..., C_m\}$ in $\to^\mu \cdot \hookrightarrow^1$-normal form.   
\end{theorem}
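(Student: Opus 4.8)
The plan is to prove the two directions separately, the easy direction being that every LP-Struct reduction step decomposes into LP-Unif steps, and the hard direction being that non-overlappingness lets us re-order and collapse an LP-Unif reduction into the $\to^\mu \cdot \hookrightarrow^1$ discipline. Throughout I will work with a single ``focused'' atom at a time and lift to multisets, since both reductions act on one selected atom of the goal. The key structural fact I will isolate first is a \emph{decomposition lemma}: whenever $\Phi \vdash \{\underline{A}\} \to_{\kappa,\gamma'} \{\underline{A}'\}$ is a term-matching step using $C \mapsto_\sigma A_i$, then this is simulated by the LP-Unif step $\Phi \vdash \{\underline{A}\} \leadsto_{\kappa} \{\underline{A}'\}$ under the same matcher (a matcher is in particular a unifier, and since the matched atom $A_i$ is left unchanged by $\sigma$ restricted to $\mathrm{FV}(A_i)$ after renaming, the resulting goals coincide); and a substitutional step $\hookrightarrow_{\kappa,\gamma}$ using $C \sim_\gamma A_i$ followed by the forced term-matching step on the now-unifiable atom reproduces exactly the LP-Unif step $\leadsto_{\kappa,\gamma}$. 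This gives the ``if'' direction: replay each $\to^\mu$ block as a sequence of $\leadsto$ steps and each $\hookrightarrow^1$ as the substitutional half of a $\leadsto$ step, and observe that the composite state $\gamma$ and the final multiset $\{C_1,\dots,C_m\}$ are preserved; normal-form on one side transfers to the other because an atom is $\to$-reducible via $\kappa$ iff its head is matchable, and once we are blocked on all atoms for all clauses, no $\hookrightarrow$ can help either (this uses that a $\leadsto$-normal form is one where no head unifies with any atom, which for the $(\to^\mu\cdot\hookrightarrow^1)$-normal form we get from non-overlapping---see below).

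For the ``only if'' direction I would argue by induction on the length of the LP-Unif reduction $\Phi \vdash \{A_1,\dots,A_n\} \leadsto^*_\gamma \{C_1,\dots,C_m\}$. The crucial use of non-overlappingness is this \emph{commutation/confluence observation}: in a productive \emph{need not be assumed here} setting, if an atom $A_i$ is resolved by $\leadsto_{\kappa,\gamma}$ where $\gamma$ is a proper unifier (not a matcher), then because no two clause heads overlap, $\kappa$ is the \emph{only} clause whose head can ever unify with $A_i$ or with any instance of $A_i$; hence we may safely first perform the $\hookrightarrow_{\kappa,\gamma}$-step (applying $\gamma$ to the whole goal), which turns $A_i$ into an instance $\gamma A_i$ whose head is now \emph{matchable} by $C$, and then the $\to_\kappa$ step fires deterministically. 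Non-overlapping also guarantees that this single $\hookrightarrow$ does not create spurious new matching redexes elsewhere that would diverge from the LP-Unif path: any atom that becomes $\to$-reducible after applying $\gamma$ was already $\leadsto$-reducible by the unique matching clause, so the $\to^\mu$ normalization we are forced to run before the next $\hookrightarrow$ only performs steps that LP-Unif would also (eventually) perform. I would make this precise by showing that the set of atoms appearing along the LP-Struct path is, up to the accumulated substitution, a permutation-closed ``saturation'' of the atoms along the LP-Unif path, and that the selection strategy (normalize-by-matching, then one unification step) is a valid scheduling of the same multiset of resolution events. The accumulated answer substitution is the same in both because composition of the per-step substitutions is associative and the matchers contribute the identity on the relevant variables.

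The main obstacle I anticipate is the bookkeeping around \emph{which} atom gets selected and the interaction between the mandatory exhaustive term-matching phase $\to^\mu$ and LP-Unif's freedom to interleave. Concretely, LP-Unif might resolve atom $A_1$ by a matcher, then $A_2$ by a unifier, then come back to a descendant of $A_1$; LP-Struct is forced to drive \emph{all} currently-matchable atoms to normal form before any substitutional step. I will need non-overlappingness precisely to show these two schedules reach the same normal form: the deterministic, confluent nature of $\to$ on a non-overlapping program (each atom has at most one applicable clause for term-matching, so $\to$ is essentially orthogonal and hence confluent, cf.~\cite{bezem2003term}) means the order in which term-matching redexes are contracted is irrelevant, and the substitutional steps commute past term-matching steps on disjoint atoms. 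A secondary subtlety is the normal-form correspondence at the endpoints: I must check that $\{C_1,\dots,C_m\}$ being $\leadsto$-normal (no head unifies with any $C_j$) is \emph{equivalent} to it being $(\to^\mu\cdot\hookrightarrow^1)$-normal (no head matches any $C_j$ and no head unifies with any $C_j$)---here again non-overlapping collapses the gap, since if some head unified with $C_j$ but did not match it, the unique such clause would give a $\hookrightarrow$ step, contradicting normality; so unifiability-normal and matchability-normal coincide on non-overlapping programs. Assembling these pieces---decomposition lemma, confluence of $\to$, commutation of $\hookrightarrow$ past $\to$, and the normal-form equivalence---yields both implications.
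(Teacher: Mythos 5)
Your overall route is essentially the paper's: both directions rest on the same two decomposition facts --- a term-matching step $\to_\kappa$ is itself a $\leadsto_\kappa$ step (a matcher is a unifier whose domain, after renaming, is disjoint from the goal's variables, so the rest of the goal is untouched), and a substitutional step $\hookrightarrow_{\kappa,\gamma}$ together with the forced $\to_\kappa$ step on the instantiated atom is exactly one $\leadsto_{\kappa,\gamma}$ step --- followed by induction on the length of the reduction in each direction, with non-overlapping invoked to pin down which clause fires after a $\hookrightarrow$. So in structure this matches the paper's Lemmas on $\to$-to-$\leadsto$ and $\hookrightarrow\cdot\to^\mu$-to-$\leadsto^*$ and the two induction lemmas built on them.

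One claim in your argument is wrong as stated, though not fatally: non-overlapping does \emph{not} imply that ``$\kappa$ is the only clause whose head can ever unify with $A_i$.'' Take $\kappa_1 :\ \Rightarrow P(a)$ and $\kappa_2 :\ \Rightarrow P(b)$: the heads have no common instance, so the program is non-overlapping, yet both unify with $P(x)$. What non-overlapping actually gives --- and what both the paper and your own term-matching-determinism remark rely on --- is that at most one clause head can \emph{match} a given atom: if $\sigma C \equiv A \equiv \sigma' C'$ then $C$ and $C'$ overlap. This weaker fact is all that is needed to force the $\to$ step immediately after $\hookrightarrow_{\kappa,\gamma}$ to use $\kappa$ (since $\gamma A_i \equiv \gamma C$, any other matching head $C'$ would satisfy $\sigma' C' \equiv \gamma C$), so your proof survives once the justification is repaired; but you should not lean on uniqueness of the \emph{unifying} clause anywhere, and indeed the simulation direction does not need it because you are replaying a given LP-Unif step rather than arguing it is the only possible one. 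Your remaining worry --- that LP-Struct's mandatory exhaustive $\to^\mu$ phase may schedule matching redexes in a different order than the LP-Unif derivation, requiring commutation of independent $\to$ steps and of $\hookrightarrow$ past $\to$ --- is a legitimate bookkeeping obligation; the paper's own proof passes over it silently (its induction literally produces a $(\hookrightarrow\cdot\to)^*$ path and declares it of the required shape), so making that rearrangement explicit, as you propose, is if anything more careful than the source.
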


The theorem above still requires the termination of the $\leadsto$ to establish equivalence LP-Unif and LP-Struct. We can weaken this requirement by only requiring
termination of the $\to$-reduction, i.e. by requiring productivity. 

\begin{theorem}[Equivalence of LP-Struct and LP-Unif]\label{prod-non-overlap}
  \
  Suppose $\Phi$ is non-overlapping and productive. 
  \begin{enumerate}
  \item If $\Phi \vdash \{A_1,..., A_n\} \leadsto \{B_1,..., B_m\}$, then $\Phi \vdash \{A_1,..., A_n\} (\to^\mu \cdot \hookrightarrow^1)^* \{C_1,..., C_l\}$ and $\Phi \vdash \{B_1,..., B_m\} \to^* \{C_1,..., C_l\}$.
   \item If $\Phi \vdash \{A_1,..., A_n\} (\to^\mu \cdot \hookrightarrow^1)^* \{B_1,..., B_m\}$, then $\Phi \vdash \{A_1,..., A_n\} \leadsto^* \{B_1,..., B_m\}$.  
  \end{enumerate}
\end{theorem}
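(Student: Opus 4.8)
The plan is to prove Theorem~\ref{prod-non-overlap} by reducing both parts to Theorem~\ref{ortho:equiv}, using productivity to supply the finiteness hypothesis that Theorem~\ref{ortho:equiv} requires. The key observation is that Theorem~\ref{ortho:equiv} already handles the case where $\leadsto$ terminates; productivity does not give us $\leadsto$-termination (Example~\ref{ex:conn} shows it need not hold), but it does give termination of $\to$, and the bridging step is to show that \emph{one} step of $\leadsto$ can always be simulated by a finite $\to^\mu \cdot \hookrightarrow^1$ segment followed by $\to^*$ rejoining, and conversely.

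For part~(1), I would proceed as follows. Suppose $\Phi \vdash \{\underline{A}\} \leadsto \{\underline{B}\}$, so some $A_i$ is unified with the head $C$ of a clause $\kappa : \forall \underline{x}.\, \underline{D} \Rightarrow C$ via $C \sim_\gamma A_i$, and $\{\underline{B}\} = \{\gamma A_1, \dots, \gamma D_1, \dots, \gamma D_l, \dots, \gamma A_n\}$. The idea is to run LP-Struct on $\{\underline{A}\}$: by productivity the $\to^\mu$ phase terminates at some normal form $\{\underline{A'}\}$, and since the clause $\kappa$ is (after composing the unifier) applicable by term-matching to the $\gamma$-instance of $A_i$, non-overlapping ensures that LP-Struct's choices are forced and that the substitutional step $\hookrightarrow_{\kappa,\gamma}$ is exactly the one witnessed by $C \sim_\gamma A_i$. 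After performing $\to^\mu$, then $\hookrightarrow^1$ with substitution $\gamma$, then another $\to^\mu$ to consume the newly-matchable atom(s), we land at some $\{\underline{C}\}$; the claim $\Phi \vdash \{\underline{B}\} \to^* \{\underline{C}\}$ then follows because $\{\underline{B}\}$ is precisely the result of applying $\gamma$ and expanding $A_i$ by one unification step, and the remaining $\to^\mu$ fragments that LP-Struct performed (but which $\leadsto$ did as part of the single unification plus normalization) are term-matching steps also available from $\{\underline{B}\}$. Here I would lean on the single-step confluence-style lemmas implicit in the proof of Theorem~\ref{ortho:equiv}: under non-overlapping, $\to$ is locally confluent and term-matching steps commute appropriately with substitution.

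For part~(2), suppose $\Phi \vdash \{\underline{A}\} (\to^\mu \cdot \hookrightarrow^1)^* \{\underline{B}\}$. I would induct on the number of $\hookrightarrow^1$ steps in this path. Each $\to$ step is a special case of $\leadsto$ (a term-matcher is in particular a unifier, with the extra bindings absorbed into the arbitrary $\gamma'$ component), and each $\hookrightarrow$ step is, by definition, a $\leadsto$ step that happens not to expand the body; more precisely, in a non-overlapping productive program, the combination $\to^\mu \cdot \hookrightarrow$ applied at $A_i$ can be rearranged into $\leadsto$-steps because the term-matching prefix only resolves atoms whose heads are uniquely determined, and the substitutional step then corresponds to the unification that $\leadsto$ would have performed directly on the original (non-instantiated) atom. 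So each $\to^\mu \cdot \hookrightarrow^1$ macro-step maps to a finite sequence of $\leadsto$ steps, and concatenating gives $\Phi \vdash \{\underline{A}\} \leadsto^* \{\underline{B}\}$.

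The main obstacle I anticipate is the bookkeeping of substitutions and the precise commutation between term-matching and unification in part~(1): one must show that the substitution $\gamma$ produced by the $\hookrightarrow$ step of LP-Struct, together with the matchers accumulated along the surrounding $\to^\mu$ phases, composes to exactly (up to renaming and the harmless $\gamma'$ slack) the unifier produced by the single $\leadsto$ step, and that the residual atoms coincide. This is where non-overlapping is essential — it rules out the situation of Example~\ref{overlap} where LP-Struct commits to a clause by term-matching that LP-Unif would sidestep — and productivity is essential to guarantee that the $\to^\mu$ phases are finite so that the macro-steps are well-defined. I would isolate the commutation as a standalone lemma (``if $C \sim_\gamma A$ then $\Phi \vdash \{A\} \to^\mu \cdot \hookrightarrow_{\kappa,\gamma} \cdot \to^\mu \{\underline{C}\}$ and $\Phi \vdash \{\gamma A_1,\dots,\gamma D_1,\dots\} \to^* \{\underline{C}\}$'') and prove it by analyzing the structure of $A$ relative to $C$, peeling off the part of $A$ already congruent to an instance of $C$ (handled by $\to$) from the part requiring instantiation of $A$'s own variables (handled by $\hookrightarrow$).
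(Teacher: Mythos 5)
Your overall strategy coincides with the paper's: part~(1) is proved by simulating the single $\leadsto$ step with a $\hookrightarrow\cdot\to$ pair at the selected atom and then normalising with $\to^\mu$ (which terminates by productivity), and part~(2) by collapsing each $\hookrightarrow$ together with the term-matching step that immediately follows it back into a single $\leadsto$ step, with non-overlapping forcing that step to use the same clause at the same atom, and with every remaining $\to$ step being a $\leadsto$ step because a matcher of a fresh clause head is a unifier that leaves the goal atoms untouched (the paper's Lemma~\ref{tm-to-unif}). So this is not a genuinely different route, but two points in your write-up would fail as literally stated and need repair.

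First, in part~(2) you assert that ``each $\hookrightarrow$ step is, by definition, a $\leadsto$ step that happens not to expand the body.'' This is false: $\hookrightarrow_{\kappa,\gamma}$ produces $\{\gamma A_1,\dots,\gamma A_i,\dots,\gamma A_n\}$, whereas every $\leadsto$ step replaces the selected atom by an instantiated clause body, so no sequence of $\leadsto$ steps reaches that multiset. The correct unit is $\hookrightarrow_{\kappa,\gamma}$ \emph{followed by} the $\to$ step that expands $\gamma A_i$ with the same $\kappa$ --- that composite equals one $\leadsto_{\kappa,\gamma}$ step, and non-overlapping is what guarantees the $\to$ step right after the $\hookrightarrow$ cannot use a different clause. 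This is why the paper regroups the path as $\to^\mu\cdot(\hookrightarrow\cdot\to^\mu)^*$ (Lemmas~\ref{struct-to-unif} and~\ref{I}) rather than your $(\to^\mu\cdot\hookrightarrow)^*$; with your grouping a segment ending in a bare $\hookrightarrow$ cannot be converted. Second, in part~(1) your decision to run the full initial $\to^\mu$ phase before the $\hookrightarrow$ forces you to prove a commutation of $\to$ with instantiation in order to get $\{\underline{B}\}\to^*\{\underline{C}\}$; this is provable (matching is stable under instantiation of the target), but you delegate it to ``confluence-style lemmas implicit in Theorem~\ref{ortho:equiv}'' which are not actually there and would have to be stated and proved. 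The paper sidesteps this entirely by case-splitting on whether $D\mapsto_\gamma A_i$: if not, it performs $\hookrightarrow_{\kappa,\gamma}\cdot\to_{\kappa}$ directly at $A_i$, landing \emph{exactly} on $\{B_1,\dots,B_m\}$, and then takes $\{C_1,\dots,C_l\}$ to be its $\to$-normal form; if so, the $\leadsto$ step is itself a $\to$ step and is absorbed into the $\to^\mu$ phase. Adopting that case split and that ordering makes your part~(1) close without any commutation lemma.
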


Note that the above theorem does not rely on termination of LP-Unif reductions and therefore establishes equivalence of LP-Unif and LP-Struct even for coinductive programs 
like
Stream of Example \ref{ex:str}, as long
as they are productive and non-overlapping. 
This effect of productivity has not been described in previous work.

\subsection{Realizability Transformation and LP-Struct}

Even when programs 
 are overlapping and unproductive (as e.g. the program of Example \ref{ex:conn}),  we would still like
 to obtain a meaningful execution behaviour for LP-Struct, especially if LP-Unif allows successful derivations for the programs.
Luckily, we already have a method to achieve that, it is the realizability transformation defined in Section~\ref{real:trans}:  

\begin{proposition}
  \label{trans:equiv}
 For any program $\Phi$, $F(\Phi)$ is productive and non-overlapping.
\end{proposition}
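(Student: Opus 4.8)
The plan is to verify the two claims — productivity and non-overlapping — separately, directly from the definition of the realizability transformation $F$ in Definition~\ref{real}. Recall that $F$ replaces each axiom $\kappa : \forall \underline{x}. A_1,\dots,A_m \Rightarrow B$ by $\kappa : \forall \underline{x}\,\forall \underline{y}. A_1[y_1],\dots,A_m[y_m] \Rightarrow B[f_\kappa(y_1,\dots,y_m)]$, with the $y_i$ fresh and distinct. The key structural observation is that every head in $F(\Phi)$ now carries, in its last argument position, a term whose top function symbol $f_\kappa$ is uniquely associated with the clause $\kappa$, and this symbol occurs nowhere else at the head position of any other clause. This single fact drives both properties.

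For \textbf{non-overlapping}: take two clauses of $F(\Phi)$ with heads $B[f_\kappa(y_1,\dots,y_m)]$ and $B'[f_{\kappa'}(y'_1,\dots,y'_{m'})]$. If $\kappa \neq \kappa'$, then $f_\kappa \neq f_{\kappa'}$, so no substitutions $\sigma,\delta$ can make $\sigma(B[f_\kappa(\ldots)]) \equiv \delta(B'[f_{\kappa'}(\ldots)])$, since the last arguments have distinct head symbols and substitution cannot rewrite a function symbol. If $\kappa = \kappa'$, the two clauses are literally the same clause (the transformation is deterministic on each axiom), and the non-overlapping condition as stated quantifies over pairs of formulas in $\Phi$; I should check whether the intended reading excludes a clause overlapping with itself — if not, I note that a clause trivially "overlaps" itself and this is harmless, or (cleaner) that the only relevant overlaps are between \emph{distinct} clauses and those are ruled out. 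I would phrase the argument so this edge case is explicitly dispatched.

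For \textbf{productivity}: I must show every $\to$-reduction (LP-TM, i.e.\ term-matching only) in $F(\Phi)$ is finite. The idea is a well-founded measure argument on the last-argument terms of the atoms in the goal. When $C \mapsto_\sigma A_i$ for a head $C = B[f_\kappa(y_1,\dots,y_m)]$ of $F(\Phi)$ and a goal atom $A_i = P(t_1,\dots,t_k)$ — where $A_i$ already has the extra argument slot, so its last argument is some term $s$ — term-matching forces $\sigma$ to map $f_\kappa(y_1,\dots,y_m)$ onto $s$, which is only possible if $s$ has the form $f_\kappa(s_1,\dots,s_m)$, and then $\sigma(y_j) = s_j$. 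The new goal atoms $\sigma B_1[y_1],\dots,\sigma B_m[y_m]$ carry last arguments $s_1,\dots,s_m$, each a \emph{proper subterm} of $s$. Hence each LP-TM step strictly decreases the multiset of (sizes of) last-argument terms of the goal under the multiset extension of the subterm order, which is well-founded; therefore no infinite $\to$-reduction exists. The main obstacle — and the step I would be most careful about — is handling the case where the relevant last argument of the goal atom is a \emph{variable}: then no head of $F(\Phi)$ term-matches it (a variable cannot be matched by $f_\kappa(\underline{y})$ in the matching direction), so the reduction is simply stuck there, which is consistent with termination; I need to confirm that term-matching, unlike unification, indeed fails in this direction, and that queries introduced with a fresh variable in the last slot (as in Theorem~\ref{th8}) only ever produce such variable-or-$f_\kappa$-rooted terms, never something that could loop. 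Packaging this as a decreasing measure plus a "stuck on variables" case gives the full argument.
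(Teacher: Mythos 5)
Your proposal is correct and follows essentially the same route as the paper's own proof: productivity via a well-founded (subterm-based) decreasing measure on the extra last argument, which strictly shrinks from head to body in every transformed clause, and non-overlapping because every head is guarded by the clause-unique symbol $f_\kappa$. Your additional care about the variable-in-last-slot case and the self-overlap reading of the definition just fills in details the paper leaves implicit.
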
 
\begin{proof}
First, we need to show $\to$-reduction is strongly normalizing in $(F(\Phi), \to)$. By Definition \ref{real}, we can establish a decreasing measurement(from right to left, using the strict subterm relation) for each rule in $F(\Phi)$, since the last argument in the head of each rule is strictly larger than the ones in the body. Then, non-overlapping property is due to the fact that all the heads of the rules in $F(\Phi)$ will be \textit{guarded} by the unique function symbol in Definition \ref{real}.
\end{proof}


\begin{corollary}
\

 $F(\Phi) \vdash \{A_1,..., A_n\} (\to^\mu \cdot \hookrightarrow^1)^* \{B_1,..., B_m\}$ iff $F(\Phi) \vdash \{A_1,..., A_n\} \leadsto^* \{B_1,..., B_m\}$.  
\end{corollary}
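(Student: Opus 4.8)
The plan is to read this corollary as nothing more than the instantiation of Theorem~\ref{prod-non-overlap} at the transformed program $F(\Phi)$, whose side conditions are exactly what Proposition~\ref{trans:equiv} supplies. So the first step is: invoke Proposition~\ref{trans:equiv} to learn that $F(\Phi)$ is productive and non-overlapping, and observe that this is precisely the hypothesis needed for Theorem~\ref{prod-non-overlap} (and, for that matter, for Theorem~\ref{ortho:equiv}) to apply to $F(\Phi)$ verbatim.

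For the direction ``$(\to^\mu \cdot \hookrightarrow^1)^*$ implies $\leadsto^*$'', there is essentially nothing further to do: it is part~2 of Theorem~\ref{prod-non-overlap} read off for $F(\Phi)$. For the converse direction, I would induct on the length of the LP-Unif derivation $F(\Phi) \vdash \{A_1,\dots,A_n\} \leadsto^* \{B_1,\dots,B_m\}$, simulating each single $\leadsto$ step by a finite LP-Struct segment via part~1 of Theorem~\ref{prod-non-overlap}. Productivity of $F(\Phi)$ guarantees that every $\to^\mu$ occurring in that segment is finite, so the simulation never stalls or diverges prematurely.

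The point that needs care — and the main obstacle in making the chaining watertight — is the ``overshoot'' in part~1 of Theorem~\ref{prod-non-overlap}: a single $\leadsto$ step from $\{A_1,\dots,A_n\}$ to $\{B_1,\dots,B_m\}$ is matched only by an LP-Struct path to some $\{C_1,\dots,C_l\}$ with $\{B_1,\dots,B_m\} \to^* \{C_1,\dots,C_l\}$, i.e.\ LP-Struct runs ahead of LP-Unif by a block of term-matching steps. To compose steps I need a commutation/postponement fact — if $\{\underline B\} \to^* \{\underline C\}$ and $\{\underline B\} \leadsto \{\underline B'\}$ then $\{\underline C\} \leadsto^* \{\underline C'\}$ with $\{\underline B'\} \to^* \{\underline C'\}$ — which is exactly where non-overlapping of $F(\Phi)$ is used (the head that a $\to$ step matches is the unique one any $\leadsto$ step could use, because all heads of $F(\Phi)$ are guarded by distinct function symbols $f_\kappa$), and this is the same reasoning already carried out in the proof of Theorem~\ref{prod-non-overlap}. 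Alternatively, when the target $\{B_1,\dots,B_m\}$ is a $\to$-normal form — the case of practical interest, in particular $\emptyset$, where the corollary specialises to ``LP-Struct succeeds on $F(\Phi)$ iff LP-Unif does'' — the chaining can be avoided altogether by appealing directly to Theorem~\ref{ortho:equiv}, which for non-overlapping programs already equates the two reduction paths up to corresponding normal forms.
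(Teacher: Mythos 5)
Your proof takes exactly the same route as the paper, whose entire argument for this corollary is the single line ``By Theorem \ref{prod-non-overlap} and Theorem \ref{trans:equiv}'': instantiate the equivalence theorem at $F(\Phi)$ once Proposition \ref{trans:equiv} has supplied productivity and non-overlapping. The ``overshoot'' issue you flag when chaining part~1 of Theorem \ref{prod-non-overlap} is a genuine subtlety that the paper's one-line citation silently elides, so your commutation/postponement patch (or the retreat to Theorem \ref{ortho:equiv} when the target is a normal form, e.g.\ $\emptyset$) is a legitimate tightening of the same argument rather than a departure from it.
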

\begin{proof}
  By Theorem \ref{prod-non-overlap} and Theorem \ref{trans:equiv}. 
\end{proof}
\begin{example}
\label{ex:conn:real1}
For the program in Example \ref{ex:conn:real0}, the query $\mathrm{Connect}(x, y, u)$ can be reduced by LP-Struct successfully:   
 
  \begin{center}
{\footnotesize  
$F(\Phi) \vdash \{\mathrm{Connect}(x, y, u)\} \hookrightarrow_{\kappa_1, [x/x_1, y/z_1, f_{\kappa_1}(u_3, u_4)/u]} \{\mathrm{Connect}(x, y,f_{\kappa_1}(u_3, u_4))\} \to_{\kappa_1} \{\mathrm{Connect}(x, y_1, u_3), \mathrm{Connect}(y_1, y, u_4)\}$

$\hookrightarrow_{\kappa_2, [c_{\kappa_2}/u_3,\mathrm{node_1}/x, \mathrm{node_2}/y_1, \mathrm{node_1}/x_1, b/z_1, f_{\kappa_1}(c_{\kappa_2}, u_4)/u]} \{\mathrm{Connect}(\mathrm{node_1},\mathrm{node_2}, c_{\kappa_2}), \mathrm{Connect}(\mathrm{node_2}, y, u_4)\} \to_{\kappa_2} \{\mathrm{Connect}(\mathrm{node_2}, y, u_4)\}$

$\hookrightarrow_{\kappa_3, [c_{\kappa_3}/u_4, c_{\kappa_2}/u_3, \mathrm{node_3}/y, \mathrm{node_1}/x, \mathrm{node_2}/y_1,\mathrm{node_1}/x_1, \mathrm{node_3}/z_1, f_{\kappa_1}(c_{\kappa_2},c_{\kappa_3})/u]}  \{\mathrm{Connect}(\mathrm{node_2}, \mathrm{node_3}, c_{\kappa_3})\}  \to_{\kappa_3} \emptyset $}
\end{center}
   
\noindent Note that the answer for $u$ is $f_{\kappa_1}(c_{\kappa_2},c_{\kappa_3})$, which is the first order term representation of the proof of $ \ \Rightarrow \mathrm{Connect}(\mathrm{node}_1, \mathrm{node}_3)$. 
\end{example}

Realizability transformation uses the extra argument as decreasing measurement
in the program to achieve termination of $\to$-reduction.  At the same time this extra argument makes the program non-overlapping.  
Realizability 
transformation does not modify the proof-theoretic meaning and the execution behaviour of LP-Unif. 
The next example shows that not every transformation technique for obtaining structurally decreasing LP-TM reductions has such properties:

\begin{example} Consider the following program:

{\footnotesize
\begin{center}
\noindent $\kappa_1 : \ \Rightarrow P(\mathrm{Int})$  

\noindent $\kappa_2 : \forall x . P(x), P(\mathrm{List}(x)) \Rightarrow P(\mathrm{List}(x))$
\end{center} 
}
\noindent It is a folklore method to add a structurally decreasing argument as a  measurement to ensure finiteness of $\to^\mu$.

{\footnotesize
\begin{center}

\noindent $\kappa_1 : \ \Rightarrow P(\mathrm{Int}, 0)$  

\noindent $\kappa_2 : \forall x . \forall y . P(x, y), P(\mathrm{List}(x), y) \Rightarrow P(\mathrm{List}(x), \mathrm{s}(y))$
\end{center} 
} 
\noindent We denote the above program as $\Phi'$. Indeed with the measurement we add, the term-matching reduction in $\Phi'$ will be finite. But the reduction for query $P(\mathrm{List}(\mathrm{Int}), z)$ using unification will fail: 

{\footnotesize
\begin{center}
$\Phi' \vdash \{P(\mathrm{List}(\mathrm{Int}), z) \}\leadsto_{\kappa_2, [ \mathrm{Int}/x,\mathrm{s}(y_1)/z]} \{P(\mathrm{Int}, y_1), P(\mathrm{List}(\mathrm{Int}), y_1)\}\leadsto_{\kappa_2, [ 0/y_1, \mathrm{Int}/x,\mathrm{s}(0)/z]} \{P(\mathrm{List}(\mathrm{Int}), 0)\} \not \leadsto$ 
\end{center}
}
\noindent However, the query $P(\mathrm{List}(\mathrm{Int}))$ on the original program using unification reduction will diverge. Divergence and failure are operationally different. Thus
adding arbitrary measurement may modify the execution behaviour of a program (and hence the meaning of the program). In contrast, 
by Theorems~\ref{th6}-\ref{th8},
realizability transformation does not modify the execution behaviour of unification reduction.
\end{example}

\begin{example}
Consider the following non-productive and non-overlapping program and its version after the realizability transformation:

\begin{center}
\emph{Original program:}  $\kappa : \forall x . P(x) \Rightarrow P(x)$\\
\emph{After transformation:}  $\kappa : \forall x . \forall u. P(x, u) \Rightarrow P(x, f_\kappa(u))$
\end{center}

\noindent Both LP-Struct and LP-Unif will diverge for the queries $P(x), P(x, y)$ in both original and transformed versions. LP-Struct reduction diverges for different reasons in the two cases, one is due to divergence of
$\to$-reduction:

\noindent $\Phi \vdash \{P(x)\} \to \{P(x)\} \to \{P(x)\} ...$

\noindent The another is due to $\hookrightarrow$-reduction:

\noindent $\Phi \vdash \{P(x, y)\} \hookrightarrow \{P(x, f_k(u))\} \to \{P(x, u)\} \hookrightarrow \{P(x, f_k(u'))\} \to \{P(x, u')\} ...$

 Note that a single step of LP-Unif reduction for the original program corresponds to infinite steps of term-matching reduction in LP-Struct. For the transformed version, a single step of LP-Unif reduction corresponds to finite steps of LP-Struct reduction, which is exactly the correspondence we were looking for.
\end{example}

\section{Conclusions and Future Work}\label{concl}


  

We proposed a type system that gives a proof theoretic interpretation for LP: Horn formulas correspond to the notion of type, and a successful query yields a first order proof term. The type system also provided us with  a precise tool to show that realizability transformation preserves both proof-theoretic meaning of the program and the operational behaviour of LP-Unif. 


We formulated S-resolution as LP-Struct reduction, which can be seen as a reduction strategy that combines term-matching reduction with substitutional reduction.
This formulation allowed us to study the operational relation between LP-Struct and LP-Unif. 
The operational equivalence of LP-Struct and LP-Unif is by no means obvious.
Previous work (\cite{JKK15,komendantskaya2014})
only gives soundness and completeness of LP-Struct with respect to the Herbrand models.
 We identified that productivity and non-overlapping are essential for showing their operational equivalence.


Realizability transformation proposed here ensures that the resulting programs are productive and non-overlapping. 
It  preserves the proof-theoretic meaning of the program, in a formally defined sense of Theorems~\ref{th6}-\ref{th8}.
It is general, applies to any logic program, and can be easily mechanised. Finally, it allows to automatically record the proof content in the course of reductions, as Theorem~\ref{th7} establishes, which helps to prove completeness of LP-Unif (Theorem \ref{thm:complete}).

With the proof system for LP-reductions we proposed, we are planning to further investigate the interaction of LP-TM/Unif/Struct with typed functional languages. We expect to find a tight connection
between our work and the type class inference, cf.~\cite{wadler1989make,jones2003qualified}.

  In the context of type class inference~\cite{wadler1989make,jones2003qualified}, the infinite term-matching behaviour seems pervasive. The example below specifies a possible equality instance declaration for 
nested datatype such as\\ \texttt{data Bush a = Nil | Con a (Bush (Bush a))}:

 {\footnotesize
\begin{center}
\begin{tabular}{lll}
 $\kappa_1 : $ & & $ \mathrm{Eq}(x), \mathrm{Eq}(\mathrm{Bush}(\mathrm{Bush}(x))) \Rightarrow \mathrm{Eq}(\mathrm{Bush} (x))$
\\
 $\kappa_2 : $  & &  $ \Rightarrow \mathrm{Eq}(\mathrm{Char})$
\end{tabular}
\end{center}
}

\noindent Here $\mathrm{Bush}$ is a function symbol, $\mathrm{Char}$ is a constant and $x$ is variable. 
 Consider the query $\mathrm{Eq}(\mathrm{Bush}(\mathrm{Char}))$, both LP-Unif and LP-Struct will generate an infinite reduction path by repeatedly applying $\kappa_1$. Using the realizability transformation, we can obtain a well-behaved (productive) program:




 {\footnotesize
\begin{center}
\begin{tabular}{lll}
 $\kappa_1 : $ & & $ \mathrm{Eq}(x, y_1), \mathrm{Eq}(\mathrm{Bush}(\mathrm{Bush}(x)), y_2)\Rightarrow \mathrm{Eq}(\mathrm{Bush} (x), f_{\kappa_1}(y_1, y_2))$
\\
 $\kappa_2 : $  & &  $ \Rightarrow \mathrm{Eq}(\mathrm{Char}, \mathrm{c}_{\kappa_2})$

\end{tabular}
\end{center}
}

 \noindent The substitution for $u$ in the query $\mathrm{Eq}(\mathrm{Bush}(\mathrm{Char}), u)$ will be an infinite term. But we need a finite representation for such infinite term
 to construct a dictionary. Such coinductive dictionary construction is the subject of our further investigations. We would also like to investigate generalizing the type-theoretic approach from Horn formulas to implicational
intuitionistic formulas, the type system in this case will correspond to a version of simply type lambda calculus.

\bibliographystyle{plain}
 
\bibliography{tm-lp} 
\newpage
\appendix

\section{Proof of Theorem \ref{real:sn} and \ref{fst}}
We are going to prove a nontrival property about the type system that we just set up. 
The proof is a simplification of Tait-Girard's reducibility method. 

\begin{definition}[Reducibility Set]
Let $N$ denotes the set of all strong normalizing proof terms. We define reducibility set $\mathsf{RED}_{F}$ by induction on structure of $F$: 

  \begin{itemize}
  \item $p \in \mathsf{RED}_{A_1,..., A_n \Rightarrow B}$ with $n \geq 0$ iff for any $p_i \in N$, $p \ p_1 \ ... \ p_n \in N$. 
  \item $p \in \mathsf{RED}_{\forall \underline{x}. A_1,..., A_n \Rightarrow B}$ iff $p \in \mathsf{RED}_{A_1,..., A_n \Rightarrow B}$.
  \end{itemize}
\end{definition}

\begin{lemma}
  \label{irr}
  $\mathsf{RED}_{\underline{A} \Rightarrow B} = \mathsf{RED}_{\phi \underline{A} \Rightarrow \phi B}$.
\end{lemma}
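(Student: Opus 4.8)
The plan is to observe that $\mathsf{RED}_{\underline{A}\Rightarrow B}$ does not actually depend on the atomic formulas $\underline{A}$ and $B$ at all, but only on the number $n$ of premise formulas. Indeed, by the first clause of the definition, $p \in \mathsf{RED}_{A_1,\dots,A_n\Rightarrow B}$ holds \emph{iff} for all $p_1,\dots,p_n \in N$ we have $p\,p_1\cdots p_n \in N$ --- a condition in which the formulas $A_i$ and $B$ occur only through their count $n$. So no genuine induction on the structure of $F$ is needed here; the whole point is that the reducibility predicate was deliberately set up to track only the implicational skeleton of a Horn formula.

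First I would apply the substitution $\phi$ to the Horn formula $A_1,\dots,A_n\Rightarrow B$: the result is $\phi A_1,\dots,\phi A_n\Rightarrow \phi B$, which is again a Horn formula with exactly $n$ premises. Hence, unfolding the definition on both sides,
\[
\mathsf{RED}_{\underline{A}\Rightarrow B} \;=\; \{\, p \mid \forall p_1,\dots,p_n\in N.\; p\,p_1\cdots p_n\in N \,\} \;=\; \mathsf{RED}_{\phi\underline{A}\Rightarrow \phi B\,},
\]
so the two sets literally coincide. If the formula in question carries an outer quantifier $\forall\underline{x}$, one first peels it off using the second clause of the definition (which leaves the reducibility set unchanged), applies $\phi$ to the quantifier-free body under the usual capture-avoidance convention so that $\phi$ commutes with stripping $\forall\underline{x}$, and then the computation above applies verbatim.

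There is essentially no obstacle here: the lemma is an immediate consequence of the fact that the type system is presented in Curry style, so that instantiation of first-order quantifiers never touches proof terms, and that $\mathsf{RED}$ was defined only in terms of applicative behaviour against strongly normalizing arguments. The only point requiring a moment's care is the routine bookkeeping for the $\forall\underline{x}$ case; the core argument is a one-line unfolding of definitions.
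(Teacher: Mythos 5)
Your proof is correct and matches the paper's intent: the paper states this lemma without proof precisely because, as you observe, $\mathsf{RED}_{A_1,\dots,A_n\Rightarrow B}$ depends only on the number $n$ of premises and the set $N$, both of which are unchanged by a first-order substitution $\phi$. The one-line unfolding of the definition is exactly the intended argument.
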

\begin{lemma}
  If $p \in \mathsf{RED}_F$, then $p \in N$. 
\end{lemma}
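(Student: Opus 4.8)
The plan is to prove the lemma by a direct case split on the shape of $F$, using the standard reducibility trick of feeding $p$ fresh proof-term variables. Two elementary facts will be needed up front. First, every proof-term variable $a$ is a $\to_\beta$-normal form, hence $a \in N$. Second, a "context-lifting" observation: if $q\ q_1\ ...\ q_k \in N$ then $q \in N$. This holds because $\to_\beta$ is a congruence, so every reduction $q \to_\beta q'$ internal to $q$ lifts to $q\ q_1\ ...\ q_k \to_\beta q'\ q_1\ ...\ q_k$; consequently an infinite reduction sequence out of $q$ would produce an infinite reduction sequence out of $q\ q_1\ ...\ q_k$, contradicting strong normalization.

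With these in hand, suppose $p \in \mathsf{RED}_F$. If $F = A_1, ..., A_n \Rightarrow B$ with $n \geq 0$, pick fresh proof-term variables $a_1, ..., a_n$; since each $a_i \in N$, the defining clause for $\mathsf{RED}_{A_1,...,A_n \Rightarrow B}$ yields $p\ a_1\ ...\ a_n \in N$, and then the context-lifting fact gives $p \in N$. (For $n = 0$ this is immediate, since $p\ a_1 \cdots a_n$ is just $p$, so $\mathsf{RED}_{\Rightarrow B} \subseteq N$ outright.) If instead $F = \forall \underline{x}.\ A_1, ..., A_n \Rightarrow B$, then by definition $\mathsf{RED}_F = \mathsf{RED}_{A_1,...,A_n \Rightarrow B}$, so $p$ lies in the latter set and the previous case applies verbatim. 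This exhausts the grammar of Horn formulas, so the lemma follows.

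I do not expect a genuine obstacle here: the statement is one of the routine "CR1/adequacy" properties of a Tait–Girard reducibility argument, and the only place that needs even a sentence of justification is the context-lifting claim, which is a one-line congruence argument. The point of isolating this lemma is that, together with Lemma~\ref{irr} and the forthcoming adequacy argument (every derivable $e : F$ satisfies $e \in \mathsf{RED}_F$), it delivers Strong Normalization (Theorem~\ref{real:sn}), since $e \in \mathsf{RED}_F \subseteq N$.
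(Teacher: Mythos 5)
Your proof is correct and follows essentially the same route as the paper's: a case analysis on the shape of $F$ (the paper phrases it as induction on $F$, but since the $\forall$ clause immediately reduces to the unquantified one, the two are the same argument). The only difference is that you explicitly justify the step from $p\ a_1\ \cdots\ a_n \in N$ to $p \in N$ via the congruence/context-lifting observation, which the paper leaves implicit.
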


\begin{proof}
By Induction on $F$:
\begin{itemize}      
  \item Base Case: $F$ is of the form $A_1 , ... , A_n \Rightarrow B$. By definition, $p\ p_1\ ... \ p_n \in N$ for any $p_i \in N$. Thus $p \in N$. 
    \item Step Case: $F$ is of the form $\forall \underline{x} . A_1 , ... , A_n \Rightarrow B$. $p \in \mathsf{RED}_{\forall \underline{x} . A_1 , ... , A_n \Rightarrow B}$ implies $p \in \mathsf{RED}_{A_1 , ... , A_n \Rightarrow B}$. Thus by IH, $p \in N$. 
\end{itemize}
\end{proof}

\begin{lemma}
  \label{strong}
  If $e : F$, $e \in \mathsf{RED}_{F}$. 
\end{lemma}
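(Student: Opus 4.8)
The plan is to prove Lemma~\ref{strong} by induction on the derivation of $e : F$ in the type system of Definition~\ref{proofsystem}, strengthening the statement so that the induction goes through. The natural strengthening is the usual one for reducibility arguments: instead of proving directly that $e : F$ implies $e \in \mathsf{RED}_F$, I would prove that whenever $a_1 : G_1, \dots, a_k : G_k \vdash e : F$ (here reading the $\lambda$-bound variables as a context) and $q_1 \in \mathsf{RED}_{G_1}, \dots, q_k \in \mathsf{RED}_{G_k}$, then $[q_1/a_1, \dots, q_k/a_k]e \in \mathsf{RED}_F$. Taking the empty context recovers the stated lemma. Since the \emph{gen} and \emph{inst} rules do not touch proof terms, and by Lemma~\ref{irr} the reducibility sets are stable under first-order substitution $\phi$ (so $\mathsf{RED}_{\forall \underline{x}. \underline{A} \Rightarrow B} = \mathsf{RED}_{\underline{A}\Rightarrow B} = \mathsf{RED}_{[\underline t/\underline x](\underline A \Rightarrow B)}$), those two cases are immediate from the induction hypothesis.

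The two cases that do work are the \emph{axiom} rule and the \emph{cut} rule. For \emph{axiom}, $e = \kappa$ with $\kappa : \forall \underline{x}. \underline A \Rightarrow B \in \Phi$; I must show $\kappa \in \mathsf{RED}_{\underline A \Rightarrow B}$, i.e. that $\kappa\, p_1 \cdots p_n \in N$ for all $p_i \in N$. This is a small lemma in its own right: an application headed by a constant $\kappa$ applied to strongly normalizing arguments is strongly normalizing, because any infinite reduction sequence from $\kappa\, p_1 \cdots p_n$ must (there being no $\beta$-redex at the head) reduce inside some $p_i$, contradicting $p_i \in N$; more carefully one argues by induction on $\sum |p_i|_{\mathrm{SN}}$, the sum of the maximal reduction lengths. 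For the \emph{cut} rule, $e_1 : \underline A \Rightarrow D$ and $e_2 : \underline B, D \Rightarrow C$ give $e = \lambda \underline a. \lambda \underline b.\, (e_2\, \underline b)\,(e_1\, \underline a)$ of type $\underline A, \underline B \Rightarrow C$. I need: for all strongly normalizing $\underline r, \underline s$ (matching $\underline A, \underline B$), $e\, \underline r\, \underline s \in N$. After the obvious $\beta$-reductions this head-reduces to $(e_2\, \underline s)\,(e_1\, \underline r)$. By the induction hypothesis $e_1 \in \mathsf{RED}_{\underline A \Rightarrow D}$, so $e_1\, \underline r \in N$; and $e_2 \in \mathsf{RED}_{\underline B, D \Rightarrow C}$, so $e_2\, \underline s\,(e_1\, \underline r) \in N$ since $e_1\,\underline r \in N$. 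Then I invoke the standard closure-under-head-expansion fact for $N$: if a term head-$\beta$-reduces to a strongly normalizing term and the contracted redexes' arguments are themselves strongly normalizing, the original term is strongly normalizing. Concretely, $e\,\underline r\,\underline s \to_\beta^* (e_2\,\underline s)(e_1\,\underline r)$ and the substituted subterms $\underline r, \underline s$ are in $N$, so $e\,\underline r\,\underline s \in N$.

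The main obstacle, and the only part that needs genuine care rather than bookkeeping, is this closure-under-head-expansion lemma for $N$ in the setting where we reduce \emph{under} the outer $\lambda$-abstractions and apply several arguments at once. One has to verify that every infinite reduction sequence out of $\lambda\underline a.\lambda\underline b.(e_2\,\underline b)(e_1\,\underline a)$ applied to $\underline r,\underline s$ can be pushed into one of $e_1, e_2, \underline r, \underline s$ — each of which is strongly normalizing (the $e_i$ by IH applied with the identity substitution, after noting the $e_i$ have no free occurrences of $\underline a,\underline b$) — which is the classical argument that $\beta$-reduction commutes suitably with substitution. I would isolate this as an auxiliary lemma: ``if $p, q_1,\dots,q_m \in N$ then $[q_1/a_1,\dots]p \in N$'', proved by a lexicographic induction on $(|p|_{\mathrm{SN}} + \sum|q_i|_{\mathrm{SN}}, \mathrm{size}(p))$, and a companion ``if $[p'/a]p\, \underline q \in N$ and $p' \in N$ then $(\lambda a.p)\,p'\,\underline q \in N$''. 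Everything else is routine, and the remaining theorems (Strong Normalization, and normalization to first-order proof terms, Theorems~\ref{real:sn} and~\ref{fst}) then follow by combining Lemma~\ref{strong} with the shape analysis of normal forms already sketched in Lemma~\ref{fst:lambda}.
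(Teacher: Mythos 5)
Your overall strategy coincides with the paper's: induction on the typing derivation, with \emph{gen}/\emph{inst} discharged via Lemma~\ref{irr} (reducibility is insensitive to first-order substitution), the \emph{axiom} case reduced to ``a constant applied to strongly normalizing arguments is strongly normalizing,'' and the \emph{cut} case reduced to showing $(e_2\,\underline{s})(e_1\,\underline{r}) \in N$ from the reducibility of $e_1,e_2$ together with closure of $N$ under the relevant head expansion. The paper proves that last closure step by an inner induction on the triple of maximal reduction lengths $(\mu((e_2\,\underline{b})(e_1\,\underline{a})),\mu(\underline{p}),\mu(\underline{q}))$, which is the same device as your lexicographic induction; your context-strengthening of the statement is harmless but also unnecessary here, since every derivable $e:F$ in this system is a closed proof term.

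One of your proposed auxiliary lemmas is, however, false as stated: ``if $p, q_1,\dots,q_m \in N$ then $[q_1/a_1,\dots]p \in N$'' fails for arbitrary proof terms --- take $p = a\,a$ and $q_1 = \lambda x.\,x\,x$, both strongly normalizing, while $[q_1/a]p$ is the divergent term $(\lambda x.\,x\,x)(\lambda x.\,x\,x)$. The proposed lexicographic induction breaks exactly where the substitution creates a new redex. This is the classical reason the reducibility method exists at all: $N$ is not closed under substitution, and one must route the argument through $\mathsf{RED}$ instead. Fortunately your main line of argument does not actually depend on this lemma --- you obtain $(e_2\,\underline{s})(e_1\,\underline{r}) \in N$ directly from $e_1 \in \mathsf{RED}_{\underline{A}\Rightarrow D}$ and $e_2 \in \mathsf{RED}_{\underline{B},D\Rightarrow C}$, and then only need the (correct) companion head-expansion lemma. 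So the proof goes through once the false auxiliary is deleted and the head-expansion step is justified without it, which is precisely what the paper's inner induction does.
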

\begin{proof}
  By induction on derivation of $e : F$.
  \begin{itemize}
  \item Base Case:

    \begin{tabular}{l}
\infer{\kappa : \forall \underline{x} . \Rightarrow B}{}
\end{tabular}

This case $\kappa \in N$. 
  \item Base Case:

    \begin{tabular}{l}
\infer{\kappa : \forall \underline{x} . A_1,..., A_n \Rightarrow B}{}
\end{tabular}

Since $\kappa$ is a constant, thus for any $p_i \in N$, $\kappa\ p_1\ ...\ p_n \in N$. So
$\kappa \in \mathsf{RED}_{ A_1,..., A_n \Rightarrow B}$, thus $\kappa \in \mathsf{RED}_{\forall \underline{x}. A_1,..., A_n \Rightarrow B}$.
   \item Step Case: 
\

     \begin{tabular}{l}
\infer[cut]{\lambda \underline{a} . \lambda \underline{b} . (e_2\ \underline{b})\ (e_1\ \underline{a}) : \underline{A}, \underline{B} \Rightarrow C}{e_1 : \underline{A} \Rightarrow D & e_2 : \underline{B}, D \Rightarrow C}
\end{tabular}

We need to show $\lambda \underline{a} . \lambda \underline{b} . (e_2\ \underline{b})\ (e_1\ \underline{a}) \in \mathsf{RED}_{  \underline{A},  \underline{B} \Rightarrow   C}$. By IH, we know that $(e_2\ \underline{b})\ (e_1\ \underline{a}) \in N$. Let $p_1 \in N,...,p_n \in N, q_1 \in N,..., q_m \in N$. We
are going to show for any $e$ with $(\lambda \underline{a} . \lambda \underline{b} . (e_2\ \underline{b})\ (e_1\ \underline{a})) \ \underline{p}\ \underline{q} \to_\beta e$, then $e \in N$. We proceed by induction on $(\mu((e_2\ \underline{b})\ (e_1\ \underline{a})), \mu(\underline{p}), \mu(\underline{q}))$, where $\mu$ is a function to get the length of the reduction path to normal form.
\begin{itemize}
\item Base Case: $(\mu((e_2\ \underline{b})\ (e_1\ \underline{a})), \mu(\underline{p}), \mu(\underline{q})) = (0,..., 0)$. 
The only reduction possible is $(\lambda \underline{a} . \lambda \underline{b} . (e_2\ \underline{b})\ (e_1\ \underline{a})) \ \underline{p}\ \underline{q} \to_\beta (e_2\ \underline{q})\ (e_1\ \underline{p})$. We know that $(e_2\ \underline{q})\ (e_1\ \underline{p}) \in N$.
\item Step Case: There are several possible reductions, but all will decrease $(\mu((e_2\ \underline{b})\ (e_1\ \underline{a})), \mu(\underline{p}), \mu(\underline{q}))$, thus we conclude that by induction hypothesis.
\end{itemize}
So $\lambda \underline{a} . \lambda \underline{b} . (e_2\ \underline{b})\ (e_1\ \underline{a}) \in N$. 
\item Step Case: 

  \begin{tabular}{l}
\infer[inst]{e : [\underline{t}/\underline{x}]F}{e : \forall \underline{x} . F}
\end{tabular}

By IH, we konw that $e \in \mathsf{RED}_{\forall \underline{x} . F}$, so by definition we know that $e \in \mathsf{RED}_{F}$. By Lemma \ref{irr}, $e \in \mathsf{RED}_{[\underline{t}/\underline{x}]F}$.
\item Step Case: 

  \begin{tabular}{l}
\infer[gen]{e: \forall \underline{x} . F}{e : F}
\end{tabular}

By IH, we know that $e \in \mathsf{RED}_{F}$, so we know that $e \in \mathsf{RED}_{\forall \underline{x} . F}$. 
  \end{itemize}
\end{proof}

\begin{theorem}[Strong Normalization]
  If $e : F$, then $e \in N$.
\end{theorem}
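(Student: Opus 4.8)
The plan is to prove strong normalization via the Tait--Girard reducibility method, exactly as the candidate definitions and lemmas in the excerpt suggest. The final statement ``if $e : F$, then $e \in N$'' follows immediately once we have established that every typed term lies in its reducibility set; so the real work is the sequence of auxiliary lemmas already laid out: first that $\mathsf{RED}_{\underline{A} \Rightarrow B} = \mathsf{RED}_{\phi\underline{A} \Rightarrow \phi B}$ (reducibility is insensitive to first-order substitution), then that $\mathsf{RED}_F \subseteq N$ for every $F$, and finally the main lemma that $e : F$ implies $e \in \mathsf{RED}_F$.

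First I would prove Lemma~\ref{irr} by a trivial induction on $F$: the clause for $\underline{A} \Rightarrow B$ quantifies only over arbitrary strongly normalizing arguments $p_i \in N$ and asks for $p\,p_1\cdots p_n \in N$, a condition that mentions the formula only through its $\Rightarrow$-skeleton, which $\phi$ does not touch; the $\forall$ clause just unfolds. Next I would show $\mathsf{RED}_F \subseteq N$ by induction on $F$: for $F = A_1,\dots,A_n \Rightarrow B$, instantiate the defining property with the term variables $a_i$ themselves (which are in $N$, being normal), obtaining $p\,a_1\cdots a_n \in N$, whence $p \in N$ since a reduct of $p$ would give a reduct of $p\,a_1\cdots a_n$; for the $\forall$ case, unfold and apply the IH.

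The heart of the argument is Lemma~\ref{strong}: $e : F$ implies $e \in \mathsf{RED}_F$, by induction on the typing derivation. The \emph{axiom} case is easy — a proof-term constant $\kappa$ applied to any $p_i \in N$ is strongly normalizing because $\kappa$ is a normal head with no redex of its own, so all reduction happens inside the $p_i$, which terminates. The \emph{gen} and \emph{inst} cases are immediate from the IH together with Lemma~\ref{irr} and the definition of $\mathsf{RED}$ at a $\forall$. The \emph{cut} case is the crux and I expect it to be the main obstacle: given $e_1 : \underline{A} \Rightarrow D$ and $e_2 : \underline{B}, D \Rightarrow C$, with $e_1, e_2$ in their reducibility sets, I must show $\lambda\underline{a}.\lambda\underline{b}.(e_2\,\underline{b})\,(e_1\,\underline{a})$ is in $\mathsf{RED}_{\underline{A},\underline{B}\Rightarrow C}$, i.e. that applying it to arbitrary $\underline{p},\underline{q} \in N$ yields a term in $N$. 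The standard move is a nested induction on the tuple of reduction lengths $(\mu((e_2\,\underline{b})(e_1\,\underline{a})), \mu(\underline{p}), \mu(\underline{q}))$ — these are all finite because the bodies and arguments are already known to be in $N$ — analysing each possible one-step reduct of $(\lambda\underline{a}.\lambda\underline{b}.(e_2\,\underline{b})(e_1\,\underline{a}))\,\underline{p}\,\underline{q}$: either it is the head $\beta$-contraction, which gives $(e_2\,\underline{q})(e_1\,\underline{p})$, a term we must separately argue is in $N$ from $e_1, e_2 \in \mathsf{RED}$ applied to the $N$-arguments $\underline{p},\underline{q}$; or it is a reduction inside a body or an argument, which strictly decreases the measure and is handled by the inner IH. Care is needed that substitution into $e_1, e_2$ does not create new redexes beyond those counted — here the Curry-style formulation helps, since the $a_i, b_j$ are not free in $e_1, e_2$, so $(e_2\,\underline{q})(e_1\,\underline{p})$ really is the only way the two pieces interact. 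Once Lemma~\ref{strong} is in hand, the theorem is a one-line consequence: $e : F$ gives $e \in \mathsf{RED}_F \subseteq N$.
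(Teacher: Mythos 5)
Your proposal is correct and follows essentially the same route as the paper: the same reducibility sets, the same auxiliary lemmas ($\mathsf{RED}$ is invariant under first-order substitution, $\mathsf{RED}_F \subseteq N$, and $e:F$ implies $e \in \mathsf{RED}_F$ by induction on the typing derivation), and even the same nested induction on the measure $(\mu((e_2\,\underline{b})(e_1\,\underline{a})), \mu(\underline{p}), \mu(\underline{q}))$ in the \emph{cut} case. Your treatment of $\mathsf{RED}_F \subseteq N$ (instantiating with variables and arguing that a reduct of $p$ would yield a reduct of $p\,a_1\cdots a_n$) is in fact slightly more explicit than the paper's.
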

\begin{proof}
  By Lemma \ref{strong}.
\end{proof}
\begin{definition}[First Orderness]
  We say $p$ is first order inductively:
  \begin{itemize}
  \item A proof term variable $a$ or proof term constant $\kappa$ is first order.
  \item if $n, n'$ are first order, then $n\ n'$ is first order.
  \end{itemize}
\end{definition}

\begin{lemma}
  \label{fo:sub}
  If $n, n'$ are first order, then $[n'/a]n$ is first order. 
\end{lemma}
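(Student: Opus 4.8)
The plan is to prove this by structural induction on the first-order proof term $n$, following the clauses of Definition (First Orderness). The term $n'$ stays fixed and first order throughout, so the only work is to check that the substitution operation $[n'/a](\cdot)$ respects each clause of the definition.

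First I would dispatch the atomic cases. If $n$ is a proof term constant $\kappa$, then $[n'/a]\kappa = \kappa$, which is first order by the first clause. If $n$ is a proof term variable $b$, split into two subcases: when $b \equiv a$ we have $[n'/a]b = n'$, which is first order by assumption; when $b \not\equiv a$ we have $[n'/a]b = b$, again first order by the first clause.

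For the inductive step, suppose $n = n_1\ n_2$ where $n_1$ and $n_2$ are first order. The induction hypothesis yields that $[n'/a]n_1$ and $[n'/a]n_2$ are first order. Since substitution commutes with application, $[n'/a](n_1\ n_2) = ([n'/a]n_1)\ ([n'/a]n_2)$, which is first order by the second clause. Because the grammar of first-order proof terms has no binders, there are no capture-avoidance complications, and these cases are exhaustive.

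I do not expect a genuine obstacle here: the statement is a routine property of substitution. The only point to be mildly careful about is that the induction should be read as an induction on the structure of $n$ as a first-order term (equivalently, on the derivation of ``$n$ is first order''), so that in the application case the induction hypothesis is available precisely for the immediate subterms $n_1$ and $n_2$.
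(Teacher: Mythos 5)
Your proof is correct; the paper in fact states Lemma~\ref{fo:sub} without proof, treating it as routine, and your structural induction on $n$ (constant, variable with the $b\equiv a$ / $b\not\equiv a$ split, and application using that substitution distributes over application) is exactly the standard argument the paper implicitly relies on. Your remark that the absence of binders in first-order proof terms rules out capture issues is the only point worth making explicit, and you made it.
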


\begin{lemma}
\label{FO}
  If $e : [\forall \underline{x}.] \underline{A} \Rightarrow B$, then either $e$ is a proof term constant or it is normalizable to the form $\lambda \underline{a}. n$, where $n$ is first order normal term. 
\end{lemma}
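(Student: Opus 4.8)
\textbf{Proof proposal for Lemma \ref{FO}.}
The plan is to argue by induction on the derivation of $e : [\forall \underline{x}.]\,\underline{A} \Rightarrow B$, but with a strengthened inductive hypothesis that also pins down the \emph{arity}: writing $\underline{A} = A_1,\dots,A_n$, I claim that either $e$ is a proof term constant, or $e$ $\beta$-reduces to $\lambda a_1.\cdots.\lambda a_n.\,n$ with $n$ a first-order $\beta$-normal proof term (so $n$ contains no $\lambda$, by Definition of first-orderness). Strong normalization (Theorem \ref{real:sn}) guarantees that every well-typed proof term has a normal form, so the phrase ``normalizable to the form $\ldots$'' is meaningful, and since first-order terms contain no redexes, a first-order term is automatically $\beta$-normal, which I will use repeatedly.

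The \emph{axiom} case is immediate ($e = \kappa$). The \emph{gen} and \emph{inst} cases are transparent: neither rule changes the proof term, and neither changes the number of premises of the Horn formula (the term substitution $[\underline{t}/\underline{x}]$ acts on first-order term variables, not on proof term variables), so the conclusion follows verbatim from the inductive hypothesis. The only substantive case is \emph{cut}, where $e = \lambda \underline{a}.\lambda \underline{b}.\,(e_2\ \underline{b})\,(e_1\ \underline{a})$ with $e_1 : \underline{A} \Rightarrow D$, $e_2 : \underline{B}, D \Rightarrow C$, $|\underline{a}| = |\underline{A}|$, $|\underline{b}| = |\underline{B}|$. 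Here $e$ is visibly not a constant, so I must show the body $(e_2\ \underline{b})\,(e_1\ \underline{a})$ reduces to a first-order normal term. Applying the strengthened inductive hypothesis, $e_1$ is either a constant $\kappa_1$ or reduces to $\lambda \underline{a}'.\,n_1$ with $|\underline{a}'| = |\underline{A}|$ and $n_1$ first-order normal; likewise $e_2$ is either a constant $\kappa_2$ or reduces to $\lambda \underline{b}'.\lambda c.\,n_2$ with $|\underline{b}'| = |\underline{B}|$ and $n_2$ first-order normal. In each case $e_1\ \underline{a}$ reduces to $\kappa_1\ \underline{a}$ or to $[\underline{a}/\underline{a}']n_1$, which is first-order normal (Lemma \ref{fo:sub}, since the $\underline{a}$ are variables, hence first-order); call this term $m$. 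Then $(e_2\ \underline{b})\,m$ reduces to $\kappa_2\ \underline{b}\ m$ or to $[\underline{b}/\underline{b}',\,m/c]\,n_2$, both first-order normal again by Lemma \ref{fo:sub}. Hence the body reduces to a first-order normal term $n$, and $e \to_\beta^* \lambda \underline{a}.\lambda \underline{b}.\,n$, a string of $|\underline{A}|+|\underline{B}|$ abstractions over a first-order normal term $-$ exactly the strengthened claim for $\underline{A}, \underline{B} \Rightarrow C$.

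I expect the main obstacle to be the arity bookkeeping. One really has to justify that the number of leading $\lambda$'s in the normal form of a well-typed $e : A_1,\dots,A_n \Rightarrow B$ is exactly $n$, never more, so that plugging the inductive hypotheses for $e_1, e_2$ back into the cut term leaves neither a dangling redex nor an un-applied $\lambda$ buried inside an application. This rests on: (i) \emph{cut} is the only rule introducing $\lambda$'s and it produces exactly $|\underline{A}|+|\underline{B}|$ of them for a conclusion with $|\underline{A}|+|\underline{B}|$ premises; (ii) the body produced by \emph{cut} is always a genuine application, never an abstraction; and (iii) since the premises $A_i$ and the cut formula $D$ are atomic, the bound variables $\underline{a}, c$ can never occur in an applied position, so substituting for them creates no fresh redexes. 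Making this precise $-$ rather than hand-waving ``normal forms in this system look like $\lambda\underline{a}.n$'' $-$ is the delicate part; the remaining $\beta$-reductions and appeals to Lemma \ref{fo:sub} are routine.
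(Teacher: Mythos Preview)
Your proposal is correct and follows essentially the same route as the paper's proof: induction on the typing derivation, with the \textit{axiom} case immediate, the \textit{gen}/\textit{inst} cases trivial, and the \textit{cut} case handled by applying the inductive hypothesis to $e_1$ and $e_2$, reducing $e_1\,\underline{a}$ and $e_2\,\underline{b}$ accordingly, and invoking Lemma~\ref{fo:sub} to conclude that the body is first-order normal. Your explicit strengthening of the inductive hypothesis to track the exact arity of the $\lambda$-prefix is a genuine improvement in rigor over the paper, which simply writes ``$e_2 = \lambda\underline{b}\,d.\,n_2$'' without justifying that the number of abstractions matches $|\underline{B}|+1$; your final paragraph correctly identifies this as the only non-routine point.
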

\begin{proof}
  By induction on the derivation of $e : [\forall \underline{x}.] \underline{A} \Rightarrow B$.
  \begin{itemize}
  \item Base Cases: Axioms, in this case $e$ is a proof term constant.
    \item Step Case: 
      
           \begin{tabular}{l}
\infer[cut]{\lambda \underline{a} . \lambda \underline{b} . (e_2\ \underline{b})\ (e_1\ \underline{a}) : \underline{A}, \underline{B} \Rightarrow C}{e_1 : \underline{A} \Rightarrow D & e_2 : \underline{B}, D \Rightarrow C}
\end{tabular}

           By IH, we know that $e_1 = \kappa$ or $e_1 = \lambda \underline{a}.n_1$; $e_2 = \kappa'$ or $e_2 = \lambda \underline{b} d . n_2$. We know that $e_1 \underline{a}$ will be normalizable to a first order proof term. And $e_2 \underline{b}$ will be normalized to either $\kappa' \underline{b}$ or $\lambda d . n_2$. So by Lemma \ref{fo:sub}, we conclude that 
           $\lambda \underline{a} . \lambda \underline{b} . (e_2\ \underline{b})\ (e_1\ \underline{a})$ is normalizable to $\lambda \underline{a} . \lambda \underline{b} . n$ for some
           first order normal term $n$.
           \item The other cases are straightforward.
  \end{itemize}
\end{proof}
\begin{theorem}
  If $e : [\forall \underline{x}.] \Rightarrow B$, then $e$ is normalizable to a first order proof term.
\end{theorem}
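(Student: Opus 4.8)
The plan is to obtain this statement as the degenerate, hypothesis-free case of Lemma~\ref{FO} (equivalently Lemma~\ref{fst:lambda}), which has just been established. First I would invoke Strong Normalization (Theorem~\ref{real:sn}) to guarantee that $e$ actually possesses a $\beta$-normal form; this is what makes the phrase ``normalizable to'' meaningful. Then I would apply Lemma~\ref{FO} to the typing derivation $e : [\forall \underline{x}.]\ \Rightarrow B$, reading ``$\Rightarrow B$'' as ``$\underline{A} \Rightarrow B$'' with $\underline{A}$ the empty list of antecedents.

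Lemma~\ref{FO} then leaves exactly two cases. In the first, $e$ is a proof term constant $\kappa$; by the first clause of the definition of first-orderness a constant is already first order, and it is its own normal form, so we are done. In the second case $e$ normalizes to $\lambda \underline{a}. n$ with $n$ a first order normal proof term and $\underline{a}$ a sequence whose length equals the number of antecedents in $\underline{A} \Rightarrow B$. Since $\underline{A}$ is empty here, $\underline{a}$ is the empty sequence, so $\lambda \underline{a}. n$ is literally $n$, a first order (normal) proof term. In both cases $e$ is normalizable to a first order proof term, which is the claim.

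I do not expect a genuine obstacle: all the real work — the reducibility argument for Strong Normalization and the induction producing the $\lambda \underline{a}. n$ shape, together with the use of Lemma~\ref{fo:sub} for stability of first-orderness under substitution — has already been carried out. The only point requiring a moment's care is the bookkeeping observation that the binder block $\lambda \underline{a}$ collapses to nothing when there are no antecedents, so that the ``$\lambda \underline{a}. n$'' alternative of Lemma~\ref{FO} genuinely delivers a first order term rather than an abstraction. If one preferred a self-contained argument, the same result could be obtained by a direct induction on the derivation of $e : [\forall \underline{x}.]\ \Rightarrow B$, handling the \emph{cut} case by combining the first-order normal forms of the two premises via Lemma~\ref{fo:sub}; but routing through Lemma~\ref{FO} is shorter and avoids duplicating that case analysis.
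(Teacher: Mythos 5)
Your proposal is correct and follows essentially the same route as the paper, whose proof is the one-liner ``by Lemma~\ref{FO}, subject reduction and strong normalization'': you instantiate Lemma~\ref{FO} with an empty antecedent list so that the $\lambda\underline{a}$ block vanishes and the constant case is already first order. The only ingredient the paper names that you leave implicit is subject reduction, which is needed to know the normal form still inhabits the type; otherwise the arguments coincide.
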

\begin{proof}
  By lemma \ref{FO}, subject reduction and strong normalization theorem. 
\end{proof}
\section{Proof of Theorem \ref{realI}}

\begin{theorem}
Given axioms $\Phi$, if $e: [\forall \underline{x}] . \underline{A}\Rightarrow B$ holds with $e$ in normal form, then $F(e : [\forall \underline{x}] . \underline{A}\Rightarrow B)$ holds for axioms $F(\Phi)$.
\end{theorem}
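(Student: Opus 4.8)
\textbf{Proof proposal for Theorem~\ref{realI}.}
The plan is to proceed by induction on the derivation of $e : [\forall \underline{x}].\ \underline{A} \Rightarrow B$, but to be careful about what the induction hypothesis should say, because the transformation $F$ acts on the \emph{pair} of a normal-form proof term and its type, and the term-representation $\interp{\cdot}_\phi$ only makes sense on first-order normal proof terms. By Lemma~\ref{fst:lambda}, any $e$ of this type is either an axiom constant $\kappa$ or normalizes to $\lambda\underline{a}.n$ with $n$ first order and normal; since we assume $e$ already in normal form, it is exactly of one of these two shapes. So the real content is: show that for a first-order normal $n$ with $n : \underline{A}\Rightarrow B$ (reading the $\lambda\underline{a}$'s as the context $\underline{a}:\underline{A}$), we have $n : A_1[y_1],\dots,A_m[y_m] \Rightarrow B[\interp{n}_{[\underline{y}/\underline{a}]}]$ for $F(\Phi)$. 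I would actually strengthen this to an auxiliary lemma over \emph{open} first-order normal terms: if $a_1:A_1,\dots,a_m:A_m \vdash n : B$ with $n$ first-order normal, and $\phi$ maps each $a_j$ to a fresh variable $y_j$, then $a_1:A_1[y_1],\dots,a_m:A_m[y_m] \vdash n : B[\interp{n}_\phi]$ holds for $F(\Phi)$. The theorem then follows by applying \emph{gen} to reintroduce the quantifiers $\forall\underline{x}\forall\underline{y}$.

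The induction is on the structure of the first-order normal term $n$ (using Definition~\ref{fst:rep}'s two cases, which exhaust first-order normal terms: a normal first-order term headed by a variable must be just that variable, since an application $a\,n'$ would require $a$ to have an arrow type but variables in our context all have atomic-formula types; and an application headed by a constant is $\kappa\,p_1\cdots p_k$). In the base case $n = a_j$, we have $B = A_j$ and $\interp{a_j}_\phi = y_j$, and the claim $a_j : A_j[y_j] \vdash a_j : A_j[y_j]$ is immediate from the (instantiated) axiom/assumption. In the base case $n = \kappa$ with $\kappa : \forall\underline{x}.\ \Rightarrow C$ in $\Phi$ and $\delta C \equiv B$, the transformed axiom is $\kappa : \forall\underline{x}.\ \Rightarrow C[f_\kappa]$, and $\interp{\kappa}_\phi = f_\kappa$; instantiating with $\delta$ gives $\kappa :\ \Rightarrow B[f_\kappa] = B[\interp{\kappa}_\phi]$. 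In the step case $n = \kappa\, n_1 \cdots n_k$, the last rule used to type it must have been a sequence of $k$ cuts against $\kappa : \forall\underline{x}.\ C_1,\dots,C_k \Rightarrow C$ with $\delta C_i$ the type of $n_i$ and $\delta C \equiv B$; each $n_i$ is a first-order normal subterm, so the induction hypothesis gives $n_i$ a proof of (the appropriate instance of) $C_i[\interp{n_i}_\phi]$ under the transformed context. The transformed axiom is $\kappa : \forall\underline{x}\forall\underline{z}.\ C_1[z_1],\dots,C_k[z_k] \Rightarrow C[f_\kappa(z_1,\dots,z_k)]$; instantiate the $z_i$ with $\interp{n_i}_\phi$ and the $\underline{x}$ with $\delta$, then apply $k$ cuts with the $n_i$ to obtain a proof of $B[f_\kappa(\interp{n_1}_\phi,\dots,\interp{n_k}_\phi)] = B[\interp{n}_\phi]$, which is exactly what is required. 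One must check the side conditions on cut (freshness of the bound proof-variables, which holds since the $n_i$ are closed w.r.t.\ the fresh $\underline{z}$) and that the representation function commutes with the instantiation used — i.e.\ $\interp{n}_{[\underline{y}/\underline{a}]}$ really is $f_\kappa(\interp{n_1}_\cdot,\dots)$, which is just the defining clause of $\interp{\cdot}$.

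The main obstacle I anticipate is bookkeeping the substitutions correctly: the original derivation uses \emph{inst} to instantiate $\kappa$'s quantifier $\underline{x}$ by some $\delta$ before cutting, and in the transformed derivation we must instantiate \emph{both} $\underline{x}$ (by the same $\delta$) and the new $\underline{y}$ (by the $\interp{n_i}_\phi$), and then verify that the resulting head $\delta\big(C[f_\kappa(y_1,\dots,y_k)]\big)$ with $y_i \mapsto \interp{n_i}_\phi$ equals $B[\interp{n}_\phi]$. This requires that $\delta$ does not touch the fresh $\underline{y}$ (true by freshness) and that $F$ was defined so that the extra argument slot is filled by exactly $f_\kappa(\text{body-witness-vars})$ — which is Definition~\ref{real}. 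A second, more subtle point is making the statement of the auxiliary lemma robust under the $\forall/\mathrm{inst}$ rules: because our calculus is Curry-style, \emph{gen} and \emph{inst} do not change the proof term, so the induction cases for those rules are trivial (the term, hence its representation, is unchanged), but one should state the lemma for the quantifier-free judgement $\underline{A}\Rightarrow B$ and handle the $[\forall\underline{x}]$ wrapper separately at the very end, exactly as Lemma~\ref{fst:lambda} and Theorem~\ref{fst} do. Modulo this substitution-tracking, every step is a routine structural check against Definitions~\ref{fst:rep} and~\ref{real}.
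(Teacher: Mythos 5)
Your overall strategy is genuinely different from the paper's: you first normalize via Lemma~\ref{fst:lambda} and then induct on the \emph{structure} of the first-order normal body $n$ (with an inversion step to recover how $\kappa\,n_1\cdots n_k$ was typed), whereas the paper inducts directly on the typing derivation of $e$, handles the \emph{cut} case by case-splitting on the normal forms of $e_1$ and $e_2$, and discharges the resulting $\beta$-redex by verifying a substitution identity $\interp{[n_1/d]n_2}_{[\underline{y}/\underline{a},\underline{z}/\underline{b}]}=\interp{n_2}_{[\underline{z}/\underline{b},\,\interp{n_1}_{[\underline{y}/\underline{a}]}/d]}$. Your route buys you freedom from that substitution lemma, at the price of needing a generation/inversion principle for normal terms in a Curry-style system where \emph{gen}/\emph{inst} can be interleaved anywhere; the paper's route avoids inversion entirely. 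Incidentally, your induction on $n$ is essentially the engine the paper uses for Lemma~\ref{lm:complete}, so the inversion step itself is within this paper's standard of rigour.

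There is, however, a concrete gap in your auxiliary lemma: its base case $n=a_j$ is not a derivable judgement of the paper's system. The system of Definition~\ref{proofsystem} has no hypothesis/identity rule and no contexts; the only way a proof-term variable occurs is bound by the $\lambda$'s that \emph{cut} introduces, and $\lambda\underline{a}.a_j:\underline{A}[\underline{y}]\Rightarrow A_j[y_j]$ cannot be derived from \emph{axiom}, \emph{cut}, \emph{gen}, \emph{inst} alone. Since bare variables genuinely occur as arguments inside normal bodies (e.g.\ $n=\kappa_2(\kappa_1\,a_1\,a_2)$ arises from cutting two axioms), your induction does hit this case, and your plan to ``apply $k$ cuts with the $n_i$'' breaks precisely when some $n_i$ is a variable: there is no judgement for $a_j$ to cut against. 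The repair is to fold the variable case into the application case --- partition the arguments of $\kappa\,n_1\cdots n_k$ into closed subproofs (which you cut away, as you describe) and variable positions (whose transformed formulas $C_i[y_j]$ must be \emph{retained in the antecedent} of the conclusion, exactly as the \emph{cut} rule's conclusion $\underline{A},\underline{B}\Rightarrow C$ retains undischarged hypotheses) --- and to restate the induction hypothesis accordingly for the judgement $\lambda\underline{a}.n:\underline{A}[\underline{y}]\Rightarrow B[\interp{n}_{[\underline{y}/\underline{a}]}]$ rather than for a context-style judgement the system does not have. With that restructuring the argument goes through; as written, the stated induction hypothesis is not provable.
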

\begin{proof}
  By induction on the derivation of $e: [\forall \underline{x}] . \underline{A}\Rightarrow B$.
  \begin{itemize}
  \item Base Case: 

   \

    \begin{tabular}{l}
     \infer{\kappa : \forall \underline{x} . \underline{A}\Rightarrow B}{}      

    \end{tabular}
   
\

 In this case, we know that $F(\kappa : \forall \underline{x} . \underline{A}\Rightarrow B) = \kappa : \forall \underline{x}. \forall \underline{y} . A_1[y_1], ..., A_n[y_n] \Rightarrow B[f_\kappa (y_1,..., y_n)] \in F(\Phi)$. 

\item Step Case: 

\

\begin{tabular}{l}
\infer[cut]{\lambda \underline{a} . \lambda \underline{b} . (e_2\ \underline{b})\ (e_1\ \underline{a}) : \underline{A}, \underline{B} \Rightarrow C}{e_1 : \underline{A} \Rightarrow D & e_2 : \underline{B}, D \Rightarrow C}
  
\end{tabular}

\

We know that the normal form of $e_1$ must be $\kappa_1$ or $\lambda \underline{a}. n_1$; the normal form of $e_1$ must be $\kappa_2$ or $\lambda \underline{b} d. n_2$, with $n_1, n_2$ are first order. 
\begin{itemize}
\item $e_1 \equiv \kappa_1, e_2 \equiv \kappa_2$. By IH, we know that $F(\kappa_1 : \underline{A} \Rightarrow D) = \kappa_1 : A_1[y_1],..., A_1[y_1] \Rightarrow D[f_{\kappa_1}(y_1,..., y_n)] $ and 
$F(\kappa_2 : \underline{B}, D \Rightarrow C) = \kappa_2 : B_1[z_1],..., B_m[z_m], D[y] \Rightarrow C[f_{\kappa_2}(z_1,..., z_m, y)]$ hold. So by \textit{gen} and \textit{inst}, we have

\noindent $\kappa_2 : B_1[z_1],..., B_m[z_m], D[f_{\kappa_1}(y_1,..., y_n)] \Rightarrow C[f_{\kappa_2}(\underline{z}, f_{\kappa_1}(\underline{y}))]$. 

\noindent Then by the cut rule, we have 

\noindent $\lambda \underline{a} . \lambda \underline{b} . \kappa_2 \underline{b} (\kappa_1 \underline{a}) : A_1[y_1],..., A_1[y_1],  B_1[z_1],..., B_m[z_m] \Rightarrow C[f_{\kappa_2}(\underline{z}, f_{\kappa_1}(\underline{y}))]$. We 
can see that $\interp{\kappa_2 \underline{b} (\kappa_1 \underline{a})}_{[\underline{y}/\underline{a}, \underline{z}/\underline{b}]} = f_{\kappa_2}(\underline{z}, f_{\kappa_1}(\underline{y}))$.

\item $e_1 \equiv \lambda \underline{a}. n_1, e_2 \equiv \lambda \underline{b}d. n_2$. By IH, we know that $F(\lambda \underline{a}. n_1 : \underline{A} \Rightarrow D) = \lambda \underline{a}. n_1 : A_1[y_1],..., A_1[y_1] \Rightarrow D[\interp{n_1}_{[\underline{y}/\underline{a}]}]$ and 
$F(\lambda \underline{b} d. n_2 : \underline{B}, D \Rightarrow C) = \lambda \underline{b} d. n_2 : B_1[z_1],..., B_m[z_m], D[y] \Rightarrow C[\interp{n_2}_{[\underline{z}/ \underline{b}, y/d]}]$ hold. So by \textit{gen} and \textit{inst}, we have

\noindent $\lambda \underline{b} d. n_2 : B_1[z_1],..., B_m[z_m], D[\interp{n_1}_{[\underline{y}/\underline{a}]}] \Rightarrow C[\interp{n_2}_{[\underline{z}/\underline{b}, \interp{n_1}_{[\underline{y}/\underline{a}]}/d]}]$. 

\noindent Then by the cut rule and beta reductions, we have $\lambda \underline{a} . \lambda \underline{b} . ([n_1/d]n_2) : A_1[y_1],..., A_1[y_1],  B_1[z_1],..., B_m[z_m] \Rightarrow C[\interp{n_2}_{[\underline{z}/\underline{b}, \interp{n_1}_{[\underline{y}/\underline{a}]}/d]}]$. We 
know that $\interp{[n_1/d]n_2}_{[\underline{y}/\underline{a}, \underline{z}/\underline{b}]} = \interp{n_2}_{[\underline{z}/\underline{b}, \interp{n_1}_{[\underline{y}/\underline{a}]}/d]}$. 
\item The other cases are handle similarly. 
\end{itemize}

\item Step Case:

\

  \begin{tabular}{l}
\infer[inst]{\lambda \underline{a} . n : [\underline{t}/\underline{x}]\underline{A} \Rightarrow [\underline{t}/\underline{x}]B }{\lambda \underline{a} . n : \forall \underline{x} . \underline{A} \Rightarrow B}
\end{tabular}

\

By IH, we know that $F(\lambda \underline{a} . n : \forall \underline{x} . \underline{A} \Rightarrow B) = \lambda \underline{a} . n : \forall \underline{x}. \forall \underline{y} . A_1[y_1],..., A_n[y_n] \Rightarrow B[\interp{n}_{[\underline{y}/\underline{a}]}]$ holds for $F(\Phi)$. By \textit{Inst} rule, we instantiate $y_i$ with $y_i$,  we have 
$\lambda \underline{a} . n : [\underline{t}/\underline{x}] A_1[y_1],..., [\underline{t}/\underline{x}] A_n[y_n] \Rightarrow [\underline{t}/\underline{x}] B[\interp{n}_{[\underline{y}/\underline{a}]}]$

\item Step Case:

\

  \begin{tabular}{l}
\infer[gen]{e: \forall \underline{x} . F}{e : F}
\end{tabular}

\

This case is straightforwardly by IH. 
  \end{itemize}
\end{proof}

\section{Proof of Theorem \ref{record}}

\begin{lemma}
\label{realII}
  If $F(\Phi) \vdash \{A_1[y_1],..., A_n[y_n]\} \leadsto^*_{\gamma} \emptyset$, and $y_1,..., y_n$ are fresh, then there exists proofs $e_1 : \forall \underline{x} . \Rightarrow \gamma A_1[\gamma y_1],..., e_n : \forall \underline{x} . \Rightarrow \gamma A_n[\gamma y_n]$ with $\interp{e_i}_{\emptyset} = \gamma y_i $ given axioms $F(\Phi)$.  
\end{lemma}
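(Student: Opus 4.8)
The plan is to prove Lemma~\ref{realII} by induction on the length of the reduction $F(\Phi) \vdash \{A_1[y_1],\ldots,A_n[y_n]\} \leadsto^*_\gamma \emptyset$, following the same pattern as the proof of Lemma~\ref{sound}, but additionally tracking the extra proof-recording argument. Two structural facts drive the argument. First, by Definition~\ref{real} every axiom of $F(\Phi)$ has the shape $\kappa : \forall\underline{x}\,\underline{z}.\,B_1[z_1],\ldots,B_m[z_m] \Rightarrow C[f_\kappa(z_1,\ldots,z_m)]$, so its head carries $f_\kappa$ applied to exactly the extra arguments of its body (and degenerates to the constant $f_\kappa$ when $m=0$). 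Second, the representation function satisfies $\interp{\kappa\,p_1\cdots p_m}_\emptyset = f_\kappa(\interp{p_1}_\emptyset,\ldots,\interp{p_m}_\emptyset)$ and $\interp{\kappa}_\emptyset = f_\kappa$. The proof terms will be built exactly by the \textit{cut}-rule construction of Lemma~\ref{sound}, and these two shapes will force the recorded term and the unification-computed extra argument to coincide at every step.

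\textbf{Base case.} If the derivation is $F(\Phi) \vdash \{A[y]\} \leadsto_{\kappa,\gamma} \emptyset$, there is an axiom $\kappa : \forall\underline{x}.\,\Rightarrow C[f_\kappa] \in F(\Phi)$ with $C[f_\kappa] \sim_\gamma A[y]$. Since $f_\kappa$ is a constant and $y$ is fresh, unification forces $\gamma y = f_\kappa$ and $\gamma C \equiv \gamma A$, so $\gamma(C[f_\kappa]) \equiv \gamma A[\gamma y]$. The \textit{inst} and \textit{gen} rules give $\kappa : \forall\underline{x}.\,\Rightarrow \gamma A[\gamma y]$, and $\interp{\kappa}_\emptyset = f_\kappa = \gamma y$.

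\textbf{Step case.} Suppose the first step resolves $A_i[y_i]$ using $\kappa : \forall\underline{x}\,\underline{z}.\,B_1[z_1],\ldots,B_m[z_m]\Rightarrow C[f_\kappa(z_1,\ldots,z_m)]$ with $C[f_\kappa(\underline{z})] \sim_\sigma A_i[y_i]$, so the whole derivation is
\[
F(\Phi) \vdash \{A_1[y_1],\ldots,A_i[y_i],\ldots\} \leadsto_{\kappa,\sigma} \{(\sigma A_1)[y_1],\ldots,(\sigma B_1)[z_1],\ldots,(\sigma B_m)[z_m],\ldots\} \leadsto^*_\gamma \emptyset .
\]
Because the $y_j$ are fresh and the $z_j$ freshly renamed, none of them is in $\mathrm{dom}(\sigma)$ or $\mathrm{codom}(\sigma)$, so $\sigma$ acts transparently on the extra slots; in particular $\sigma y_i = \sigma(f_\kappa(\underline z)) = f_\kappa(z_1,\ldots,z_m)$ and $\sigma A_i \equiv \sigma C$, and the displayed resolvent indeed has $(\sigma A_j)[y_j]$ and $(\sigma B_j)[z_j]$ with these extra variables still not occurring in the atoms. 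The induction hypothesis applies to the shorter sub-derivation, producing $e_j : \forall\underline{x}.\,\Rightarrow (\gamma A_j)[\gamma y_j]$ with $\interp{e_j}_\emptyset = \gamma y_j$ for $j\neq i$, and $p_j : \forall\underline{x}.\,\Rightarrow (\gamma B_j)[\gamma z_j]$ with $\interp{p_j}_\emptyset = \gamma z_j$ for $1\le j\le m$; here I used $\gamma\sigma A_j \equiv \gamma A_j$ and $\gamma\sigma B_j \equiv \gamma B_j$, which hold because $\gamma$ accumulates $\sigma$ (write $\gamma = \gamma''\cdot\sigma$) and $\sigma$ is idempotent. Instantiating the $\underline{x}$- and $\underline{z}$-quantifiers of $\kappa$ via $\gamma$ gives $\kappa : (\gamma B_1)[\gamma z_1],\ldots,(\gamma B_m)[\gamma z_m] \Rightarrow (\gamma C)[f_\kappa(\gamma z_1,\ldots,\gamma z_m)]$. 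Now $\gamma A_i \equiv \gamma C$ (apply $\gamma''$ to $\sigma A_i \equiv \sigma C$) and $\gamma y_i = \gamma''\sigma y_i = f_\kappa(\gamma'' z_1,\ldots,\gamma'' z_m) = f_\kappa(\gamma z_1,\ldots,\gamma z_m)$ (using $\gamma z_j = \gamma'' z_j$), so the head of the instantiated axiom is exactly $(\gamma A_i)[\gamma y_i]$. Applying the \textit{cut} rule $m$ times to this axiom and to $p_1,\ldots,p_m$ yields $e_i := \kappa\,p_1\cdots p_m$ with $e_i : \Rightarrow (\gamma A_i)[\gamma y_i]$, whence $e_i : \forall\underline{x}.\,\Rightarrow (\gamma A_i)[\gamma y_i]$ by \textit{gen}. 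Finally $\interp{e_i}_\emptyset = f_\kappa(\interp{p_1}_\emptyset,\ldots,\interp{p_m}_\emptyset) = f_\kappa(\gamma z_1,\ldots,\gamma z_m) = \gamma y_i$, as required. (As in Lemma~\ref{sound}, the constructed proofs are in normal form, since they are head constants applied to normal subterms; this is what Theorem~\ref{record} then records.)

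\textbf{Main obstacle.} The delicate part is the substitution bookkeeping: one must argue that the fresh extra-argument variables introduced along the derivation never enter the domain or codomain of the step's unifier, so that substitution commutes with the "$A[\,\cdot\,]$" slot, and one must combine idempotence with the accumulation property $\gamma=\gamma''\cdot\sigma$ to collapse the double substitution $\gamma\sigma$ to $\gamma$. Once these are in place, the coincidence of the recorded term $f_\kappa(\interp{p_1}_\emptyset,\ldots,\interp{p_m}_\emptyset)$ with the unifier's value at the extra argument is forced entirely by the shape of the transformed axioms from Definition~\ref{real}.
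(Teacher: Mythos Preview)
Your argument is correct and follows essentially the same route as the paper's own proof: induction on the length of the reduction, with the step case constructing $e_i = \kappa\,p_1\cdots p_m$ via \textit{inst} and repeated \textit{cut}, and verifying $\interp{e_i}_\emptyset = f_\kappa(\gamma z_1,\ldots,\gamma z_m) = \gamma y_i$ from the shape of the transformed axiom. One small slip: your sentence ``none of them is in $\mathrm{dom}(\sigma)$ or $\mathrm{codom}(\sigma)$'' is not literally true, since $y_i \in \mathrm{dom}(\sigma)$ (indeed $\sigma y_i = f_\kappa(\underline{z})$) and the $z_j$ appear in $\mathrm{codom}(\sigma)$ via that very binding; the correct statement (which the paper records) is $\mathrm{dom}(\sigma) \cap \{z_1,\ldots,z_m,y_1,\ldots,y_{i-1},y_{i+1},\ldots,y_n\} = \emptyset$, and your subsequent computations already use exactly this.
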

\begin{proof}
  By induction on the length of the reduction. 
  \begin{itemize}
  \item Base Case. Suppose the length is one, namely, $F(\Phi) \vdash \{A[y]\} \leadsto_{\kappa, \gamma_1} \emptyset$. Thus there exists $(\kappa : \forall \underline{x} .  \Rightarrow C[f_\kappa]) \in F(\Phi)$(here $f_\kappa$ is a constant), such that $C[f_\kappa] \sim_{\gamma_1} A[y]$.  Thus $ \gamma_1 (C[f_\kappa]) \equiv \gamma_1 A[\gamma_1 y]$. So $\gamma_1 y \equiv f_\kappa$
and $\gamma_1 C \equiv \gamma_1 A$. We have $\kappa :\ \Rightarrow  \gamma_1 C[f_\kappa]$ by the \textit{inst} rule, thus $\kappa :\ \Rightarrow \gamma_1 A[\gamma_1 y]$, hence $\kappa : \forall \underline{x} . \Rightarrow \gamma_1 A[\gamma_1 y]$ by the \textit{gen} rule and $\interp{\kappa}_{\emptyset} = f_{\kappa}$.

  \item Step Case. Suppose $F(\Phi) \vdash \{A_1[y_1], ..., A_i[y_i],..., A_n[y_n]\} \leadsto_{\kappa, \gamma_1} $ 

$\{\gamma_1 A_1[y_1],...,  \gamma_1 B_1[z_1],...,  \gamma_1 B_m[z_m],..., \gamma_1 A_n[y_n]\} \leadsto^*_{\gamma} \emptyset$,

\noindent  where $\kappa : \forall \underline{x} . \forall \underline{z} . B_1[z_m],..., B_n[z_m] \Rightarrow C[f_\kappa(z_1,..., z_m)] \in F(\Phi)$,

\noindent and $C[f_\kappa(z_1,..., z_m)] \sim_{ \gamma_1} A_i[y_i]$. So we know $ \gamma_1 C[f_\kappa(z_1,..., z_m)] \equiv \gamma_1 A_i[\gamma_1 y_i]$,  $\gamma_1 y_i \equiv f_\kappa(z_1,..., z_m),  \gamma_1 C \equiv \gamma_1 A_i$ and 

\noindent $\mathrm{dom}(\gamma_1) \cap \{z_1,..., z_m, y_1,..,y_{i-1}, y_{i+1}, y_n\} = \emptyset$. By IH, we know that there exists proofs $e_1 : \forall \underline{x}. \Rightarrow \gamma \gamma_1 A_1[\gamma y_1],..., p_1 : \forall \underline{x}. \Rightarrow \gamma   \gamma_1 B_1[\gamma z_1],..., p_m : \forall \underline{x}. \Rightarrow \gamma   \gamma_1 B_m[\gamma z_m],..., e_n : \forall \underline{x} . \Rightarrow \gamma \gamma_1 A_n[\gamma y_n]$ and $\interp{e_1}_{\emptyset} = \gamma y_1, ..., \interp{p_1}_\emptyset = \gamma z_1, ..., \interp{e_n}_{\emptyset} = \gamma y_n $ . We can construct a proof $e_i = \kappa \ p_1\ ... p_m$ with $e_i : \forall \underline{x} . \Rightarrow \gamma \gamma_1 A_i[\gamma \gamma_1 y_i]$, by first use the \textit{inst} to instantiate the quantifiers of $\kappa$, then applying the cut rule $m$ times. Moreover, we have $\interp{\kappa \ p_1\ ... p_m}_{\emptyset} = f_\kappa(\interp{p_1}_\emptyset,...,\interp{p_m}_\emptyset) = \gamma (f_\kappa (z_1,..., z_m)) = \gamma \gamma_1 y_i$. 
     \end{itemize}

\end{proof}

\begin{theorem}
  Given axioms $\Phi$, suppose $F(\Phi) \vdash \{A[y]\} \leadsto^*_{\gamma} \emptyset$. We have $p : \forall \underline{x} . \Rightarrow \gamma A[\gamma y]$ where $p$ is in normal form and $\interp{p}_{\emptyset} = \gamma y$. 
\end{theorem}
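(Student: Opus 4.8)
The statement is exactly the single-query ($n=1$) instance of Lemma~\ref{realII}: taking $A_1 \equiv A$ and $y_1 \equiv y$, the lemma hands us a proof $e : \forall \underline{x} . \Rightarrow \gamma A[\gamma y]$ with $\interp{e}_{\emptyset} = \gamma y$, given axioms $F(\Phi)$. So the plan is simply to invoke Lemma~\ref{realII} and then verify the one extra clause demanded here, namely that this proof term can be taken in normal form.

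To see that, I would look at the proof term actually constructed in the induction of Lemma~\ref{realII} and observe that it is already $\beta$-normal. In the base case the proof is a bare constant $\kappa$, which is normal and first order. In the step case the proof is built as $e_i = \kappa\ p_1\ \dots\ p_m$, where by the induction hypothesis each $p_j$ is itself normal (the $p_j$ are the proofs extracted for the body atoms $B_j[z_j]$); since the head symbol $\kappa$ is a proof-term constant, forming the application $\kappa\ p_1\ \dots\ p_m$ introduces no $\beta$-redex, so the result is again normal. Hence every proof term produced by Lemma~\ref{realII} is a first-order normal term, and we may set $p := e$. This also makes the equation $\interp{p}_{\emptyset} = \gamma y$ meaningful, since $\interp{\cdot}_\phi$ is only defined on first-order normal proof terms.

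Alternatively — if one prefers not to inspect the construction — one can first apply Theorem~\ref{fst} to normalize $e$ to a first-order normal term $p$, and then argue that $\interp{\cdot}_\emptyset$ is invariant under $\beta$-reduction (the same interaction between $\interp{\cdot}$ and substitution/$\beta$ that is used in the proof of Theorem~\ref{realI}), giving $\interp{p}_\emptyset = \interp{e}_\emptyset = \gamma y$; subject reduction then keeps the type $\forall \underline{x}. \Rightarrow \gamma A[\gamma y]$. I expect the first route to be cleaner, since it avoids having to state and prove a separate normalization-stability lemma for $\interp{\cdot}$.

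There is essentially no hard part: the content is all in Lemma~\ref{realII}, and the only thing to be careful about is the normal-form bookkeeping just described — in particular noting that the constructed proof never creates a redex, so no appeal to strong normalization is strictly needed. I would therefore present the proof in two lines: "By Lemma~\ref{realII} with $n=1$ there is $p : \forall \underline{x}. \Rightarrow \gamma A[\gamma y]$ with $\interp{p}_\emptyset = \gamma y$; inspection of that proof shows $p$ is built from constants by application only, hence is in normal form."
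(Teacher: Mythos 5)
Your proposal matches the paper's proof, which is literally ``By Lemma \ref{realII}''; you correctly identify the theorem as the $n=1$ instance of that lemma, and your extra observation that the constructed proof terms $\kappa\ p_1\ \dots\ p_m$ are already $\beta$-normal (so the normal-form clause costs nothing) is a sound and welcome piece of bookkeeping that the paper leaves implicit.
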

\begin{proof}
By Lemma \ref{realII}.
\end{proof}

\section{Proof of Lemma \ref{sc:unif}}
\begin{lemma}
If $\Phi \vdash \{A_1,..., A_n\} \leadsto^* \emptyset$, then $F(\Phi) \vdash \{A_1[y_1],..., A_n[y_n]\} \leadsto^* \emptyset$ with $y_i$ fresh. 
\end{lemma}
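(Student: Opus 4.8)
The plan is to prove the statement by induction on the length of the reduction $\Phi \vdash \{A_1,\ldots,A_n\} \leadsto^* \emptyset$, following the pattern of the proof of Lemma~\ref{sound}. The guiding observation is that, by Definition~\ref{real}, the transformation $F$ merely appends one fresh argument to the head atom and to each body atom of every clause; it therefore never changes \emph{which} clause head unifies with a given query atom, and only augments the computed unifier with a binding for the freshly introduced argument variable. So a reduction in $\Phi$ lifts, step by step, to a reduction in $F(\Phi)$ on the correspondingly decorated multiset.

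For the base case, the empty reduction is trivial (then $n=0$). If the reduction has length one, $\Phi \vdash \{A\} \leadsto_{\kappa} \emptyset$ witnessed by $(\kappa : \forall \underline{x}.\ \Rightarrow C)\in\Phi$ with $C \sim_\gamma A$, then by Definition~\ref{real} we have $(\kappa : \forall \underline{x}.\ \Rightarrow C[f_\kappa])\in F(\Phi)$ with $f_\kappa$ a term constant. Since $y$ is fresh, $y \notin \mathrm{FV}(C)\cup\mathrm{FV}(A)$, and since $f_\kappa$ is closed, the unifier $\gamma$ extends to a unifier of $C[f_\kappa]$ and $A[y]$ that additionally maps $y$ to $f_\kappa$; hence $F(\Phi) \vdash \{A[y]\} \leadsto_{\kappa} \emptyset$.

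For the step case, suppose $\Phi \vdash \{A_1,\ldots,A_i,\ldots,A_n\} \leadsto_{\kappa,\gamma} \{\gamma A_1,\ldots,\gamma B_1,\ldots,\gamma B_m,\ldots,\gamma A_n\} \leadsto^* \emptyset$ with $(\kappa : \forall \underline{x}.\ B_1,\ldots,B_m \Rightarrow C)\in\Phi$ and $C \sim_\gamma A_i$. By Definition~\ref{real}, $F(\Phi)$ contains $\kappa : \forall \underline{x}\,\underline{z}.\ B_1[z_1],\ldots,B_m[z_m] \Rightarrow C[f_\kappa(z_1,\ldots,z_m)]$ with the $z_j$ fresh and distinct. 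Choose fresh, pairwise distinct $y_1,\ldots,y_n$. Since $y_i$ is distinct from all $z_j$ we have $y_i \notin \mathrm{FV}(f_\kappa(z_1,\ldots,z_m))$, so $\gamma$ extends to a unifier $\gamma'$ of $C[f_\kappa(\underline{z})]$ and $A_i[y_i]$ that acts as $\gamma$ on the old variables and maps $y_i$ to $f_\kappa(\underline{z})$. Reducing with this clause produces the multiset obtained by decorating $\{\gamma A_1,\ldots,\gamma B_1,\ldots,\gamma B_m,\ldots,\gamma A_n\}$ with the fresh variables $y_1,\ldots,y_{i-1},z_1,\ldots,z_m,y_{i+1},\ldots,y_n$, since $\gamma'$ agrees with $\gamma$ away from $y_i$ and fixes all those fresh variables. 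Applying the induction hypothesis to $\Phi \vdash \{\gamma A_1,\ldots,\gamma B_1,\ldots,\gamma B_m,\ldots,\gamma A_n\} \leadsto^* \emptyset$ — which, since the result is insensitive to renaming of fresh decorations, we may instantiate with exactly these fresh variables — yields a $\leadsto^*$-reduction of this decorated multiset to $\emptyset$ in $F(\Phi)$. Prepending the single step above gives $F(\Phi) \vdash \{A_1[y_1],\ldots,A_n[y_n]\} \leadsto^* \emptyset$.

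The step I expect to be the main obstacle is the substitution-and-freshness bookkeeping: one has to check carefully that the fresh decoration variables $y_j$ and $z_k$ never enter the domain or the range of $\gamma$ — which is exactly the idempotence observation already used in the proof of Lemma~\ref{sound} — so that $\gamma'$ genuinely restricts to $\gamma$ on the old variables and the decorated resolvent really coincides (up to renaming of fresh variables) with the multiset that the induction hypothesis speaks about. It is cleanest to isolate this once and for all as a small lemma: if $C \sim_\gamma A$, $y \notin \mathrm{FV}(C)\cup\mathrm{FV}(A)$, and $t$ is a term with $\mathrm{FV}(t)\cap(\mathrm{FV}(C)\cup\mathrm{FV}(A)\cup\{y\})=\emptyset$, then $C[t]$ and $A[y]$ are unifiable by a substitution extending $\gamma$ that maps $y$ to $t$. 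This single fact discharges the unification step uniformly in both the base case ($t=f_\kappa$) and the step case ($t=f_\kappa(\underline{z})$).
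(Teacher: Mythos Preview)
Your proposal is correct and follows essentially the same route as the paper: induction on the length of the $\leadsto^*$-reduction, extending the unifier $\gamma$ by the binding $[f_\kappa(\underline{z})/y_i]$ in the step case and observing that the resulting resolvent is exactly the decorated version of the original one so that the IH applies. Your extra care with the freshness bookkeeping and the small auxiliary lemma about extending unifiers is a welcome refinement, but the underlying argument is the same as the paper's.
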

\begin{proof}
  By induction on the length of reduction.
  \begin{itemize}
     \item Base Case. Suppose the length is one, namely, $\Phi \vdash \{A\} \leadsto_{\kappa, \gamma_1} \emptyset$. Then there exists $(\kappa : \forall \underline{x} .\  \Rightarrow C) \in \Phi$ such that $C \sim_{\gamma_1} A$.  Thus $\kappa : \forall \underline{x}. \ \Rightarrow C[f_\kappa] \in F(\Phi)$ and $(C[f_\kappa]) \sim_{\gamma_1[f_\kappa/y]} A[y]$. So $F(\Phi) \vdash A[y]\leadsto \emptyset$.
     \item Step Case. Suppose 
       
       \noindent $\Phi \vdash \{A_1, ..., A_i,..., A_n\} \leadsto_{\kappa, \gamma_1} \{\gamma_1 A_1,...,  \gamma_1 B_1,...,  \gamma_1 B_m,..., \gamma_1 A_n\} \leadsto^*_{\gamma} \emptyset$,

\noindent  where $\kappa : \forall \underline{x}. B_1,..., B_m \Rightarrow C \in \Phi$, $C \sim_{ \gamma_1} A_i$. So we know that 

\noindent $\kappa : \forall \underline{x}. B_1[z_1],..., B_m[z_m] \Rightarrow C[f_\kappa(\underline{z})] \in F(\Phi)$ and $C[f_\kappa(\underline{z})] \sim_{\gamma_1[f_\kappa(\underline{z})/y_i]} A_i[y_i]$. Thus $F(\Phi) \vdash \{A_1[y_1], ..., A_i[y_i],..., A_n[y_n]\} \leadsto_{\kappa, \gamma_1[f_\kappa(\underline{z})/y_i]} $ 

{\scriptsize
 $\{\gamma_1[f_\kappa(\underline{z})/y_i] A_1[y_1],...,  \gamma_1[f_\kappa(\underline{z})/y_i] B_1[z_1],...,  \gamma_1[f_\kappa(\underline{z})/y_i] B_m[z_m],..., \gamma_1[f_\kappa(\underline{z})/y_i] A_n[y_n]\}$
}
\noindent $ \equiv \{\gamma_1 A_1[y_1],...,  \gamma_1 B_1[z_1],...,  \gamma_1 B_m[z_m],..., \gamma_1 A_n[y_n]\}$. By IH, 

\noindent $F(\Phi) \vdash \{\gamma_1 A_1[y_1],...,  \gamma_1 B_1[z_1],...,  \gamma_1 B_m[z_m],..., \gamma_1 A_n[y_n]\} \leadsto^* \emptyset$.

  \end{itemize}
\end{proof}
\begin{lemma}
If $F(\Phi) \vdash \{A_1[y_1],..., A_n[y_n]\} \leadsto^* \emptyset$, then $\Phi \vdash \{A_1,..., A_n\} \leadsto^* \emptyset$. 
\end{lemma}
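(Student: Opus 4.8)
The plan is to prove this by induction on the length of the reduction $F(\Phi) \vdash \{A_1[y_1],\dots,A_n[y_n]\} \leadsto^* \emptyset$ (with the $y_i$ distinct and fresh, as the notation $A[t']$ already demands), mirroring each step in $F(\Phi)$ by the corresponding step in $\Phi$ obtained by ``forgetting'' the last, proof-recording argument. The structural fact I will lean on throughout is that every clause of $F(\Phi)$ has the shape $\kappa : \forall \underline{x}\,\forall \underline{z}.\ B_1[z_1],\dots,B_m[z_m] \Rightarrow C[f_\kappa(z_1,\dots,z_m)]$ with $\underline{z}$ fresh, so the extra argument of every head is a term all of whose variables are themselves fresh.

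For the base case, a one-step reduction $F(\Phi) \vdash \{A[y]\} \leadsto_{\kappa,\gamma} \emptyset$ must use a clause $\kappa : \forall \underline{x}.\ \Rightarrow C[f_\kappa] \in F(\Phi)$ with $C[f_\kappa] \sim_\gamma A[y]$; since this clause has empty body, $f_\kappa$ is a term constant, so unifying forces $\gamma y \equiv f_\kappa$ and $\gamma C \equiv \gamma A$. Hence the original clause $\kappa : \forall \underline{x}.\ \Rightarrow C \in \Phi$ satisfies $C \sim_{\gamma_0} A$, where $\gamma_0$ is $\gamma$ restricted to $\mathrm{FV}(C) \cup \mathrm{FV}(A)$, and therefore $\Phi \vdash \{A\} \leadsto_{\kappa} \emptyset$.

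For the step case, write $F(\Phi) \vdash \{\dots,A_i[y_i],\dots\} \leadsto_{\kappa,\gamma} \{\gamma A_1[y_1],\dots,\gamma B_1[z_1],\dots,\gamma B_m[z_m],\dots,\gamma A_n[y_n]\} \leadsto^* \emptyset$ using the clause displayed above with $C[f_\kappa(\underline{z})] \sim_\gamma A_i[y_i]$. The first observation is that the unification problem solved by $\gamma$ splits into the original equations (those unifying $C$ with $A_i$) plus the single equation $f_\kappa(\underline{z}) = y_i$; because $\underline{z}$ is fresh and $y_i$ occurs in neither $C$ nor $A_i$, an mgu $\gamma$ of the whole system is an mgu $\gamma_0$ of $C \sim A_i$ extended by $y_i \mapsto f_\kappa(\underline{z})$, leaving every $z_k$ and every $y_j$ with $j \neq i$ untouched. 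Consequently $\gamma$ acts on the original atoms exactly as $\gamma_0$ does, so the tail reduction in $F(\Phi)$ starts from $\{\gamma_0 A_1[y_1],\dots,\gamma_0 B_1[z_1],\dots,\gamma_0 B_m[z_m],\dots,\gamma_0 A_n[y_n]\}$ — still a multiset of atoms indexed by \emph{distinct fresh} variables — so the induction hypothesis applies and gives $\Phi \vdash \{\gamma_0 A_1,\dots,\gamma_0 B_1,\dots,\gamma_0 B_m,\dots,\gamma_0 A_n\} \leadsto^* \emptyset$. Since $\kappa : \forall \underline{x}.\ B_1,\dots,B_m \Rightarrow C \in \Phi$ and $C \sim_{\gamma_0} A_i$, we also have $\Phi \vdash \{A_1,\dots,A_n\} \leadsto_{\kappa,\gamma_0} \{\gamma_0 A_1,\dots,\gamma_0 B_1,\dots,\gamma_0 B_m,\dots,\gamma_0 A_n\}$, and prepending this step finishes the induction.

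The main obstacle is the bookkeeping in the step case: one must check both that the $F(\Phi)$-unifier, after the last argument is discarded, is genuinely (a variant of) an mgu of the original atoms, \emph{and} that the intermediate goal still has the syntactic form $\{A[\text{fresh variable}]\}$ required by the induction hypothesis. Both hinge on the same fact — the extra argument of each head is $f_\kappa(\underline{z})$ with $\underline{z}$ fresh — so unification in $F(\Phi)$ can only instantiate the one matched extra-argument variable, and only to a term whose arguments are again fresh variables. Everything else is routine, with the standard renaming-apart conventions discharging the freshness side conditions.
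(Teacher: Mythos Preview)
Your proposal is correct and follows essentially the same approach as the paper: induction on the length of the $F(\Phi)$-reduction, stripping the proof-recording argument from the unifier at each step (the paper writes this as $\gamma_1 - [f_\kappa(\underline{z})/y_i]$, where you use the restriction $\gamma_0$) and mirroring the step in $\Phi$. You are more explicit than the paper about verifying that the intermediate goal retains fresh variables in the extra positions so that the induction hypothesis applies, but the underlying argument is the same.
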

\begin{proof}
  By induction on the length of reduction.
  \begin{itemize}
     \item Base Case. Suppose the length is one, namely, $F(\Phi) \vdash \{A[y]\} \leadsto_{\kappa, \gamma_1} \emptyset$. Thus there exists $(\kappa : \forall \underline{x} .  \Rightarrow C[f_\kappa]) \in F(\Phi)$ such that $C[f_\kappa] \sim_{\gamma_1} A[y]$.  Thus $  C \sim_{\gamma_1-[f_\kappa/y]} A$. So $\Phi \vdash A\leadsto \emptyset$.
     \item Step Case. Suppose $F(\Phi) \vdash \{A_1[y_1], ..., A_i[y_i],..., A_n[y_n]\} \leadsto_{\kappa, \gamma_1} $ 

$\{\gamma_1 A_1[y_1],...,  \gamma_1 B_1[z_1],...,  \gamma_1 B_m[z_m],..., \gamma_1 A_n[y_n]\} \leadsto^*_{\gamma} \emptyset$,

\noindent  where $\kappa : \forall \underline{x} . \forall \underline{z} . B_1[z_m],..., B_m[z_m] \Rightarrow C[f_\kappa(z_1,..., z_m)] \in F(\Phi)$,

\noindent and $C[f_\kappa(z_1,..., z_m)] \sim_{ \gamma_1} A_i[y_i]$. So we know $C \sim_{\gamma_1-[f_\kappa(\underline{z})/y_i]} A_i$. Let $\gamma = \gamma_1-[f_\kappa(\underline{z})/y_i]$. We have

\noindent $\Phi \vdash \{A_1,..., A_i,..., A_n\} \leadsto \{\gamma A_1,..., \gamma B_1,..., \gamma B_m, ..., \gamma A_n\}$

$\equiv \{\gamma_1 A_1,..., \gamma_1 B_1,..., \gamma_1 B_m, ..., \gamma_1 A_n\}$. By IH,
we know

\noindent $\Phi \vdash \{\gamma_1 A_1,..., \gamma_1 B_1,..., \gamma_1 B_m, ..., \gamma_1 A_n\} \leadsto^* \emptyset$.

  \end{itemize}
  
\end{proof}

\section{Proof of Theorem \ref{ortho:equiv}}

\begin{lemma}
  \label{tm-to-unif}
If $\Phi \vdash \{D_1,..., D_i,..., D_n\} \to_{\kappa, \gamma} \{D_1,.., \sigma E_1,..., \sigma E_m,..., D_n\}$, with $\kappa : \forall \underline{x}. \underline{E} \Rightarrow C \in \Phi$ and $C \mapsto_\sigma D_i$ for any $\gamma$, then $\Phi \vdash \{D_1,..., D_i, ..., D_n\} \leadsto_{\kappa, \gamma} $

\noindent $\{D_1,.., \sigma E_1,..., \sigma E_m,..., D_n\}$. 
\end{lemma}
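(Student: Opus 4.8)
The plan is to exploit the fact that, under the implicit-renaming convention, a term-matcher is a special case of a unifier, so that a single LP-TM step can be replayed \emph{verbatim} as an LP-Unif step with $\sigma$ in the role of the computed unifier. Two points then need checking: (i) that $C$ and $D_i$ really are unifiable via $\sigma$, in the sense $\sigma C \equiv \sigma D_i$; and (ii) that the LP-Unif reduct, which (unlike the LP-TM reduct) applies the computed substitution to the \emph{whole} multiset of goals, nonetheless coincides with $\{D_1,\dots,\sigma E_1,\dots,\sigma E_m,\dots,D_n\}$. Both points collapse to one elementary observation about where $\sigma$ can bind variables.

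First I would unfold the hypothesis: $C \mapsto_\sigma D_i$ means $\sigma C \equiv D_i$, and by the definition of term matching $\mathrm{dom}(\sigma) \subseteq \mathrm{FV}(C)$. By the implicit-renaming convention stated after Definition~\ref{red}, the quantified variables of the clause $\kappa : \forall \underline{x}.\underline{E}\Rightarrow C$ are refreshed so that $\mathrm{FV}(C)$ — hence $\mathrm{dom}(\sigma)$ — is disjoint from $\mathrm{FV}(D_1),\dots,\mathrm{FV}(D_n)$ (and from the domain of the state $\gamma$). Therefore $\sigma D_j \equiv D_j$ for every $j$; in particular $\sigma D_i \equiv D_i \equiv \sigma C$, which is precisely $C \sim_\sigma D_i$ in the sense fixed in the preliminaries ($\gamma A \equiv \gamma A'$).

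With unifiability established, I would simply instantiate the LP-Unif reduction rule of Definition~\ref{red} with the clause $\kappa$ and the unifier $\sigma$, obtaining $\Phi \vdash \{D_1,\dots,D_i,\dots,D_n\} \leadsto_{\kappa,\sigma\cdot\gamma} \{\sigma D_1,\dots,\sigma E_1,\dots,\sigma E_m,\dots,\sigma D_n\}$; applying $\sigma D_j \equiv D_j$ once more rewrites the right-hand side to $\{D_1,\dots,\sigma E_1,\dots,\sigma E_m,\dots,D_n\}$, exactly the LP-TM reduct, while the subscript state agrees since the unifier produced by this step is $\sigma$ itself and the pre-existing state $\gamma$ is left essentially untouched (the only extra bindings concern the freshly introduced clause variables, which are irrelevant to the goal). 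I expect the main — indeed the only — obstacle to be making the variable bookkeeping airtight: everything rests on the renaming convention guaranteeing that $\sigma$ acts as the identity on every atom of the goal except through the substitution it induces on the renamed clause body. Once that is spelled out, the rest is a routine unfolding of the two reduction definitions.
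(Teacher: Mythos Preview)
Your proposal is correct and follows essentially the same line as the paper's proof: observe that a term-matcher $\sigma$ with $\mathrm{dom}(\sigma)\subseteq\mathrm{FV}(C)$ (hence disjoint from the goal's variables by the renaming convention) is also a unifier, instantiate the LP-Unif rule with $\sigma$, and simplify $\sigma D_j\equiv D_j$ to obtain the identical reduct. Your write-up is somewhat more explicit than the paper's about the variable bookkeeping and about why the state subscript reduces from $\sigma\cdot\gamma$ to $\gamma$, but the argument is the same.
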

\begin{proof}
  Since for $\Phi \vdash \{D_1,..., D_i,..., D_n\} \to_{\kappa, \gamma} \{D_1,.., \sigma E_1,..., \sigma E_m,..., D_n\}$, with $\kappa : \forall \underline{x}. \underline{E} \Rightarrow C \in \Phi$ and $C \mapsto_\sigma D_i$, we have $\Phi \vdash \{D_1,..., D_i, ..., D_n\} \leadsto_{\kappa, \sigma \cdot \gamma} \{\sigma D_1,.., \sigma E_1,..., \sigma E_m,..., \sigma D_n\}$. But $\mathrm{dom}(\sigma) \in \mathrm{FV}(C)$, thus we have
  
  \noindent $\Phi \vdash \{D_1,..., D_i, ..., D_n\} \leadsto_{\kappa, \gamma} \{D_1,.., \sigma E_1,..., \sigma E_m,..., D_n\}$. 

\end{proof}

\begin{lemma}\label{struct-to-unif}
  \
  
\noindent Given $\Phi$ is non-overlapping, if $\Phi \vdash \{A_1,..., A_n\}  (\hookrightarrow_{\kappa, \gamma} \cdot \to^\mu_\gamma) \{C_1, ..., C_m\}$, then $\Phi \vdash \{A_1, ..., A_n\} \leadsto^*_\gamma \{C_1, ..., C_m\}$. 
\end{lemma}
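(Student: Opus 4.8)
The plan is to peel off the single $\hookrightarrow$ step together with the \emph{first} term-matching step that is forced to act on the atom introduced by that $\hookrightarrow$ step, to observe that this pair is exactly one LP-Unif step, and then to convert the remaining term-matching reduction into LP-Unif steps via Lemma~\ref{tm-to-unif}.

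First I would unpack the hypothesis. The step $\Phi \vdash \{A_1,\dots,A_n\} \hookrightarrow_{\kappa,\gamma} \{\gamma A_1,\dots,\gamma A_n\}$ comes from a clause $\kappa : \forall\underline{x}.\,\underline{B} \Rightarrow C \in \Phi$ (with $\underline{x}$ freshly renamed) and an index $i$ with $C \sim_\gamma A_i$, i.e.\ $\gamma C \equiv \gamma A_i$, where $\gamma$ is idempotent and $\mathrm{dom}(\gamma) \subseteq \mathrm{FV}(C)\cup\mathrm{FV}(A_i)$. Let $\sigma$ be the restriction of $\gamma$ to $\mathrm{FV}(C)$. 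Then $\sigma C = \gamma C \equiv \gamma A_i$, so $C \mapsto_\sigma \gamma A_i$; moreover, since the clause variables are fresh and $\gamma$ never binds a variable of $\underline{B}$ lying outside $\mathrm{FV}(C)$, we have $\sigma B_j \equiv \gamma B_j$ for every body atom. Hence $\gamma A_i$ is $\to_\kappa$-reducible and $\Phi \vdash \{\gamma A_1,\dots,\gamma A_n\} \to_{\kappa,\gamma} \{\gamma A_1,\dots,\gamma B_1,\dots,\gamma B_m,\dots,\gamma A_n\} =: M_2$. By the definition of LP-Unif reduction (Definition~\ref{red}), the same clause and the same unifier $\gamma$ give directly $\Phi \vdash \{A_1,\dots,A_n\} \leadsto_{\kappa,\gamma} M_2$.

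Next I would argue that the prescribed reduction $\{\gamma A_1,\dots,\gamma A_n\} \to^\mu_\gamma \{C_1,\dots,C_m\}$ may be rearranged so that its first step is exactly the one above. Since $\{C_1,\dots,C_m\}$ is a $\to$-normal form while $\gamma A_i$ is reducible, $\gamma A_i$ must be reduced somewhere along the path; because term-matching rewrites only the resolved atom and leaves its siblings untouched, $\gamma A_i$ survives unchanged until that step, and by the non-overlapping condition the only clause whose head can be matched to $\gamma A_i$ is $\kappa$, so that step is precisely $\to_\kappa$ on $\gamma A_i$. Redexes at distinct atom positions are independent (they commute as a diamond, since rewriting one atom does not alter the others), so this step can be permuted to the front, giving $\{\gamma A_1,\dots,\gamma A_n\} \to_{\kappa,\gamma} M_2 \to^*_\gamma \{C_1,\dots,C_m\}$ with $\{C_1,\dots,C_m\}$ still a normal form. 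Equivalently, one may phrase this as: term-matching reduction is confluent under the non-overlapping condition, so $M_2$, being a reduct of $\{\gamma A_1,\dots,\gamma A_n\}$, must reduce to the unique normal form $\{C_1,\dots,C_m\}$.

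Finally, each step of $M_2 \to^*_\gamma \{C_1,\dots,C_m\}$ is turned into an LP-Unif step by Lemma~\ref{tm-to-unif}; the matchers involved touch only freshly renamed head variables, so they do not affect the other atoms and the running state stays $\gamma$ up to bindings on such fresh variables. Composing with $\{A_1,\dots,A_n\} \leadsto_{\kappa,\gamma} M_2$ yields $\Phi \vdash \{A_1,\dots,A_n\} \leadsto^*_\gamma \{C_1,\dots,C_m\}$, as required. I expect the rearrangement/confluence argument of the third paragraph to be the main obstacle: it is the only place where the non-overlapping hypothesis is genuinely used, and it requires making precise both the independence of term-matching redexes at different atoms and the uniqueness of the matching clause at $\gamma A_i$; the accompanying bookkeeping for the state subscript on $\leadsto^*_\gamma$ is routine but needs care.
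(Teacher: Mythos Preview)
Your proof is correct and follows the same strategy as the paper: combine $\hookrightarrow_\kappa$ with the $\to_\kappa$ step on $\gamma A_i$ into a single $\leadsto_\kappa$ step, then apply Lemma~\ref{tm-to-unif} to the remaining term-matching tail. The paper simply asserts that the first $\to$ step after $\hookrightarrow$ uses $\kappa$ on $\gamma A_i$ (invoking non-overlapping only to exclude a different clause at that atom), whereas you supply the permutation/confluence justification for why this step can be moved to the front; your version is therefore more careful at precisely the point you flagged as the main obstacle.
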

\begin{proof}
  Given $\Phi \vdash \{A_1,..., A_n\} (\hookrightarrow_{\kappa, \gamma} \cdot \to^\mu_\gamma) \{C_1, ..., C_m\}$, we know the actual reduction path must be of the form $\Phi \vdash \{A_1,..., A_n\}\hookrightarrow_{\kappa, \gamma} \{\gamma A_1,..., \gamma A_n\} \to_{\kappa, \gamma} \{\gamma A_1, ..., \gamma B_1, ..., \gamma B_n,..., \gamma A_n\} \to^\mu_{\gamma} \{C_1, ..., C_m\}$. Note that $\gamma$ is unchanged along the term-matching reduction. The $\to$ following right after $\hookrightarrow$ can not use a different rule other than $\kappa$, it would mean $\gamma A_i \equiv \gamma C$ with $\kappa : \forall \underline{x}.\underline{B} \Rightarrow C \in \Phi$ and $A_i \equiv \sigma B$ with $\kappa' : \forall \underline{x}. \underline{D} \Rightarrow B \in \Phi$. This implies $\gamma C \equiv \gamma \sigma B$, contradicting
the non-overlapping restriction. Thus we have $\Phi \vdash \{A_1,..., A_n\}\leadsto_{\kappa,  \gamma} \{\gamma A_1, ..., \gamma B_1, ..., \gamma B_n,..., \gamma A_n\}$. By Lemma \ref{tm-to-unif},
we have $\Phi \vdash \{A_1,..., A_n\}\leadsto_{\kappa,  \gamma} \{\gamma A_1, ..., \gamma B_1, ..., \gamma B_n,..., \gamma A_n\} \leadsto^*_\gamma \{C_1, ..., C_m\}$

\end{proof}

\begin{lemma}
  Given $\Phi$ is non-overlapping, if $\Phi \vdash \{A_1,..., A_n\} (\to^\mu \cdot \hookrightarrow^1)^*_\gamma  \{C_1, ..., C_m\}$ with $\{C_1, ..., C_m\}$ in $\to^\mu \cdot \hookrightarrow^1$-normal form, then $\Phi \vdash \{A_1, ..., A_n\} \leadsto^*_\gamma \{C_1, ..., C_m\}$ with $\{C_1, ..., C_m\}$ in $\leadsto$-normal form. 
\end{lemma}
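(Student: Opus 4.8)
The plan is to induct on the number $k$ of substitutional steps $\hookrightarrow$ appearing in the given LP-Struct path. By the canonical shape of LP-Struct paths recorded just after the definition of LP-Struct, a nontrivial terminating path decomposes as
\[ \Phi \vdash \{\underline{A}\} \to^{\mu} \{\underline{D_0}\} \;(\hookrightarrow\cdot\to^{\mu})\; \{\underline{D_1}\} \;(\hookrightarrow\cdot\to^{\mu})\; \cdots \;(\hookrightarrow\cdot\to^{\mu})\; \{\underline{C}\}, \]
consisting of an initial $\to^{\mu}$ segment followed by $k$ blocks of the form $\hookrightarrow\cdot\to^{\mu}$; each $\to^{\mu}$ segment is finite because the whole path is assumed to terminate at $\{\underline{C}\}$. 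I will convert each piece into a $\leadsto^*$-reduction and concatenate, keeping track of the accumulated substitution.

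\emph{Base case} ($k=0$). Here $\Phi \vdash \{\underline{A}\} \to^{\mu}_{\gamma} \{\underline{C}\}$ is a finite sequence of term-matching steps, and Lemma~\ref{tm-to-unif} rewrites each $\to$ step into a $\leadsto$ step that leaves the state unchanged, giving $\Phi \vdash \{\underline{A}\} \leadsto^*_{\gamma} \{\underline{C}\}$. Finally $\{\underline{C}\}$, being in $\to^{\mu}\cdot\hookrightarrow^1$-normal form, is in $\leadsto$-normal form: if a head of some clause in $\Phi$ unified with some $C_j$, a $\hookrightarrow$ step would apply, contradicting normality.

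\emph{Step case} ($k>0$). Peel off the initial $\to^{\mu}$ segment $\Phi \vdash \{\underline{A}\} \to^{\mu} \{\underline{D_0}\}$ and lift it with Lemma~\ref{tm-to-unif} to $\Phi \vdash \{\underline{A}\} \leadsto^* \{\underline{D_0}\}$ (state unchanged). The first block $\Phi \vdash \{\underline{D_0}\} (\hookrightarrow_{\kappa,\delta}\cdot\to^{\mu}_{\delta}) \{\underline{D_1}\}$ is exactly the hypothesis of Lemma~\ref{struct-to-unif}, which — using non-overlapping — yields $\Phi \vdash \{\underline{D_0}\} \leadsto^*_{\delta} \{\underline{D_1}\}$. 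Since $\{\underline{D_1}\}$ is in $\to$-normal form, the remaining suffix from $\{\underline{D_1}\}$ to $\{\underline{C}\}$ is again a path of the canonical shape, with $k-1$ blocks and the same $\to^{\mu}\cdot\hookrightarrow^1$-normal endpoint, so the induction hypothesis gives $\Phi \vdash \{\underline{D_1}\} \leadsto^*_{\gamma} \{\underline{C}\}$ with $\{\underline{C}\}$ in $\leadsto$-normal form. Concatenating the three $\leadsto^*$-reductions and noting that the recorded substitution is updated only at the $\hookrightarrow$/unification steps — which the lifted $\leadsto$-steps reproduce exactly — gives $\Phi \vdash \{\underline{A}\} \leadsto^*_{\gamma} \{\underline{C}\}$ as required.

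\emph{Main obstacle.} The load-bearing ingredient is Lemma~\ref{struct-to-unif}, and through it the non-overlapping hypothesis: it is what forces the term-matching reduction following a substitutional step to re-use the clause $\kappa$ just applied, so that the block $\hookrightarrow\cdot\to^{\mu}$ collapses to a single $\leadsto$-chain with a stable state. Given that lemma, everything else is bookkeeping: verifying that the $\to^{\mu}$ segments are finite (they are, by termination of the path), that the lift in Lemma~\ref{tm-to-unif} leaves the $\leadsto$-state untouched, and that the $\gamma$ accumulated along the $\leadsto^*$-path coincides with the $\gamma$ recorded along the LP-Struct path.
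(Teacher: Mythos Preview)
Your proof is correct and follows essentially the same approach as the paper: re-bracket the LP-Struct path as an initial $\to^{\mu}$ segment followed by $(\hookrightarrow\cdot\to^{\mu})$ blocks, then apply Lemma~\ref{tm-to-unif} to the former and Lemma~\ref{struct-to-unif} to each of the latter. Your explicit induction on the number of $\hookrightarrow$ steps and your argument that $\to^{\mu}\cdot\hookrightarrow^1$-normality of $\{\underline{C}\}$ implies $\leadsto$-normality are more careful than the paper's terse sketch, but the underlying argument is identical.
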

\begin{proof}
  Since $\Phi \vdash \{A_1, ..., A_n\}  (\to^\mu \cdot \hookrightarrow^1)^*_\gamma \{C_1, ..., C_m\}$, this means the reduction path must be of the form $\Phi \vdash \{A_1,..., A_n\} \to^\mu \cdot \hookrightarrow^1 \cdot \to^\mu \cdot \hookrightarrow^1 ... \to^\mu \cdot \hookrightarrow^1 \cdot \to^\mu  \{C_1, ..., C_m\}$. Thus $\Phi \vdash \{A_1,..., A_n\} \to^\mu \cdot (\hookrightarrow^1 \cdot \to^\mu) \cdot (\hookrightarrow^1 ... \to^\mu) \cdot (\hookrightarrow^1 \cdot  \to^\mu)  \{C_1, ..., C_m\}$. By Lemma \ref{tm-to-unif} and Lemma \ref{struct-to-unif}, we have $\Phi \vdash \{A_1, ..., A_n\} \leadsto^*_\gamma \{C_1, ..., C_m\}$ with $\{C_1, ..., C_m\}$ in $\leadsto$-normal form.
  
\end{proof}

\begin{lemma}
  Given $\Phi$ is a non-overlapping, if $\Phi \vdash \{A_1, ..., A_n\} \leadsto^*_\gamma \{C_1, ..., C_m\}$ with $\{C_1, ..., C_m\}$ in $\leadsto$-normal form , then $\Phi \vdash \{A_1, ..., A_n\} (\to^\mu \cdot \hookrightarrow^1)^*_\gamma \{C_1, ..., C_m\}$ with $\{C_1, ..., C_m\}$ in $\to^\mu \cdot \hookrightarrow^1$-normal form. 
\end{lemma}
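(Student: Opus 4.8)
The plan is to induct on the length of the given reduction $\Phi \vdash \{A_1,\dots,A_n\} \leadsto^*_\gamma \{C_1,\dots,C_m\}$, at each stage pushing all currently available term-matching steps into a leading $\to^\mu$ phase. The two facts that make this possible, both consequences of non-overlapping, are: (i) if an atom $D$ is an \emph{instance} of the head $C$ of a clause $\kappa$ (so a $\to$-step on $D$ with $\kappa$ is enabled), then $\kappa$ is the \emph{unique} clause whose head unifies with $D$, and the unifier of $C$ and $D$ only substitutes the fresh variables of $\kappa$; hence such a $\leadsto$-step is literally a $\to$-step and has no effect on the other atoms of the goal; and (ii) such a $\to$-step commutes with every other $\leadsto$-step, so it can be permuted to the front of a reduction. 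Fact (i) is immediate: if $\sigma C \equiv D$ and $D \sim_\delta E$ for the head $E$ of $\kappa'$, then $\delta\sigma C \equiv \delta D \equiv \delta E$ is a common instance of $C$ and $E$, forcing $\kappa'=\kappa$; and ``being an instance of $C$'' is preserved by further substitution, so the relevant unifier restricts to $\mathrm{FV}(C)$.

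Before the induction I would record a finiteness observation: whenever $\Phi \vdash G \leadsto^*_\gamma S$ with $S$ in $\leadsto$-normal form, the reduction $\to^\mu$ from $G$ is finite. Indeed, a divergent $\to$-reduction from $G$ yields an infinite chain $D^{(0)},D^{(1)},D^{(2)},\dots$ of matchable atoms where $D^{(0)}$ is an atom of $G$ and each $D^{(k+1)}$ is a body atom produced by unfolding $D^{(k)}$ with its (by (i), forced) clause. Since no atom of $S$ unifies with any clause head and ``matchable'' is preserved under substitution, the residual of each $D^{(k)}$ must be resolved somewhere along $G \leadsto^* S$, and resolving $D^{(k)}$ produces a residual of $D^{(k+1)}$; this forces infinitely many steps on the finite path $G \leadsto^* S$, a contradiction. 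This is precisely where the hypothesis that $\leadsto$ \emph{reaches} a normal form is used.

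Now for the induction. In the base case the reduction has length $0$, so $\{A_1,\dots,A_n\}=\{C_1,\dots,C_m\}$ is $\leadsto$-normal; then no atom even matches a clause head, so the multiset is simultaneously $\to$-normal and $\hookrightarrow$-normal, i.e.\ a $\to^\mu\cdot\hookrightarrow^1$-normal form, and the empty LP-Struct path works. For the step case, distinguish two cases on the current goal $G$. If $G$ contains an atom that is an instance of a clause head (a $\to$-redex), then by (i)--(ii) the given path rewrites as $G \to_\kappa G' \leadsto^*_\gamma S$ with a strictly shorter tail; applying the induction hypothesis to $G' \leadsto^*_\gamma S$ gives $G'(\to^\mu\cdot\hookrightarrow^1)^*_\gamma S$ starting with a (possibly trivial) $\to^\mu$ phase, and prepending $\to_\kappa$ just extends that first phase — its $\to$-normal form is unchanged and exists by the finiteness observation — so $G(\to^\mu\cdot\hookrightarrow^1)^*_\gamma S$. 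Otherwise $G$ is $\to$-normal; if it is also $\leadsto$-normal the path is empty and we are in the base case, and if not, its first step is necessarily a substitutional step $G \leadsto_{\kappa,\gamma_1} G'$, which LP-Struct realises as $G \to^\mu G \hookrightarrow_{\kappa,\gamma_1} \gamma_1 G \to_\kappa G'$ (the $\to^\mu$ being trivial, and $\gamma_1 A_i$ being an instance of $C_\kappa$, so that $\gamma_1 G \to^\mu$ the $\to$-normal form of $\gamma_1 G$ factors through $G'$); applying the induction hypothesis to the strictly shorter $G' \leadsto^*_\gamma S$ and splicing, with the leading $\to^\mu$ of the IH path absorbed into the $\to_\kappa$ above, yields the required $G(\to^\mu\cdot\hookrightarrow^1)^*_\gamma S$ of the shape $\to^\mu\cdot\hookrightarrow\cdot\to^\mu\cdots$. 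Its final goal $S$ is $\leadsto$-normal, hence also $\to$-normal and $\hookrightarrow$-normal, i.e.\ in $\to^\mu\cdot\hookrightarrow^1$-normal form. The accumulated answer substitution is preserved because the $\hookrightarrow$-steps use exactly the unifiers of the corresponding $\leadsto$-steps while the inserted $\to$-steps only introduce bindings of fresh clause variables (and the $\to$-step subscript may be chosen freely).

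The main obstacle I expect is making the permutation in (ii) fully rigorous: one must track residuals of atoms across the substitutions performed by the intervening steps and verify that extracting a term-matching step to the front leaves the remainder of the derivation — its rule choices and its accumulated substitution — intact. This is conceptually routine given (i) (a matching step touches only the clause's own fresh variables, so it genuinely commutes), but the bookkeeping with renamings and composed idempotent substitutions is the delicate part; the finiteness observation is the other point requiring care, though its proof is short.
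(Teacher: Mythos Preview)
Your proof is correct and shares the paper's overall scheme: induction on the length of the $\leadsto$-reduction, simulating each $\leadsto_{\kappa,\gamma}$ by $\hookrightarrow_{\kappa,\gamma}\cdot\to_\kappa$. The paper's argument is considerably terser: it performs exactly that rewriting and concludes with a $(\hookrightarrow\cdot\to)^*$ path, using non-overlapping only to rule out a \emph{different} clause matching the selected atom $A_i$. It does not check that each $\hookrightarrow$ is applied at a $\to$-normal goal, nor that the intermediate $\to$-phases terminate --- both needed for the result to be a genuine $(\to^\mu\cdot\hookrightarrow^1)^*$ path.

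Your two additions address precisely these points. The finiteness observation recovers termination of $\to$ locally from the existence of a terminating $\leadsto$-reduction, which matters because this lemma assumes only non-overlapping, not productivity. And your case split --- pulling a $\to$-redex to the front via the commutation fact (ii) when one exists, and falling back to the paper's $\hookrightarrow\cdot\to$ simulation when the goal is already $\to$-normal --- is exactly what reshapes the path into the required $\to^\mu\cdot\hookrightarrow\cdot\to^\mu\cdots$ form. So your argument is not a different route so much as a completion of the paper's; the commutation bookkeeping you flag as delicate is real, but your observation that a match-step's unifier lives entirely on fresh clause variables is the right lever, and it also drives the residual-tracking in your finiteness argument.
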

\begin{proof}
  By induction on the length of $\leadsto^*_\gamma$. 

\begin{itemize}
\item Base Case: $\Phi \vdash \{A_1,...,A_i, ..., A_n\} \leadsto_{\kappa, \gamma} \{\gamma A_1,...,\gamma B_1, ..., \gamma B_m ..., \gamma A_n \}$ with $\kappa : \forall \underline{x}. \ \underline{B}\Rightarrow C \in \Phi$, $C \sim_\gamma A_i$ and $\{\gamma A_1,...,\gamma B_1, ..., \gamma B_m ..., \gamma A_n \}$ in $\leadsto$-normal form . We have $\Phi \vdash \{A_1,...,A_i, ..., A_n\} \hookrightarrow_{\kappa, \gamma} \{\gamma A_1,...,\gamma A_i, ..., \gamma A_n \} \to_\kappa \{\gamma A_1,...,\gamma B_1, ..., \gamma B_m ..., \gamma A_n\}$ with $\{\gamma A_1,...,\gamma B_1, ..., \gamma B_m ..., \gamma A_n\}$ in $\to^\mu\cdot \hookrightarrow$-normal form. Note that there can not be another
$\kappa' : \forall \underline{x}. \underline{B} \Rightarrow C' \in \Phi$ such that $\sigma C' \equiv A_i$, since this would means $\gamma C \equiv \gamma A_i \equiv \gamma \sigma C'$, violating the non-overlapping requirement.

\item Step Case: $\Phi \vdash \{A_1, ..., A_i, ..., A_n\} \leadsto_{\kappa, \gamma} \{\gamma A_1,..., \gamma B_1, ..., \gamma B_l, ..., \gamma A_n\} \leadsto^*_{\gamma'} \{C_1, ..., C_m\}$ with $\kappa : \forall \underline{x}. B_1,..., B_l \Rightarrow C \in \Phi$ and $C \sim_\gamma A_i$. 

\noindent We have $\Phi \vdash \{A_1, ..., A_i, ..., A_n\} \hookrightarrow_{\kappa, \gamma} \{\gamma A_1, ..., \gamma A_i, ..., \gamma A_n\}\to $

\noindent $\{\gamma A_1,..., \gamma B_1, ..., \gamma B_m, ..., \gamma A_n\}$. By the non-overlapping requirement, there can not be another $\kappa' : \forall \underline{x}. \underline{D} \Rightarrow C' \in \Phi$ such that $\sigma C' \equiv A_i$. 

\noindent By IH, we know $\Phi \vdash \{\gamma A_1,..., \gamma B_1, ..., \gamma B_m, ..., \gamma A_n\} (\to^\mu \cdot \hookrightarrow)_{\gamma'}^* \{C_1, ..., C_m\}$. Thus we conclude that $\Phi \vdash \{A_1, ..., A_i, ..., A_n\} (\hookrightarrow \cdot \to)^*_{\gamma'} \{C_1, ..., C_m\}$. 

\end{itemize}

\end{proof}

\section{Proof of Theorem \ref{prod-non-overlap}}
We assume a non-overlapping and productive program $\Phi$ in this section.

\begin{lemma}
  If $\Phi \vdash \{A_1,..., A_n\} \leadsto \{B_1,..., B_m\}$, then $\Phi \vdash \{A_1,..., A_n\} (\to^\mu \cdot \hookrightarrow^1)^* \{C_1,..., C_l\}$ and $\Phi \vdash \{B_1,..., B_m\} \to^* \{C_1,..., C_l\}$.
\end{lemma}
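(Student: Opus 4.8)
The plan is to simulate the single $\leadsto$ step by \emph{at most one} $\hookrightarrow$ step of LP-Struct, taking the witness $\{C_1,\dots,C_l\}$ to be the $\to$-normal form of the resolvent. Write the given step as $\Phi\vdash\{A_1,\dots,A_i,\dots,A_n\}\leadsto_{\kappa,\gamma}\{B_1,\dots,B_m\}$, where $\kappa:\forall\underline{x}.\,E_1,\dots,E_l\Rightarrow C\in\Phi$ is a fresh renaming, $C\sim_\gamma A_i$ with $\gamma$ the most general (idempotent) unifier, and $\{B_1,\dots,B_m\}=\{\gamma A_1,\dots,\gamma E_1,\dots,\gamma E_l,\dots,\gamma A_n\}$. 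First I would collect the facts about $\to$ that hold under the standing hypotheses: by productivity $\to$ is terminating, so every multiset has a $\to$-normal form, written $\mathrm{nf}(-)$; term matching rewrites one atom in place and leaves the others untouched, and by non-overlapping every atom is matched by at most one clause head and then by a unique matcher, so redexes in distinct atoms are independent and $\to$ is locally confluent, hence (Newman) confluent, so $\mathrm{nf}(-)$ is single-valued and is the multiset union of the $\mathrm{nf}$'s of the individual atoms; finally $\to$ is stable under substitution ($\mathcal M\to^*\mathcal N$ implies $\delta\mathcal M\to^*\delta\mathcal N$), whence $\mathrm{nf}(\delta\,\mathcal M)=\mathrm{nf}(\delta\,\mathrm{nf}(\mathcal M))$.

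Next I would pin down the behaviour of the chosen atom $A_i$. If some head $C'\ne C$ matched $A_i$, say $\sigma'C'\equiv A_i$, then $\gamma\sigma'C'\equiv\gamma A_i\equiv\gamma C$ would be a common instance of $C'$ and $C$, contradicting non-overlapping; so $C$ is the only head that can match $A_i$, and likewise $\gamma A_i$ is matched only by $C$ (with matcher $\gamma$ restricted to $\mathrm{FV}(C)$) and reduces in exactly one way, $\gamma A_i\to_\kappa\{\gamma E_1,\dots,\gamma E_l\}$. This gives two cases. If $C\mapsto_\sigma A_i$ already, then by freshness of the clause $\gamma$ acts only on $\mathrm{FV}(C)$, so $\gamma A_j=A_j$ for $j\ne i$ and $\gamma E_k=\sigma E_k$; hence $\{B_1,\dots,B_m\}$ is literally the reduct of the single step $\{A_1,\dots,A_n\}\to_\kappa\{B_1,\dots,B_m\}$. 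Taking $\{C_1,\dots,C_l\}:=\mathrm{nf}(\{A_1,\dots,A_n\})$ gives $\{A_1,\dots,A_n\}\to^\mu\{C_1,\dots,C_l\}$ (a $(\to^\mu\cdot\hookrightarrow^1)^*$-path with no substitutional step), and since $\{B_1,\dots,B_m\}$ is a $\to$-reduct of $\{A_1,\dots,A_n\}$ and normal forms are unique, $\{B_1,\dots,B_m\}\to^*\{C_1,\dots,C_l\}$.

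In the remaining case $C$ does not match $A_i$, so by the above no head matches $A_i$; thus $\mathrm{nf}(\{A_i\})=\{A_i\}$ and $\mathcal D:=\mathrm{nf}(\{A_1,\dots,A_n\})$ still contains $A_i$ verbatim. I would then fire the substitutional step $\mathcal D\hookrightarrow_{\kappa,\gamma}\gamma\mathcal D$ (legitimate since $A_i\in\mathcal D$ is in $\to$-normal form and $C\sim_\gamma A_i$), set $\{C_1,\dots,C_l\}:=\mathrm{nf}(\gamma\mathcal D)$, and compare the two computations: $\mathrm{nf}(\gamma\mathcal D)=\mathrm{nf}(\gamma A_i)\,\uplus\,\biguplus_{j\ne i}\mathrm{nf}(\gamma\,\mathrm{nf}(A_j))=\mathrm{nf}(\gamma A_i)\,\uplus\,\biguplus_{j\ne i}\mathrm{nf}(\gamma A_j)$ by the substitution fact, whereas $\mathrm{nf}(\{B_1,\dots,B_m\})=\mathrm{nf}(\{\gamma E_1,\dots,\gamma E_l\})\,\uplus\,\biguplus_{j\ne i}\mathrm{nf}(\gamma A_j)$ and $\mathrm{nf}(\{\gamma E_1,\dots,\gamma E_l\})=\mathrm{nf}(\gamma A_i)$ since $\gamma A_i\to_\kappa\{\gamma E_1,\dots,\gamma E_l\}$ is the unique first step. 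The two normal forms coincide, so $\{B_1,\dots,B_m\}\to^*\{C_1,\dots,C_l\}$ and $\{A_1,\dots,A_n\}\to^\mu\mathcal D\hookrightarrow_{\kappa,\gamma}\gamma\mathcal D\to^\mu\{C_1,\dots,C_l\}$ is the required $(\to^\mu\cdot\hookrightarrow^1)^*$-path. The main obstacle I expect is making the bookkeeping around the unifier watertight — that $\gamma$ restricted to the clause variables is exactly the matcher producing $\{\gamma E_1,\dots,\gamma E_l\}$ and does not disturb the other goal atoms — together with the confluence/uniqueness-of-normal-forms fact, on which both the commutation $\mathrm{nf}(\gamma\,\mathrm{nf}(A_j))=\mathrm{nf}(\gamma A_j)$ and the case-(a) argument rely.
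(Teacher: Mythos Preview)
Your proof is correct and follows the same two-case split as the paper (whether the unifier $\gamma$ is already a matcher of $C$ onto $A_i$). In the match case both you and the paper observe that the $\leadsto$-step is literally a $\to$-step and then take the $\to$-normal form as the witness.

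The difference is in the non-match case. The paper's proof simply writes the path as $\{A_1,\dots,A_n\}\hookrightarrow_{\kappa,\gamma}\cdot\to_{\kappa}\{B_1,\dots,B_m\}\to^\mu\{C_1,\dots,C_l\}$, i.e.\ it fires the substitutional step first and only then normalises. Taken literally this is not of the shape $(\to^\mu\cdot\hookrightarrow^1)^*$ unless $\{A_1,\dots,A_n\}$ happens to be $\to$-normal already. You close this gap: you first $\to^\mu$-normalise to $\mathcal D$, use non-overlapping to argue that $A_i$ (being unmatched by any head) survives in $\mathcal D$, then apply $\hookrightarrow_{\kappa,\gamma}$ and normalise again, and finally invoke confluence of $\to$ (local confluence from non-overlapping, termination from productivity, Newman) together with stability under substitution to identify $\mathrm{nf}(\gamma\mathcal D)$ with $\mathrm{nf}(\{B_1,\dots,B_m\})$. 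This is exactly the bookkeeping the paper's proof leaves implicit, so your version is the more complete one. (One could also avoid the full confluence argument by tracking that $A_i$ is untouched along $\{A_1,\dots,A_n\}\to^*\mathcal D$ and pushing $\gamma$ through that reduction, but your route via Newman is clean and correct.)
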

\begin{proof}
  Suppose $\Phi \vdash \{A_1,..., A_n\} \leadsto_{\kappa, \gamma} \{\gamma A_1,...,\gamma E_1, ..., \gamma E_l,..., \gamma A_n\}$, with $\kappa : \underline{E} \Rightarrow D \in \Phi$ and $D \sim_\gamma A_i$. Suppose $D \not \mapsto_\gamma A_i$. In this case, we have $\Phi \vdash \{A_1,..., A_n\} \hookrightarrow_{\kappa, \gamma} \cdot \to_{\kappa, \gamma} \{\gamma A_1,...,\gamma E_1, ..., \gamma E_q,..., \gamma A_n\} \to^\mu_\gamma \{C_1,..., C_l\}$. Suppose $D \mapsto_\gamma A_i$, we have $\Phi \vdash \{A_1,..., A_n\} \to_{\kappa, \gamma} \{\gamma A_1,...,\gamma E_1, ..., \gamma E_q,..., \gamma A_n\} \to^\mu_\gamma \{C_1,..., C_l\}$.
\end{proof}
\begin{lemma}\label{I}
  If $\Phi \vdash \{A_1,..., A_n\} \hookrightarrow_{\kappa,\gamma} \{\gamma A_1,..., \gamma A_n\} \to^\mu_\gamma \{B_1,..., B_m\}$, then $\Phi \vdash \{A_1,..., A_n\} \leadsto^*_\gamma \{B_1,..., B_m\}$. \end{lemma}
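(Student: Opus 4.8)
The plan is to recognize, first, that the displayed reduction is nothing but the composite $\hookrightarrow_{\kappa,\gamma}\cdot\to^{\mu}_{\gamma}$ spelled out, so the statement is literally the hypothesis of Lemma \ref{struct-to-unif}. Since the present section assumes $\Phi$ is non-overlapping (and productive), Lemma \ref{struct-to-unif} applies verbatim and gives the conclusion. So the bottom-line proof is one line: ``By Lemma \ref{struct-to-unif}.'' The rest of this proposal describes the underlying argument in case one wants it self-contained in this section.

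The substitutional step uses a clause $\kappa:\forall\underline{x}.\underline{E}\Rightarrow C\in\Phi$ with $C\sim_{\gamma}A_i$. By idempotence of the computed unifier we have $\gamma C\equiv\gamma A_i$, hence $C\mapsto_{\gamma|_{\mathrm{FV}(C)}}\gamma A_i$, so the atom $\gamma A_i$ inside $\{\gamma A_1,\dots,\gamma A_n\}$ is term-matchable by the head of $\kappa$; in particular $\{\gamma A_1,\dots,\gamma A_n\}$ is not in $\to$-normal form, so the tail $\to^{\mu}_{\gamma}$ has at least one step. By non-overlapping, $\kappa$ is the \emph{only} clause whose head matches $\gamma A_i$, so term-matching restricted to the single atom $\gamma A_i$ is deterministic; and since a $\to$-step at one atom leaves the other atoms untouched, term-matching steps at distinct atoms commute. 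Hence the path $\to^{\mu}_{\gamma}$ may be reorganised so that its first step resolves $\gamma A_i$ with $\kappa$, turning $\{\gamma A_1,\dots,\gamma A_n\}$ into $\{\gamma A_1,\dots,\gamma E_1,\dots,\gamma E_l,\dots,\gamma A_n\}$ — here the matcher $\gamma|_{\mathrm{FV}(C)}$ acts on $\underline{E}$ exactly as $\gamma$ does, because the body-only variables of $\kappa$ are fresh and untouched by $\gamma$. Composing $\hookrightarrow_{\kappa,\gamma}$ with this $\to_{\kappa,\gamma}$ is, by Definition \ref{red}, precisely one $\leadsto_{\kappa,\gamma}$ step out of $\{A_1,\dots,A_n\}$. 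The remaining part of $\to^{\mu}_{\gamma}$ is finite by productivity, and each of its $\to_{\gamma}$-steps is converted, with the state $\gamma$ kept fixed, into a $\leadsto_{\gamma}$-step by Lemma \ref{tm-to-unif}; iterating this by induction on the length of that tail gives $\{\gamma A_1,\dots,\gamma E_1,\dots,\gamma A_n\}\leadsto^{*}_{\gamma}\{B_1,\dots,B_m\}$. Concatenation yields $\Phi\vdash\{A_1,\dots,A_n\}\leadsto^{*}_{\gamma}\{B_1,\dots,B_m\}$.

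The one genuinely delicate step is the reordering: one must be certain that moving the $\kappa$-step on $\gamma A_i$ to the front of the term-matching path does not change the normal form reached, and this is exactly where non-overlapping is essential — it forces each atom to have at most one matching head, so single-atom term-matching is confluent, and together with commutation of term-matching at independent atoms this makes the $\to$-normal form independent of the chosen order. Productivity is used only to ensure the tail of $\to^{\mu}_{\gamma}$ is finite, so that the induction converting it to $\leadsto$-steps terminates; everything else is routine bookkeeping about idempotent unifiers and fresh body-only variables.
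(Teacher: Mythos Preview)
Your proposal is correct, and your one-line observation is the cleanest way to dispatch it: Lemma~\ref{I} is literally the hypothesis-to-conclusion of Lemma~\ref{struct-to-unif} with the intermediate state spelled out, and since the ambient section already assumes $\Phi$ non-overlapping, that lemma applies directly. The paper does not take this shortcut; it re-proves the statement from scratch, asserting that the first $\to$-step after $\hookrightarrow_{\kappa,\gamma}$ uses $\kappa$ on $\gamma A_i$, then converting $\hookrightarrow_{\kappa,\gamma}\cdot\to_{\kappa}$ into a single $\leadsto_{\kappa,\gamma}$ and handling the tail by Lemma~\ref{tm-to-unif}.

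Your self-contained argument follows the same skeleton but is more careful at the one place the paper is terse: the paper simply writes the first $\to$-step as $\to_{\kappa}$ without justifying why the $\to^{\mu}$ path may be taken to begin there, whereas you explicitly invoke non-overlapping to get uniqueness of the matching clause for $\gamma A_i$ and commutation of $\to$-steps on distinct atoms to reorder. That reordering really is needed in general (after applying $\gamma$, several $\gamma A_j$ may become matchable, so the given $\to^{\mu}$ path need not start at $\gamma A_i$), so your extra care is warranted. The payoff of your route is clarity about where each hypothesis is used; the payoff of simply citing Lemma~\ref{struct-to-unif} is avoiding a near-verbatim duplication of its proof.
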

\begin{proof}
Suppose $\Phi \vdash \{A_1,..., A_n\} \hookrightarrow_{\kappa,\gamma} \{\gamma A_1,..., \gamma A_n\} \to^\mu_\gamma \{B_1,..., B_m\}$, we have $\Phi \vdash \{A_1,..., A_n\} \hookrightarrow_{\kappa,\gamma} \{\gamma A_1,..., \gamma A_n\} \to_\kappa \{\gamma A_1,..., \gamma C_1,..., \gamma C_l. \gamma A_n\}\to^\mu_\gamma \{B_1,..., B_m\}$ with $\kappa : \underline{C} \Rightarrow D \in \Phi$ and $D \sim_\gamma A_i$. Thus we have $\Phi \vdash \{A_1,..., A_n\} \leadsto_{\kappa,\gamma}  \{\gamma A_1,..., \gamma C_1,..., \gamma C_l,..., \gamma A_n\}$. By Lemma \ref{tm-to-unif}, we have $\Phi \vdash \{A_1,..., A_n\} \leadsto_{\kappa,\gamma}  \{\gamma A_1,..., \gamma C_1,..., \gamma C_l, ...., \gamma A_n\} \leadsto^*_\gamma \{B_1,..., B_m\}$.
\end{proof}
\begin{lemma}
If $\Phi \vdash \{A_1,..., A_n\} (\to^\mu \cdot \hookrightarrow^1)^*_\gamma \{B_1,..., B_m\}$, then $\Phi \vdash \{A_1,..., A_n\} \leadsto^*_\gamma \{B_1,..., B_m\}$. 
\end{lemma}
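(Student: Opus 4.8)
The plan is to induct on the number of substitutional steps ($\hookrightarrow^1$) occurring in the given $(\to^\mu \cdot \hookrightarrow^1)^*$-path. The two ingredients are productivity of $\Phi$ — which guarantees that each maximal term-matching segment $\to^\mu$ is a \emph{finite} $\to$-reduction, so that the induction is well-founded and every chain we manipulate is finite — together with the two conversion lemmas already established: Lemma \ref{tm-to-unif}, which says that a single step $\to_{\kappa,\gamma}$ is also a step $\leadsto_{\kappa,\gamma}$ with the state left untouched, and Lemma \ref{I}, which says that a block $\hookrightarrow_{\kappa,\gamma} \cdot \to^\mu_\gamma$ collapses to $\leadsto^*_\gamma$.

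For the base case, suppose the path contains no $\hookrightarrow^1$ step, so it is a pure $\to^\mu$-reduction from $\{A_1,\dots,A_n\}$ to $\{B_1,\dots,B_m\}$. By productivity this is a finite chain $\{A_1,\dots,A_n\} \to_{\kappa_1,\gamma} \cdots \to_{\kappa_l,\gamma} \{B_1,\dots,B_m\}$. Applying Lemma \ref{tm-to-unif} to each step in turn, each $\to_{\kappa_j,\gamma}$ is replayed as $\leadsto_{\kappa_j,\gamma}$; since term-matching never alters the state, the accumulated state remains $\gamma$ throughout, yielding $\Phi \vdash \{A_1,\dots,A_n\} \leadsto^*_\gamma \{B_1,\dots,B_m\}$.

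For the induction step, recall (as the paper notes just after the definition of LP-Struct) that a nontrivial finite $(\to^\mu \cdot \hookrightarrow^1)^*$-path has the shape $\{A_1,\dots,A_n\} \to^\mu \cdot \hookrightarrow \cdot \to^\mu \cdot \hookrightarrow \cdots \to^\mu \{B_1,\dots,B_m\}$. I would split off the first block, writing the path as $\{A_1,\dots,A_n\} \to^\mu \{A'_1,\dots\} \hookrightarrow_{\kappa,\gamma_0} \{\gamma_0 A'_1,\dots\} \to^\mu_{\gamma_0} \{A''_1,\dots\}$ followed by a $(\to^\mu \cdot \hookrightarrow^1)^*$-tail that contains one fewer $\hookrightarrow$ step and has final state $\gamma$. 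I convert the leading $\to^\mu$ exactly as in the base case via Lemma \ref{tm-to-unif} (state unchanged); I convert the middle block $\hookrightarrow_{\kappa,\gamma_0} \cdot \to^\mu_{\gamma_0}$ by Lemma \ref{I} to a $\leadsto^*_{\gamma_0}$-segment; and I apply the induction hypothesis to the tail to obtain a $\leadsto^*_\gamma$-reduction from $\{A''_1,\dots\}$ to $\{B_1,\dots,B_m\}$. Concatenating the three $\leadsto^*$-segments produces $\Phi \vdash \{A_1,\dots,A_n\} \leadsto^*_\gamma \{B_1,\dots,B_m\}$, which is legitimate because LP-Unif threads and accumulates states in precisely the way that matches how LP-Struct carries $\gamma_0$ into the subsequent $\to^\mu$ segment.

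The main obstacle I anticipate is purely the bookkeeping of the state annotations: one must instantiate the ``for any $\gamma'$'' freedom of the LP-TM steps consistently, so that the states threaded through the $\leadsto$-replay genuinely compose to the claimed final $\gamma$, and one must check that the hypothesis of Lemma \ref{I} is met, namely that the $\to^\mu$ segment directly following a $\hookrightarrow_{\kappa,\gamma_0}$ carries the state $\gamma_0$. The latter holds because $\to$-steps do not modify the state, so the entire maximal term-matching segment after a substitutional step inherits that step's state. No deeper argument is needed beyond productivity and the two already-proved conversion lemmas.
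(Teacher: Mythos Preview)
Your proposal is correct and follows essentially the same approach as the paper. The paper's proof consists of the single line ``By Lemma \ref{I}'', which implicitly relies on exactly the decomposition you spell out: regroup the path as an initial $\to^\mu$-segment followed by repeated $(\hookrightarrow \cdot \to^\mu)$ blocks, convert the former via Lemma \ref{tm-to-unif} and each of the latter via Lemma \ref{I}, then concatenate; your explicit induction on the number of substitutional steps and your remarks on productivity and state-threading simply make this reasoning precise.
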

\begin{proof}
  By Lemma \ref{I}.
\end{proof}
\end{document}